\DeclareFontFamily{OT1}{pzc}{}
\DeclareFontShape{OT1}{pzc}{m}{it}{<-> s * [1.10] pzcmi7t}{}
\DeclareMathAlphabet{\mathpzc}{OT1}{pzc}{m}{it}
\newtheorem{theorem}{Theorem}
\newtheorem{lemma}{Lemma}
\newtheorem{definition}{Definition}
\newtheorem{proposition}{Proposition}
\title{Fault-Tolerant Quantum Error Correction for non-Abelian Anyons}
\date{\today}
\author{Guillaume Dauphinais and David Poulin}
\affil{Institut quantique \& D\'epartement de physique, Universit\'e de Sherbrooke}
\begin{document}

\maketitle

%\begin{abstract}
\abstract{While topological quantum computation is intrinsically fault-tolerant at zero temperature, it loses its topological protection at any finite temperature. We present a scheme to protect the information stored in a system supporting non-cyclic anyons against thermal and measurement errors. The correction procedure builds on the work of G\'acs \cite{Gacs_86} and Harrington \cite{Harrington_04} and operates as a local cellular automaton. In contrast to previously studied schemes, our scheme is valid for both abelian and non-abelian anyons and accounts for measurement errors. We analytically prove the existence of a fault-tolerant threshold for a certain class of non-Abelian anyon models, and numerically simulate the procedure for the specific example of Ising anyons. The result of our simulations are consistent with a threshold between $10^{-4}$ and $10^{-3}$.}
%\end{abstract}
%

\section{Introduction}
\label{sec_intro}

Non-abelian anyons are hypothetical particles with very exotic properties that defy intuition but that are nonetheless permitted by known laws of physics.  These particles have drawn much interest due to their suspected existence in two-dimensional condensed matter systems and for their potential applications in quantum computation \cite{Das_Sarma_15, Freedman_06, Bonderson_10, Kitaev_03, Nayak_08, Trebst_09, Stern_10}. In particular, a quantum computation can in principle be realized by braiding and fusing certain non-abelian anyons~\cite{Freedman_02, Freedman_02_2}. These operations are  intrinsically robust due to their topological nature. Because systems supporting anyonic excitations have a spectral gap $\Delta$, interactions between particles are short-ranged, so the details of braiding operations should not matter as long as the anyons are kept at a sufficiently large distance $\sim 1/\Delta$  from each other. Moreover, the accuracy of braiding operations can be made arbitrarily good by increasing the distance between computational anyons.

The spectral gap also offers some protection against thermal excitations. Provided the system is kept at a temperature $T$ lower than the spectral gap, the density of thermal excitations is suppressed by an exponential Boltzmann factor $e^{-\Delta/T}$. In contrast to the topological protection, however, this thermal protection is not scalable: thermal excitations do appear at constant density for any non-zero temperatures and so their presence is unavoidable as the size of the computation increases. Thermally activated anyons can corrupt the encoded data by braiding or fusing with the computational anyons \cite{Budich_12, Schmidt_12, Pedrocchi_15, Pedrocchi_15_2, Goldstein_11, Konschelle_13, Rainis_12,Wootton_14}. Schemes have been proposed to physically boost the thermal protection \cite{Hamma_09, Chesi_10, Pedrocchi_13}, but none of them offers a scalable solution \cite{Landon_13, Landon_15}. It thus appears necessary to supplement topological quantum computations with some form of quantum error correction.

Error correction in abelian anyonic models is intrinsically linked to topological quantum error correction with the toric code and has thus been studied extensively \cite{Dennis_02, Duclos_10, Duclos_10_2, Wang_10, Wootton_12, Bravyi_Haah_13, Anwar_13, Hutter_14, Bravyi_14, Herold_14, Wootton_15_2, Andrist_15}. There, it is possible to model the different thermal processes phenomenologically using a particle creation rate and a diffusion rate. Error correction monitors the presence of these thermal excitations by periodically measuring the topological charge at every lattice site. A {\em decoding algorithm} is used to statistically infer the homology of each particle's world-line from these snapshots, thus enabling the recovery of the topological information. It is now well established that these systems possess a threshold:  below a critical ``temperature'', the logical failure rate can be suppressed to arbitrarily low values by increasing the system size. 

These error-correction studies assume that the topological charge measurements are perfect. In a realistic setting, a  measurement can report the wrong charge---e.g., report a charge when the site if empty or fails to report a charge---and, moreover, it can introduce additional errors. The ability to protect a topologically ordered system using such noisy charge measurements is intrinsically linked to fault-tolerant topological quantum error correction, where fault-tolerance refers to the ability to combat errors with noisy instruments. Again, for abelian anyons, this problem has been studied extensively \cite{Dennis_02, Wang_10, Fowler_09, Duclos_14, Watson_15, Fowler_15, Harrington_04, Herold_15} and is known to possess a fault-tolerance threshold. 

The theory of error correction for non-abelian anyons is in contrast far less developed. Specific examples of error correcting schemes for Ising anyons~\cite{Brell_14}, the $\Phi - \Lambda$ model~\cite{Wootton_14} and Fibonacci anyons~\cite{Burton_15} have been investigated numerically and found to display threshold behaviours. Additionally, greedy hard-decision renormalization group decoders can error-correct any systems giving rise to anyonic excitations~\cite{Wootton_15,Hutter_15_2}.  However, none of these studies have considered the case where the charge measurements are faulty, a serious complication for all the previous methods. 

In the present work, we generalize a fault-tolerant scheme introduced by Harrington~\cite{Harrington_04} for the toric-code to the setting of non-cyclic modular anyons.  Although non-cyclic anyons are non-abelian in general, they have the property of flowing towards an abelian model under fusion, in a sense that will be made precise below. Thus, our work establishes that fault-tolerant quantum computation can be realized with non-abelian anyons. Moreover, this decoding algorithm builds on the work of G\'acs \cite{Gacs_86,Gray_01} and combines ideas of cellular automatons and renormalization in two spatial dimensions. The charge measurement information is handled by local rules on a two-dimensional lattice and does not require global processing. 

The basic idea of all topological decoding algorithm is to pair up the thermal anyons, bring the anyons of a pair together and hope that they fuse to the vacuum. The details of how the pairs are chosen and how the anyons of a pair are brought together is specific to each decoding algorithm. Non-abelian anyons present novel obstacles to this general decoding strategy. First, the fusion process is intrinsically irreversible for non-abelian anyons. In particular, when two anyons of `opposite' topological charges $a$ and $\bar a$ are brought together, they may fuse to a non-trivial charge. Observing the outcome of this fusion is an irreversible process (collapse of the wave-function), so the error-correction process itself could introduce physically irreversible changes to the system. Note that such a non-trivial fusion also offers a possible advantage: it provides a clear indication that the two charges were not created from the same thermal process, and this extra piece of information could be used by a decoder. Second, there are many more ways in which small errors involving non-abelian anyons can build up to a larger error. This is illustrated on Fig.\ref{fig:BuildUp}. In contrast to abelian anyons, it is not necessary for particles to meet head-on to `sew' small errors into a larger one; a simple braid will suffice. Because there are many more ways to braid than to fuse, this extra entropic contribution could favour a disordered phase and prevent reliable topological information processing. As we will discuss in Sec. \ref*{sec:defs}, the non-abelian nature of the excitations also presents significant additional obstacles to the analysis of the error correction procedure itself. Despite these complications, we will show that the decoding algorithm possesses a threshold. 

\begin{figure}
	\begin{center}
	\includegraphics[scale=.5]{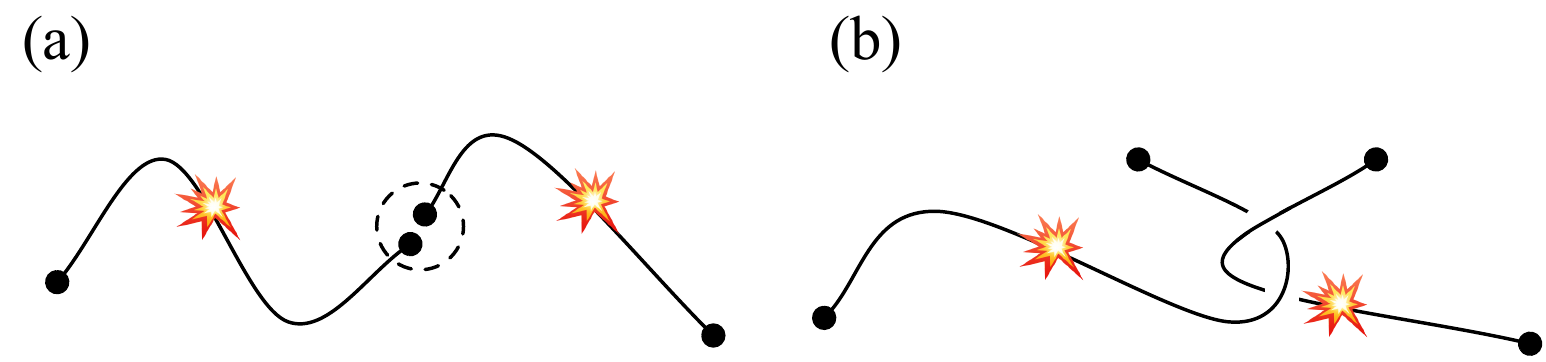}
	\end{center}
	\caption{Example of two small errors building up to a larger one. (a) A pair of anyons originating from distinct fission processes collide and fuse into the vacuum. (b)  A pair of non-abelian anyons originating from distinct fission processes braids. }
	\label{fig:BuildUp}
\end{figure}

In Sec. \ref{sec_anyon_theory}, we review basic concepts of the algebraic theory of anyons. The noise model we consider is presented in Sec. \ref{sec_noise}, together with a classification of the error events in terms of renormalization levels. Renormalized error rates are defined and shown to decrease doubly exponentially with the renormalization level. These sections essentially follow Harrington's work. The correction algorithm is presented in Sec. \ref{sec_corr_algo}. Various definitions and concepts are introduced in Sec. \ref{sec:defs} which are used to prove in Sec. \ref{sec_threshold} the existence of a threshold error rate below which the memory lifetime can be increased to arbitrarily long times by increasing the system size. Sec. \ref{sec_numerical} presents numerical simulations for a system of Ising anyons on a torus, which suggest a threshold between $10^{-3}$ - $10^{-4}$. Concluding remarks are made in Sec. \ref{sec_discussion}.

\section{Basic Data of the Algebraic Theory of Anyons}
\label{sec_anyon_theory}

In this section we present the key concepts of the algebraic theory of anyons, more details can be found in \cite{Kitaev_06} and \cite{Bonderson_thesis}. We assume that the system under study is defined on a two-dimensional surface, has short-range interaction and is gapped. Excitations are well localized and it is possible to change their position by applying a suitable local operator along an arbitrary path. 

Excitations are classified into superselection sectors. A sector consists of states (excitations) that can be transformed into each other by the application of local  operators --- operators acting on a homologically trivial region. We also assume that it is possible to measure the topological sector of the excitations (charge measurement).

Each excitation (which we also refer to as particle or anyon) is described by a label $l$ denoting its superselection sectors. The various possible labels are also referred to as topological charges.  The trivial charge (the vacuum) is denoted herein by $1$. The action of bringing two excitations together and to measure the resulting charge is called a fusion process, while a splitting process can be seen as its conjugate process where a single charge is `split' into two charges. With every elementary fusion event is associated the Hilbert space $V^{ab}_{c}$, the space of states of particles of charge $a$ and $b$ restricted to have a total charge $c$. Correspondingly, the Hilbert space $V^{c}_{ab}$ is the space of states of charge $c$ restricted to come from the fusion of two charges of type $a$ and $b$. Orthonormal basis vectors for the splitting and fusion spaces are labelled by $\vert a, b; c, \psi \rangle \in V^{ab}_{c}$ and $\langle a, b, ; c, \psi \vert \in V_{ab}^{c}$, where $\psi$ denotes the fusion channel. The splitting and fusion operations can be graphically represented by using trivalent vertices with the corresponding charge and fusion channel labels:
\begin{equation}
	\parbox{0.6 \textwidth}{
		\includegraphics[scale=.9]{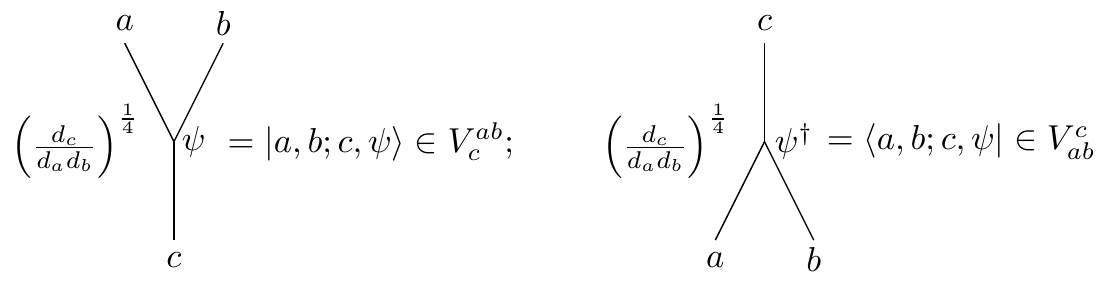}
		}
		\label{eqn_fusion_space}
\end{equation}
The normalization factors $d_{x}$'s are introduced so that the diagrams are in the isotopy-invariant convention --- bending lines and rotating part of the diagrams change the amplitudes only by unitary transformations.

Fusion multiplicities are defined as $N_{ab}^{c} = \text{dim } V^{c}_{ab} = \text{dim } V^{ab}_{c}$. Furthermore, we will assume that the duality axiom holds: for every charge $a$, there exist a unique charge $\bar{a}$ (also called $a$'s antiparticle) such that $N_{a \bar{a}}^{1} = 1$. It is possible that $\bar{a} = a$, such anyons are called self-dual.

Consider the case where charges $a, b$ and $c$ are restricted to have total charge $d$. There is freedom in terms of the order in which the particles are fused together, corresponding to different orthonormal bases in which a state is represented. Such a change of basis is described by an $F$-matrix defined as
\begin{equation}
	\parbox{0.5\textwidth}{
		\includegraphics[scale=.8]{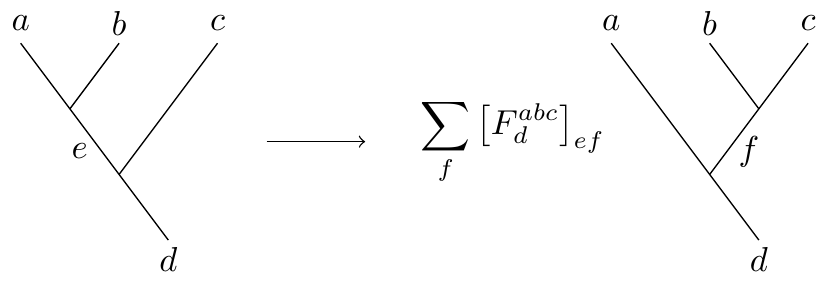}
	}
	\label{eqn_f_move}
\end{equation}

Another useful quantity is the quantum dimension of a particle, $d_{a}$, which is defined by $d_{a} = \vert \left[ F^{a \bar{a} a}_{a}\right]_{1 1} \vert^{-1}$. Physically, $d_{a}^{2}$ represents the inverse of the probability that two anyons $a$ and $\bar{a}$ created from different pairs from the vacuum fuse together to the vacuum.

The exchange of particles in the clockwise direction is defined in terms of braiding operations which can be diagrammatically represented by
\begin{equation}
	\parbox{0.4\textwidth}{
		\includegraphics[scale=.8]{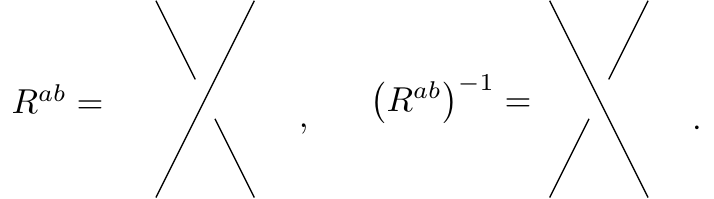}
	}
	\label{eqn_braiding}
\end{equation}
The action of the braiding operators on the fusion spaces can be completely determined in terms of the $R$-symbols representing the unitary operator whose action is to exchange two anyons (either in a clockwise or counterclockwise fashion) stemming from a specific fusion channel:
\begin{equation}
	\parbox{0.3\textwidth}{
		\includegraphics[scale=.8]{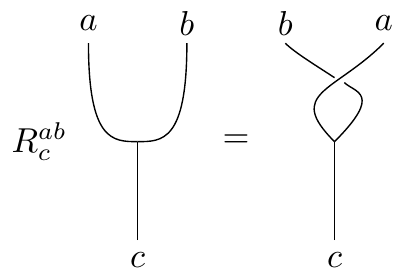}
	}
	\label{eqn_R_move}
\end{equation}

An arbitrary braiding between two charges that were not created together can be expressed in terms of $F$-moves combined with braiding operations $R_{c}^{ab}$ and $\left( R_{c}^{ab} \right)^{-1}$. In general, the braiding of two anyons applies not just a phase, but can change the fusion states of the anyons. If, for each charge $x \neq 1$, there is some label $a$ such that $R^{ax}R^{xa}$ is different than the identity operator, then braiding is said to be {\it non-degenerate}. Systems giving rise to anyonic excitations can be defined on closed surfaces of genus higher than $0$ if and only if its braiding is non-degenerate~\cite{Kitaev_06}. An anyonic theory which is non-degenerate is also said to be {\it modular}.

Systems of anyons obey the following property, which implies the Yang-Baxter relations \cite{MacLane} :
\begin{equation}
	\parbox{0.2\textwidth}{
		\includegraphics[scale=.3]{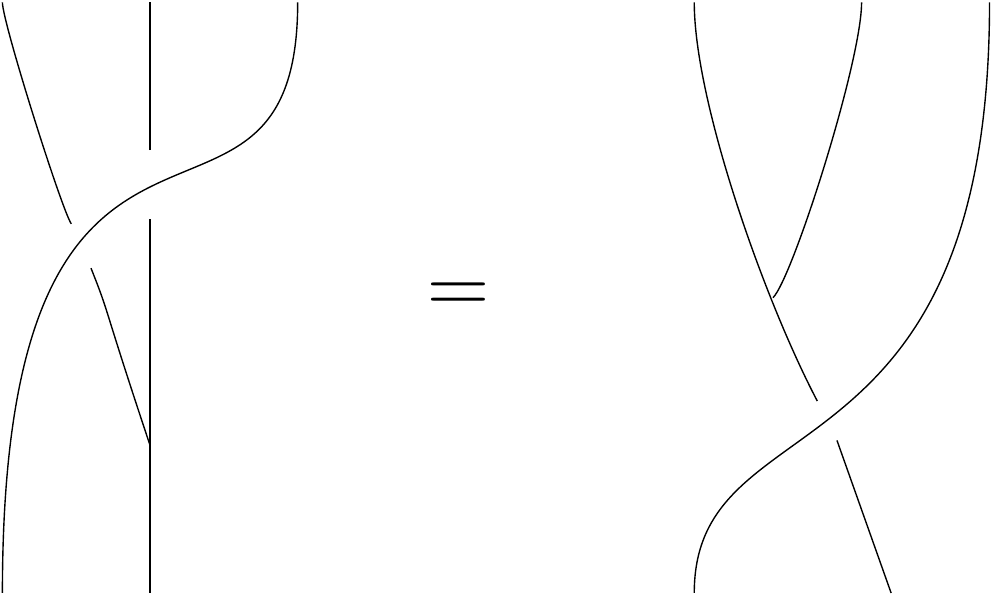}
	}
	\label{eqn_yang_baxter}
\end{equation}
That is, one can freely pass lines above or below vertices, without introducing any phase factor.

The inner product of two states, say $\langle \phi \vert \psi \rangle$ is diagrammatically represented by connecting the corresponding leaves of the fusion trees associated with the two states $\langle \phi \vert$ and $\vert \psi \rangle$, and is diagrammatically given by:
\begin{equation}
 	\parbox{0.3\textwidth}{
		\includegraphics[scale=.8]{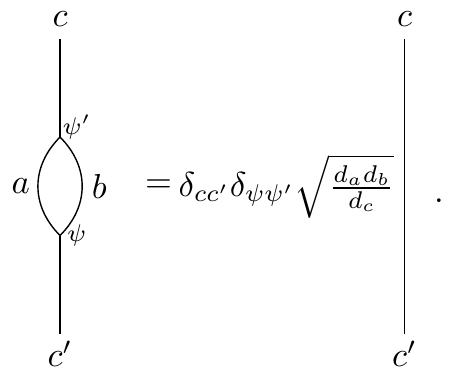}
  }
\label{eqn_loops}
\end{equation}

The topological S-matrix is defined by
\begin{equation}
	\parbox{0.3\textwidth}{
		\includegraphics[scale=1.0]{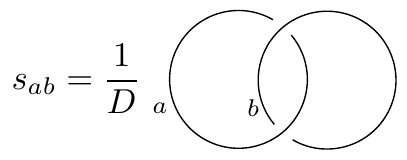}
	}
	\label{eqn_S_matrix}
\end{equation}
where $D = \sum_{a} \sqrt{d_{a}^{2}}$ is such that for modular theories the S-matrix is unitary.

For modular theories, a collective charge projector can be defined~\cite{Levaillant_15} by
\begin{equation}
	\parbox{0.3\textwidth}{
		\includegraphics[scale=1.0]{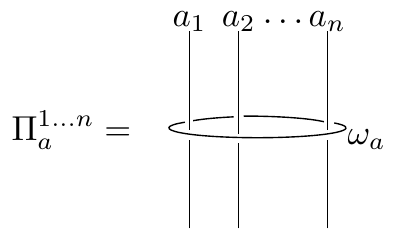}
	}
	\label{eqn_projector}
\end{equation}
where $\omega_{a}$ is given by
\begin{equation}
	\parbox{0.3\textwidth}{
		\includegraphics[scale=1.0]{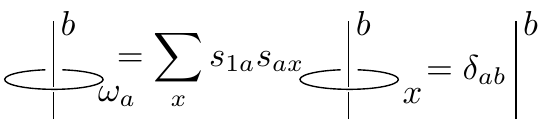}
	}
	\label{w_loop}
\end{equation}
The effect of $\prod_{a}^{1 \dots n}$ is to project the anyons it encompasses into the collective topological charge $a$, and the probability to measure charge $a$ is given by 
\begin{equation}
\langle \psi \vert \Pi_{a}^{1 \dots n} \vert \psi \rangle,
\label{eqn_meas_prob}
\end{equation}
 where $\vert \psi \rangle$ is the state of the system~\cite{Bonderson_08_a,Bonderson_08_b}.
%\subsection{Encoded information in the model}

The ground state degeneracy of a system giving rise to modular anyons living on an unpunctured 2-dimensional surface $\Sigma$ of genus $g$ is given by $ \sum_{a} s_{1a}^{2-2g}$ and consists of the vacuum \cite{Verlinde_88}. In the case of a torus, this last expression reduces to $\vert \mathcal{A} \vert$, the number of different superselection sectors in the anyonic model $\mathcal{A}$. The ground space can be seen as a code space \cite{Kitaev_03}, and logical operators consist of creating particle and antiparticle from the vacuum, performing an homologically non-trivial loop and fusing  the particle and antiparticle back to the vacuum. It is also in general possible to leave the ground space by doing such operations, although in this case the particle/antiparticle pairs fuse to give a non-trivial charge.

\subsection{Definition of Non-Cyclic Anyon Models}
\label{sec_cyclic_anyons}

In view of the algorithm presented below, it is relevant to define a family of anyons called {\it non-cyclic} anyonic models. An anyonic model is non-cyclic if and only if
\begin{equation}
\prod_{i=1}^{n} N_{x_{i} \bar{x}_{i}}^{x_{i+1}}= 0,
\label{eqn_condition1}
\end{equation}
for any value of $n$ and for any sequence $\{ x_{1}, x_{2}, \dots, x_{n}, x_{n+1} = x_{1} \}$ for which $x_{1} \neq 1$. An anyonic model which is not non-cyclic is called cyclic. From this definition it is clear that abelian models are all non-cyclic.

A perhaps more enlightening equivalent definition can be formulated using a graphical representation of fusion rules. For any anyonic model $\mathcal{A}$, construct the following directed graph $G_{\mathcal{A}} = (V,E)$.  Associate a vertex $v\in V$ with each couple of particle/anti-particle denoted by $v = \left( a, \bar{a} \right)$. For all charges $a$ different than the vacuum, if $N_{a \bar{a}}^{b} > 0$ then add a directed edge from node $\left( a, \bar{a} \right)$ to node $\left( b, \bar{b} \right)$. Then, $\mathcal{A}$ is said to be non-cyclic if $G_{\mathcal{A}}$ does not contain cycles. Otherwise, $\mathcal{A}$ is cyclic. If $\mathcal{A}$ is non-cyclic, denote by $diam \left( G_{\mathcal{A}} \right)$ the diameter of $G_{\mathcal{A}}$. Examples of such graphs are shown on Figure~\ref{fig_graph_anyon}.

\begin{figure}
	\begin{center}
	\includegraphics[scale=.5]{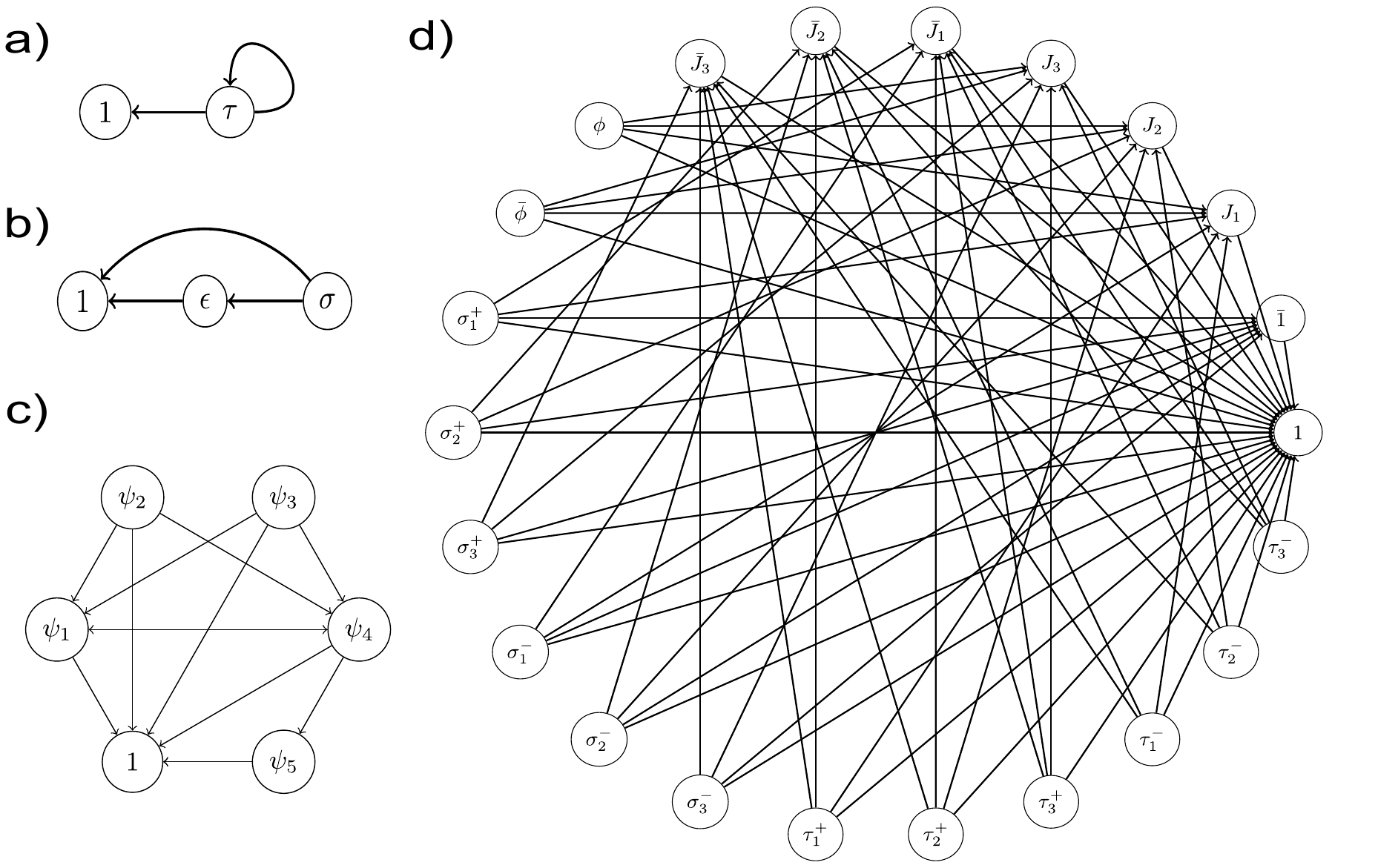}
	\end{center}
	\caption{Examples of the graphs associated with Fibonacci anyons (a), Ising anyons (b), the SO$\left( 5 \right)_{2}$ anyons (c) and anyons stemming from the quantum double of the quaternion group $D \left( \bar{H} \right)$ (d). Among these examples, both Ising and $D \left( \bar{H} \right) $ are non-cyclic.}
	\label{fig_graph_anyon}
\end{figure}

Although for anyonic models containing less than $5$ charges only abelian models and the Ising model (and closely related ones) are non-cyclic \cite{Rowell_09}, there are non-trivial modular non-abelien models with a larger number of charges which are non-cyclic. Examples include $D \left( \bar{D}_{2} \right)$, the quantum double constructed from the quaternion group, which contains $22$ topological sectors \cite{Bais_92, Propitius_96}. This model is thought to describe certain phases in a simple spin-$1$ SU(2) quantum magnet \cite{Xu_12}. There also exist families of non-abelian anyons which are non-cyclic by construction \cite{Tambara_98}.

\section{Errors}
\label{sec_noise}

In this section we describe the noise model and prove that errors can be organized in a hierarchical manner suitable for a renormalization-type analysis. 

\subsection{The Noise Model}

We consider here a surface $\Sigma$ of genus $g = 1$ discretized in a square lattice $\Lambda$ of size $Q^{n} \times Q^{n}$ with the identification of opposite boundaries and where $Q$ and $n$ are both positive integers. We assume that the excitations of the system are described by an anyonic model denoted by $\mathcal{A}$. A Wilson loop is present at each lattice site, allowing for local projective charge measurement. Time is also discretized. At every time step, charge noise is applied. In order to do so, all the edges of the graph are sequentially selected in a random order, the order changing at every time step. For each edge $( \mathbf{r,r'} )$, a single error process is selected among a set $ \mathcal{P} = \{ e_{c,c'} \}$ with probability $P(e_{c,c'})$, where $c$ and $c'$ label the topological charges (including the vacuum) of the anyons to be added onto sites $\mathbf r$ and $\mathbf r'$. To properly model local noise processes, we impose that $c' = \bar{c}$ and that their global charge is the vacuum.  We define the particle creation rate $\displaystyle p = \sum_{c,c'  \neq 1} P \left( e_{c,c'} \right)$. 

Using this noise model, it is possible that many anyons are located on the same site, since a site is connected to four edges. The fusion state of all the anyons on a site is in general described by a superposition of eigenstates of their total charge, since various fusion channels are available to non-abelian anyons. We assume that at time $t+\frac 12$, the excitations located on a given site instantaneously fuse into a superselection sector with appropriate fusion rules and probabilities. This can be viewed as an extra source of decoherence where the environment monitors the topological charge at each site. Moreover, as discussed in the introduction, the fusion process introduces some intrinsic irreversibility in the noise model, which is a significant departure from abelian models.

After having created thermal excitations and fused the excitations located at a single site into a definite superselection sector, we perform a noisy topological charge measurement at every site.  For every site, with probability $\left( 1-q \right)$ the right charge is reported and with probability $q$, a measurement error happens. In the case where there is a measurement error, one of the wrong charges $\{ c_{i} \}$ with $0 < i < \vert \mathcal{A} \vert$ is reported with probability of $q_{i}$. Here, $\vert \mathcal{A} \vert$ denotes the number of topological sectors of $\mathcal{A}$. The values of $q_{i}$ are subjected to the constraint that $\sum_{i} q_{i} = 1$. 

Another indirect source of errors stems from the error-correction procedure itself. Indeed, based on the reported charge measurements, decisions will be taken to move some excitations from one site of the lattice to another, following some transition rules described in Appendix \ref{AppendixA}. If a measurement reports the wrong charge, the wrong transition rule might be applied. In the case where a transition rule attempts to move an anyon of type $a$ from a site which does not contain such a charge, the effect of applying the transition rule is to create (from the vacuum) a pair of anyons of type $a$ and $\bar{a}$, to put the $\bar{a}$ charge in the site where the wrong measurement took place, while the $a$ anyon will be displaced to the site where the transition rule was attempting to displace the anyon. In this model, faulty measurements are effectively converted into charge errors. 

In what follows, let $\mathcal E$ be the set of all errors in the system, both charge errors and measurement errors. The $t^{\rm th}$ time step is thus decomposed into four substeps. Excitations are created between time $t$ and $t+\frac 12$. At time $t+\frac 12$, the charge at every site is collapsed onto a superselection sector. Immediately after this collapse, the topological charge at every site is measured in a faulty manner; the outcome of the measurement is called the error syndrome. The transition rules are executed between time $t+\frac 12$ and $t+1$. Since the transition rules can also result in multiple particles on a single site, the time step ends with a collapsed onto a superselection sector on every site at time $t+1$.

\subsection{Noise Classification}
\label{subsec_noise_class}

Given this noise model, error events can be classified in a hierarchic way as was shown by Harrington \cite{Harrington_04}, generalizing the one-dimensional analysis of G\'acs \cite{Gacs_86,Gray_01}. The definitions and results of subsections \ref{subsec_noise_class} and \ref{bound_error_rate} are taken from \cite{Harrington_04}, and are included here for self-containment.
In what follows, $Q$ and $U$ are chosen such that $Q \geq 4 \left( a+2 \right)$ and $U \geq 4\left( b+2 \right)$, where $a$ and $b$ are positive integers greater than $1$.

%In an abuse of notation, we identify an error $e$ acting on the edge of the lattice described by the coordinates $\left( x, y, t \right)$ with the edge $E \left( x, y, t \right)$ itself, where $x$ and $y$ denote the spatial position of the midpoint of the edge connecting the two sites on which $e$ acts, and $t$ is the midpoint of the time coordinate of the edge.

An error $e$ is defined on an edge of the lattice, which is characterized by a point in space-time denoted by $p \left( e \right) = \left( \mathbf r, t \right)$, where $\mathbf r = (x,y)$ denotes the spatial position of the midpoint of the edge connecting the two sites on which $e$ acts, and $t$ is the midpoint of the time coordinate of the edge.

Two sets of points $A$ and $B$ are said to be $\left( l, m, n \right)$-linked if there exists a space-time box of coordinates $\left[ x, x+l \right) \times \left[ y, y+m\right) \times \left[ t, t+n \right)$ containing at least one element in $A$ and one element in $ B$. If $A$ and $B$ are not $\left( l, m, n \right)$-linked, then they are said to be $\left( l, m, n \right)$-separated.

\subsubsection{Level-$0$ Noise}

The set of errors $S \subseteq \mathcal E$ ($S \neq \emptyset$) is a level-$0$ error candidate if it does not contain a mixture of charge errors and measurement errors, and if $S$ fits in a space-time box of size $\left[ x, x+1 \right] \times \left[ y, y+1 \right] \times \left[t, t\right]$. Additionally, $S$ is an actual level-$0$ error if it is $\left( a, a, b \right)$-separated from $\mathcal E \backslash S$. The union of all actual level-$0$ errors is called level-$0$ noise and is denoted by $E_{0}$.

\subsubsection{Level-$n$ Noise}
\label{sec_lvl_n_noise}
Level-$n$ candidate and actual errors are defined inductively. Suppose that level-$k$ candidate errors, actual errors, and level-$k$ noise $E_k$ are well-defined for any $k < n$. A non-empty set $S \subseteq \mathcal{E} \backslash E_{n-1}$ is a candidate level-$n$ error if 
\begin{itemize}[leftmargin=1.cm]
\item[(i)] $S$ is contained within a box of size $Q^{n} \times Q^{n} \times U^{n}$, and
\item[(ii)] {$S$ contains at least 2 disjoint candidate level-$\left( n-1 \right)$ errors that are \newline
$\left(aQ^{n-1},aQ^{n-1},bU^{n-1}\right)$-linked.}
\end{itemize}
$S$ is an actual level-$n$ error if additionally:
\begin{itemize}[leftmargin=1.cm]
\item[(iii)]  {$S$ does not contain two candidate level-$n$ errors that are \newline
$\left(4\left(a+2\right)Q^{n-1}, 4\left(a+2\right)Q^{n-1},4\left(b+2\right)U^{n-1}\right)$-separated, and}
\item[(iv)]  $S$ and $\mathcal{E} \backslash \left( S \cup E_{n-1} \right)$ are $\left( aQ^{n},aQ^{n},bU^{n}\right)$-separated.
\end{itemize}
Level-$n$ noise $E_{n}$ is defined as the union of the level-$n$ actual errors and of $E_{n-1}$.
Giving these definitions, one can show that a level-$n$ actual error always fits in a box of size $min \{ Q, 7\left( a + 2 \right) \}Q^{n-1} \times min \{ Q, 7\left( a + 2 \right) \}Q^{n-1} \times min \{ U, 7\left( b + 2 \right) \}U^{n-1}$ \cite{Harrington_04}.

\subsection{The renormalized error rate $\epsilon_{n}$}
\label{bound_error_rate}

The level-$n$ error rate $\epsilon_{n}$ is defined to be the probability that a box of size $Q^{n} \times Q^{n} \times U^{n}$ has non-empty intersection with at least one candidate level-$n$ error.

\begin{lemma}
The level-$n$ error rate $\epsilon_{n}$ is upper bounded by $\left( 4Q^{4} U^{2} \left( p + q \right)  \right)^{2^{n}}$.
\label{lem_error_rate}
\end{lemma}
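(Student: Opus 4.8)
The plan is to prove the bound by induction on $n$, reducing everything to a single recursive inequality
\[
\epsilon_n \le 4Q^4U^2\,\epsilon_{n-1}^2
\]
together with the base estimate $\epsilon_0 \le p+q$. Note first that this lemma is essentially model-independent: it concerns only the geometry of the error set $\mathcal E$ and the combinatorial definitions of the preceding subsection, so the anyonic structure plays no role here, and I may treat the elementary charge and measurement errors as independent events attached to space-time locations. The doubly-exponential exponent $2^n$ comes directly from condition (ii): a candidate level-$n$ error must contain \emph{two} disjoint candidate level-$(n-1)$ errors, so a level-$n$ event is controlled by the \emph{square} of a level-$(n-1)$ probability, and iterating the squaring $n$ times produces the exponent $2^n$.

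Granting the recursion, the induction closes cleanly by rescaling. Set $\delta_n = 4Q^4U^2\,\epsilon_n$; multiplying the recursion by $4Q^4U^2$ gives $\delta_n \le \delta_{n-1}^2$, hence $\delta_n \le \delta_0^{2^n}$. Since $\epsilon_0 \le p+q$ gives $\delta_0 \le 4Q^4U^2(p+q)$, we obtain $\epsilon_n = \delta_n/(4Q^4U^2) \le \delta_0^{2^n} \le \left(4Q^4U^2(p+q)\right)^{2^n}$, using $4Q^4U^2 \ge 1$. The base case is direct: a minimal box of size $1\times1\times1$ can contain at most a single elementary noise event, which is either a charge error (probability $\le p$) or a measurement error (probability $\le q$), whence $\epsilon_0 \le p+q$.

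The heart of the argument is the recursion. Fix a box $B$ of size $Q^n\times Q^n\times U^n$. If $B$ meets a candidate level-$n$ error $S$, then $S$ fits in a box of the same size and is therefore contained in a bounded ``doubled'' neighborhood $B'$ of $B$; by condition (ii), $S$ contains two disjoint, $(aQ^{n-1},aQ^{n-1},bU^{n-1})$-linked candidate level-$(n-1)$ errors $S_1,S_2\subseteq B'$. I would cover the relevant neighborhood of $B$ by translates of a $Q^{n-1}\times Q^{n-1}\times U^{n-1}$ box, bounding the number of placements available to each of $S_1$ and $S_2$ by $2Q^2U$, and then union-bound over the ordered pairs of placements. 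Each single placement contributes a factor $\epsilon_{n-1}$ by the definition of the level-$(n-1)$ error rate, and since $S_1$ and $S_2$ are \emph{disjoint} error sets they are supported on disjoint collections of independent elementary noise events, so their joint occurrence factorizes to give $\epsilon_{n-1}^2$. Multiplying the count $(2Q^2U)^2$ by $\epsilon_{n-1}^2$ yields exactly $\epsilon_n \le 4Q^4U^2\,\epsilon_{n-1}^2$.

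I expect the main obstacle to be the factorization into $\epsilon_{n-1}^2$ rather than the book-keeping of the covering constant. The two sub-errors are only required to be linked, so they may sit arbitrarily close together, and the predicate ``$S_i$ is a candidate level-$(n-1)$ error'' is not purely local: through the clause $S_i\subseteq\mathcal E\setminus E_{n-2}$ it couples $S_i$ to the surrounding configuration. I would handle this by phrasing the event in terms of the underlying elementary errors, which are genuinely independent across distinct space-time locations, and by using the disjointness of $S_1$ and $S_2$ guaranteed by condition (ii) to write the probability that both are realized as a product; the hierarchical ``candidate'' conditions only further constrain the admissible configurations and so cannot increase the probability. Pinning the covering count down to the clean value $2Q^2U$ per error, and verifying that the independence argument survives the $E_{n-2}$ coupling, are the two points demanding the most care.
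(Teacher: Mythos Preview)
Your approach is essentially the paper's: derive a quadratic recursion $\epsilon_n \le C\,\epsilon_{n-1}^2$ from condition~(ii) via a union bound over placements of the two disjoint level-$(n-1)$ candidates, then iterate. The paper distributes the constants slightly differently, taking $\epsilon_0 \le 4(p+q)$ and $\epsilon_n \le (Q^2U\,\epsilon_{n-1})^2$, whereas you take $\epsilon_0 \le p+q$ and absorb the extra factor into the recursion; both yield the same final bound.

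One small correction: your justification of the base case (``a minimal box of size $1\times1\times1$ can contain at most a single elementary noise event'') is not quite right, since a unit spatial box is incident to several edges, each of which independently carries a charge error with probability $\le p$, in addition to the measurement error on the site --- this is precisely why the paper uses $\epsilon_0 \le 4(p+q)$. With that fix your argument goes through. It is also worth noting that the paper is actually \emph{less} careful than you are about the factorization into $\epsilon_{n-1}^2$: it simply asserts the squaring without discussing the independence subtlety or the $E_{n-2}$ coupling you flag, so your caution there is well placed but not, in the end, an obstacle the paper chose to dwell on.
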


\begin{proof}
Observe that $\epsilon_{n}$ is bounded above by the probability that at least one candidate level-$\left( n-1 \right)$ has non-empty intersection with a box of size  $Q^{n} \times Q^{n} \times U^{n}$, by definition of level-$n$ candidate errors. By union bound, this probability is bounded by  $\left( Q^{2}U \right) \epsilon_{n-1}$. Since each level-$n$ candidate error is composed of at least 2 level-$\left(n-1\right)$ candidate errors, we conclude that $\epsilon_{n} \leq \left( Q^{2}U\epsilon_{n-1} \right) ^{2}$. Now, we can use this recursive equation and the fact that the level-$0$ error rate is bounded above by $\epsilon_{0} \leq 4 \left( p+q \right)$. Given these considerations, we find that
\begin{eqnarray}
	\epsilon_{n} \leq \left( Q^{4}U^{2}4\left(p+q \right) \right)^{2^{n}}.
	\label{error_bound}
\end{eqnarray} 
\end{proof}

\begin{lemma}
Let $p$ and $q$ be such that $p + q < \frac{Q^{-4} U^{-2}}{4}$. Then, any error $e$ is part of an actual level-$n$ error, for some finite $n \in \mathbb{N}$, with probability 1.
\label{lem_actual_error}
\end{lemma}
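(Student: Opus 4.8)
The plan is to convert the doubly-exponential decay of $\epsilon_n$ from Lemma~\ref{lem_error_rate} into a statement about a single fixed error $e$ by a continuity-of-measure argument. First I would record the consequence of the hypothesis: setting $\alpha := 4Q^{4}U^{2}(p+q)$, the assumption $p+q<\tfrac14 Q^{-4}U^{-2}$ gives $\alpha<1$, so Lemma~\ref{lem_error_rate} yields $\epsilon_n\le\alpha^{2^{n}}$, which tends to $0$ doubly exponentially (and is summable). Next I would reformulate the goal. Since $E_{n-1}\subseteq E_{n}$, the events $\{e\in E_{n}\}$ are increasing in $n$, so by continuity of measure it suffices to prove $\lim_{n\to\infty}P(e\notin E_{n})=0$; this gives $P\big(\bigcap_{n}\{e\notin E_{n}\}\big)=0$, which is exactly the complement of the claimed event.

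The combinatorial heart is a structural claim proved by induction on $n$: if $e\notin E_{n-1}$ then $e$ belongs to some candidate level-$n$ error. The base case is immediate, since $\{e\}$ is always a candidate level-$0$ error. For the inductive step I would exploit the dichotomy built into the definition of an \emph{actual} error. Assume $e\notin E_{n}$; then in particular $e\notin E_{n-1}$, so by the inductive hypothesis $e$ lies in a candidate level-$n$ error $S$, and because $e\notin E_{n}$ no such $S$ can be actual, hence every candidate level-$n$ error containing $e$ violates (iii) or (iv). If (iv) fails, there is an error $f\notin S\cup E_{n-1}$ lying within an $(aQ^{n},aQ^{n},bU^{n})$-box of $S$; since $f\notin E_{n-1}$, the inductive hypothesis puts $f$ in a candidate level-$n$ error $S'$, and then $S\cup S'$ consists of two disjoint, $(aQ^{n},aQ^{n},bU^{n})$-linked candidate level-$n$ errors. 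The constants $Q\ge 4(a+2)$, $U\ge4(b+2)$ together with the size bound $\min\{Q,7(a+2)\}Q^{n-1}\times\cdots$ on level-$n$ errors guarantee that $S\cup S'$ fits in a $Q^{n+1}\times Q^{n+1}\times U^{n+1}$ box, so it is a candidate level-$(n+1)$ error containing $e$. The failure of (iii) is handled similarly, by following the separated sub-candidate that contains $e$ until an instance of (iv)-failure appears. Thus $e\notin E_{n}$ forces $e$ into a candidate level-$(n+1)$ error, completing the induction.

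With the structural claim in hand the probabilistic estimate is short. If $e\notin E_{n}$ then (applying the claim at level $n$, via $e\notin E_{n}\Rightarrow e\notin E_{n-1}$) the error $e$ belongs to some candidate level-$n$ error, so the fixed box of size $Q^{n}\times Q^{n}\times U^{n}$ containing $p(e)$ has non-empty intersection with a candidate level-$n$ error. By spatial and temporal translation invariance of the noise model this event has probability at most $\epsilon_{n}$, whence $P(e\notin E_{n})\le\epsilon_{n}\le\alpha^{2^{n}}\to0$. Combined with the monotonicity reduction above, this gives $P(\text{$e$ lies in no actual level-$n$ error})=\lim_{n}P(e\notin E_{n})=0$, which proves the lemma.

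The step I expect to be the main obstacle is the structural induction, and specifically the bookkeeping of the geometric constants: one must verify that merging two linked level-$n$ candidates really produces a set small enough to be a level-$(n+1)$ candidate (this is precisely where $Q\ge4(a+2)$, $U\ge4(b+2)$, and the $7(a+2)$ size bound enter), and one must argue that repeated (iii)-type splitting cannot continue indefinitely without eventually triggering a (iv)-type merge. A secondary subtlety is that the spatial lattice is finite, so for large $n$ the spatial boxes saturate and the growth of the hierarchy becomes purely temporal; one should check that the estimate $P(e\notin E_{n})\le\epsilon_{n}$ persists in this regime, so that letting $n\to\infty$ in the time direction still drives the probability to zero.
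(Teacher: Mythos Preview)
Your overall strategy matches the paper's: show inductively that an error not contained in any actual error must lie in a candidate error of every level, then invoke the decay of $\epsilon_n$. Your use of monotone convergence ($P(e\notin E_n)\le\epsilon_n\to 0$) is actually a little more direct than the paper's appeal to Borel--Cantelli, since it only needs $\epsilon_n\to 0$ rather than summability.

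There is one genuine gap in your inductive step. When condition~(iii) fails for a level-$n$ candidate $S\ni e$, you propose to ``follow the separated sub-candidate that contains $e$.'' But the failure of (iii) only asserts the existence of two well-separated level-$n$ candidates $S_1,S_2\subset S$; nothing forces either $S_1$ or $S_2$ to contain $e$ (they need not cover $S$), so the descent is ill-defined. The paper sidesteps this issue entirely: rather than starting from an arbitrary candidate, it builds at each level a \emph{specific} candidate $S_n\ni e$ by adjoining one linked piece at a time (so $S_1=\{e,s\}$, then $S_2=S_1\cup S_1'$, etc.). By construction each such $S_n$ fits in a box of side $(a+2)Q^{n-1}$, strictly smaller than the $4(a+2)Q^{n-1}$ separation scale in (iii), so (iii) holds automatically and only (iv) can fail. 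Replacing your generic $S$ with this explicit construction closes the gap and, incidentally, makes the geometric bookkeeping you flag as the main obstacle essentially a one-line check.
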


\begin{proof}
Let $\mathpzc{E}_{n}$ be the event that there exist a box of size $Q^{n} \times Q^{n} \times U^{n}$ having non-empty intersection with at least one candidate level-$n$ error. Lemma~\ref{lem_error_rate} shown that $P \left( \mathpzc{E}_{n} \right) = \epsilon_{n} \leq \alpha^{2^{n}}$, for some positive constant $\alpha < 1$, as long as  $\left(p+q\right) < \frac{Q^{-4}U^{-2}}{4}$. $\sum^{\infty}_{i = 0} P \left( \mathpzc{E}_{i} \right)$ thus converges.

Suppose that there exists an error $e$ which is not part of any actual error. By definition, $e$ is a level-$0$ candidate error. Since it is not an actual level-$0$ error, $e$ is not $\left( a, a, b \right)$-seperated from $\mathcal E \backslash \{e\}$ and there exists $\{ s \} \subset \mathcal E$ which is $\left( a, a, b \right)$-linked with $\{ e \}$. Consider next $S = \{ e \} \cup \{s \}$. By definition, it is included in a box of size $Q \times Q \times U$, and contains 2 disjoint level-$0$ error candidates who are $\left( a, a, b \right)$-linked. $S$ is thus a level-$1$ error candidate. Since ${e}$ is not part of a level-$1$ actual error, condition (iv) must be violated, since $S$ fulfils condition (iii) by construction. Thus, there exist $S'$ disjoint from $S$ such that it is $\left( aQ, aQ, bU \right)$-linked with $S$, and $S'$ can be chosen to be small enough to be a level-$1$ candidate error. By construction, $S \cup S'$ fits in a box of size $\left( a+2 \right) Q \times \left( a+2 \right) Q \times \left( b + 2 \right) U$, which is smaller than $ Q^{2} \times Q^{2} \times U^{2}$, and contains 2 disjoint level-$1$ error candidates who are $\left( aQ, aQ, bU \right)$-linked. $S \cup S'$ is then a level-$2$ candidate error. At this point, the same reasoning can simply be repeated. We then see that $\{ e \}$ is part of a candidate error of level-$n$, for any value of $n \in \mathbb{N}$. The Borel-Cantelli lemma tells us that such an error has a probability of 0 of happening.\footnote{The Borel-Cantelli lemma~\cite{Borel_1909,Cantelli_1917} shows that if a sequence of events $\mathpzc{E}_{n}$ is such that $\sum_{n=0}^{\infty} P(\mathpzc{E}_{n})$ converges, then $P \left( \cap_{n=0}^{\infty} \cup_{k=n}^{\infty} \mathpzc{E}_{k} \right) = 0$.}
\end{proof}

\section{The error-correction procedure}
\label{sec_corr_algo}

Let the variable $c \left( \bm r, t \right)$ describes the true topological charge present at site $\bm r$ where $t$ can take both integer and half-integer values, the latter describing the charge in a site after the measurement process, and the former describing the charge after the application of transition rules. The syndrome $s_{0} \left( \bm r , t + \frac{1}{2} \right)$, with $t$ an integer, describes the reported topological charge after the measurement performed at time $t + \frac{1}{2}$ , which differs from $c\left( \bm r, t+ \frac{1}{2} \right)$ with probability at most $q$. The subscript $0$ refers to the fact that $s_{0} \left( \bm r, t + \frac{1}{2} \right)$ describes the syndrome of a physical site, which corresponds to the $0^{\rm th}$ level of the renormalization scheme used by the cellular automaton decoder. Both $c$ and $s_{0} $ can take $\vert \mathcal{A} \vert$ unique different values, representing the various admissible topological charges of the model. 

The lattice is separated into square colonies of size $Q \times Q$, with $Q$ odd for convenience. Transition rules are applied at every time steps conditioned on the syndrome.  A full description of the transition rules is quite lengthy, and can be found in appendix \ref{AppendixA}, together with a description of the structure of a colony. The transition rules applied at a given site depend both on the position of the site in the colony and on the syndrome at the site under consideration and on its 8 nearest neighbours. The idea behind the transition rules at the physical level is essentially to bring non-trivial charges that were created locally together. If a non-trivial charge is detected on a site, then 2 cases are possible: if another non-trivial charge is detected in a neighbouring site, the transition rules are such that the charges will be brought together, and hopefully fuse to the vacuum. In the case where no neighbouring charges are detected, the rules will make the charge move towards the colony centre, where hopefully the charges stemming from a local error will all fuse together into the vacuum.

To describe the level-$1$ transition rules, we coarse grain time into $b$ bins of duration $b$.  At every $U = b^{2}$ time steps ($b$ bins of $b$ time steps each), the colony centres are organized into ``renormalized colonies" (which we refer to as a level-$1$ colony) of size $Q^{2} \times Q^{2}$, as depicted in Figure~\ref{renorm_array}. For each colony, two other syndrome measurements are defined, $s_{1,c} \left( \bm \rho_1, \tau_1 \right)$ and $s_{1,n} \left( \bm \rho_1, \tau_1 \right)$, where $(\bm \rho_1,\tau_1)$ are the space-time coordinate of the renormalized lattice. The $\tau_1^{\rm th}$ working period extending from time step $t=\tau_1 U$ to $(\tau_1+1)U-1$. For each time bin, a topological charge is declared to be present at the colony centre if for at least $f_{c}b$ time steps of the bin a non-trivial charge was reported at the colony centre. Furthermore, in the case where a non-trivial charge is reported, the reported charge is the one which was reported most often during the time bin. If a fraction at least $f_{c}$ of the $b$ time bins have a non-zero syndromes, then $s_{1,c} \left( \bm \rho_1 , \tau_1 \right)$ is set to the value corresponding to the last non-trivial charge reported during a bin of $b$ time steps. The same procedure is used to determine the value of $s_{1,n} \left( \bm \rho_1 , \tau_1 \right)$, with the difference that the fraction $f_{n}$ is used.

The application of the transition rules at this renormalized level is similar to the physical ones, using $s_{1,c}$ for the syndrome measurements at the centre position and $s_{1,n}$ as syndrome measurements for the eight neighbouring positions. The motivation for this renormalization procedure is to deal with errors that spatially extend into more than a single colony. The principle is that the charges present in every colony will eventually be moved to their respective centres, where they will either fuse to the vacuum or eventually activate the corresponding level-1 syndromes. At this point, the non-trivial charges present in the colony centres will be displaced towards the level-$2$ colony (formed of $Q \times Q$ level-$1$ colonies) centre containing them, where all the non-trivial charges will eventually merge and fuse to the vacuum, assuming that the error fits inside a single level-$2$ colony.

\begin{figure}[h]
\centering
\includegraphics[scale=.3]{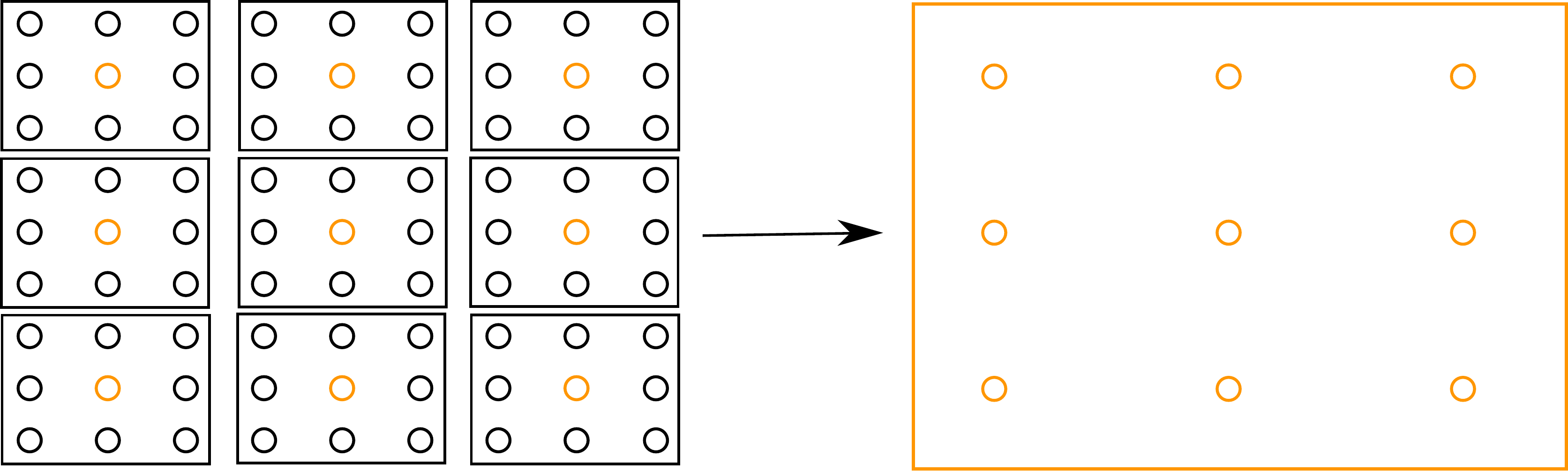}
\caption{A set of $Q \times Q$ colonies are renormalized to form a single super-colony, where the old colony centres form the sites of the renormalized colony. Here $Q = 3$.}
\label{renorm_array}
\end{figure}

This procedure is repeated for higher levels of renormalizations in a self-similar fashion in order to deal with larger errors. At every $U^{k}$ time steps (a level-$k$ working period), the $k^{\rm th}$ renormalization step is applied, by forming level-$k$ colonies of size $Q^{k+1} \times Q^{k+1}$ and by using the centres of the renormalized colonies of the previous level of size $Q^{k} \times Q^{k}$ in order to determine the syndromes $s_{k,c} $ and $s_{k,n} $. A level-$k$ colony can be labelled by $C^{k}_{\bm \rho_k}$, where $\bm \rho_k$ refers to the position of the level-$k$ colony in the level-$k$ renormalized lattice, and a level-$k$ working period covering the time period $\left[ \right. \tau_{k} U^{k}, \left( \tau_{k}+1 \right) U^{k} \left. \right)$ can be labelled by $\tau_k$. The same binning procedure is used to determine the level-$k$ syndromes $s_{k,c} (\bm \rho_k, \tau_k)$ and $s_{k,n} (\bm \rho_k, \tau_k)$, using the same thresholds $f_{c}$ and $f_{n}$. The transition rules are then applied at the renormalized level, depending on the values of the  level-$k$ syndromes.

If $\bm r$ is at the centre of the level-$k$ colony $C^k_{\bm \rho_k}$, we refer to $\bm \rho_k$ as $\bm \rho_k (\bm r)$ when considering the level-$k$ colony in relation with the physical site at its centre. For the remainder of the text, we do not explicitly specify the level of renormalization for $\bm \rho$ and $\tau$ when it is clear from the context. We also use $\tau$ to label the time index of a working period, while $t$ refers to the actual (or level-$0$) time. The same goes for $\bm \rho$ representing the position of a colony in the renormalized lattice and $\bm r$ representing a physical site. 

The procedure can be summarized by the following pseudo algorithm, in which $t_{fin}$ is the number of time steps for the algorithm to run and $n_{max} = log_{Q} \left( L \right)$ with $L$ the linear size of the system :

\begin{algorithm}[!h]
\begin{algorithmic}[0]
\caption{The fault-tolerant correcting algorithm}\label{algo}
\State t = 0;
\While{$t < t_{fin}$}
	\State apply charge noise;
	\State $t \gets t + \frac{1}{2}$
	\State measure (in a faulty manner) the charges at every site;
	\For{$n = 0; n < n_{max}; n= n+1$}
		\If {$\left( t+\frac{1}{2} \right)$ modulo $U^{n}$ = $0$}
			\State apply level-$n$ transition rules based on reported syndromes;
		\EndIf
	\EndFor
	\State $t \gets t + \frac{1}{2}$
\EndWhile
\end{algorithmic}
\end{algorithm}

For convenience, we list here the various constants used by the decoding algorithm and the noise model and summarize their purpose.
\begin{itemize}
\item $p = $ rate of particle creation.
\item $q = $ rate of measurement errors.
\item $D = diam \left( G_{\mathcal{A}} \right)$ is the maximum number of fusions between a particle and its anti-particle required before the fusion outcome is the vacuum.
\item $a,b\geq 2$ are two arbitrary constants.
\item $Q \geq 4(a+2)$ is the linear dimension of a colony. A level-$k$ colony has dimension $Q^k\times Q^k$ and is formed of $Q\times Q$ level-$(k-1)$ colonies.
\item $U = b^2 \geq 4(b+2)$ is the duration of a working period. A level-$k$ working period has length $U^k$ and is made up of $b$ bins, each composed of $b$ level-$(k-1)$ working periods. 
\item $f_c$ is the fraction of positive syndrome measurements required to consider applying the local transition rule in a colony. We will later set $f_c  = (b-4(3D + 1)Q - 1) / b$ to ensure that the correction procedure is correct.
\item $f_n$ is the fraction of positive syndrome measurements required to consider that a neighbouring colony should be part of a local transition rule. We will later set $f_n  = ( 4(3D + 1)Q + 1 ) / b$ to ensure that the correction procedure is correct.
\end{itemize}

We note that these cellular automaton transition rules are not strictly local because level $k>0$ transition rules involve physical sites that are $Q^k$ cells apart. G\'acs deploys considerable efforts to implement these high-level rules in a strictly local fashion, and generalized these results for a two-dimensional array of self-simulating arrays of cellular automata \cite{Gacs_89}. In addition to implementing $k$-level updates rules, the $k$-level cells implement additional local operations in order to realize the $(k+1)$-level rules, and so forth. While we have no reasons to doubt that such a strictly local set of rules is possible in the current setting, we do not attempt to develop them here.

\section{Definition of Basic Concepts for Error-Correction of Non-Abelian Anyons}
\label{sec:defs}

G\'acs' proof that one-dimensional cellular automatons can process information in a fault-tolerant way is notoriously complex. Harrington builds on this proof by explaining how key concepts need to be adapted to the toric code setting. Similarly, our proof builds on Harrington's proof and focuses on the key novelties introduced by the non-abelian nature of the anyons. The goal of this section is to introduce some basic concepts required to analyse error-correction of non-abelian anyons and to present some of their key properties. 

The core idea in G\'acs and Harrington's approach is to classify errors into distinct levels as in Section \ref{subsec_noise_class} and to demonstrate that level-$k$ errors are effectively suppressed by level-$k$ transition rules. In a sense, errors of different levels do not interact with each other and can be analyzed independently. The non-trivial braiding relations and fusion rules of non-abelian anyons break this simple structure. 

The fusion rules of non-abelian anyons are in general non-deterministic. As in Harrington's approach, we will show that the level-$k$ syndrome will correctly identify the topological charge of a cell. But even when a particle and its anti-particle have been correctly identified, their fusion may result in a non-trivial particle. As a consequence, our proof needs to apply to all possible fusion histories of the anyons. For this purpose, we will introduce the notion of the trajectory domain of an error, which is roughly the set of sites that have a finite amplitude of becoming charged as a consequence of a given error, and where transition rules are taken into account. 

The non-abelian braiding relations have a deeper consequences on the error classification. Consider a situation where a low-level actual error $ E$ is well isolated, in space and time, from any other actual error. The low-level correction rules tend to concentrate all the excitations caused by $ E$ onto a single site, which will result in the vacuum and thus the elimination $ E$. Suppose, however, that a high-level transition rule drags an anyon through the region containing $ E$. This is not forbidden by the definition of level-$k$ errors: those need to be well isolated from each other and from higher-level errors, but this does not prevent high-level transition rules from operating in their vicinity. As a consequence of the non-abelian braiding rules, after the passage of the high-level anyon, the excitations created by $ E$ may no-longer fuse to the vacuum: they have become entangled with the high-level error. Thus, neither the error $ E$ nor the higher-level error with which it has become entangled can be corrected individually: they need a joint correction strategy.

This entanglement across errors of different levels requires the definition of causally-linked clusters of errors. These are collections of actual errors of distinct levels that have potentially become entangled through the transition rules applied by the correction algorithm. Despite this new failure mechanism, repeated applications of the correction rules are bound to succeed for non-cyclic anyon models. Indeed, every failed attempt moves the total charge of the anyon which is dragged by the high-level transition rule along the directed acyclic graph defined in Section \ref{sec_cyclic_anyons}. After at most $diam \left( G_{\mathcal{A}} \right) - 1$ attempts, that topological charge will become abelian, so the next iteration is guaranteed to succeed as in an abelian model. Thus, the net effect is a possible slow-down of the correction process, which can be compensated by a lower error threshold. 

With these motivations in mind, we now give some basic definitions required for the analysis of our decoding algorithm. Henceforth, we assume that $p+q < \frac{Q^{-4}U^{-2}}{4}$, so that every error is part of an actual error of a finite level, and the notion of level-$k$ noise is well defined, $\forall \textnormal{ } k \in \mathbb{N}$.

\subsection{State Evolution}

Let $\vert \psi_{0} \rangle$ be a state of the system at time $t = 0$, part of the ground space, and let $\mathcal{E} = \{ E(t) \}$ with $t \geq 0$ be the set of all errors affecting the system, such that $E(t)$ are the errors affecting the system at time $t$. Note that $E(t)$ is not in general a set of actual errors. 

The evolution of the system from integer time $t$ to time $t + 1$ given a specific set of fusion outcomes is decomposed in two elementary steps. The first one describes the application of the charge errors that have time coordinate $t$ and the subsequent fusion of the topological charge with outcomes $c ( \bm r , t + \frac{1}{2} )$ at every site $\bm r$ of the lattice. The second step describes the application of the transition rules based on the reported syndromes $s_{ 0 } \left( \bm r , t+\frac{1}{2} \right)$, followed by the fusion of the charges contained at every site to $c ( \bm r, t + 1 )$. The charge errors described by  integer time coordinate $t$ are applied during the time interval $\left( t, t+\frac{1}{2} \right)$, so that the state of the system at time $t$ does not incorporate them. Similarly, the syndrome measurements and application of the transition rules take place during the time interval $\left[ t+\frac{1}{2}, t+1 \right)$.

For any  positive integer  time $t \geq 0$, the state of the system is given by $\vert \psi \left( c \left( t \right), \mathcal{E} \right) \rangle$, where $c \left( t \right)$ is the set of measured topological charges (not necessarily the reported ones) at every site of the lattice $\Lambda$ up to time $t$, and $\mathcal{E}$ is the set of all errors having affected the system. $s(t)$ is defined similarly for the set of all level-$0$ syndromes at times prior or equal to $t$. Implicit in this notation is the fact that the state $\vert \psi \left( c \left( t \right),  \mathcal{E} \right) \rangle$ depends on the transition rules that have been applied to the system, but these are entirely determined by $s(t)$, which itself depends on the fusion outcomes $c(t)$ and the measurement errors, specified by $\mathcal E$. 

The equation describing the first process is
\begin{eqnarray} 
\vert \psi  \left( {c} \left( t + \tfrac{1}{2} \right),  \mathcal{E} \right) \rangle = \frac{\prod_{\mathbf{r} \in \Lambda} P^{c \left( \mathbf{r},  t + \frac{1}{2} \right)}_{ \mathbf{r} } \hat{E} \left(t \right) \vert \psi \left( c \left( t \right),  \mathcal{E} \right) \rangle }{\sqrt{P \left( c \left( t+ \frac{1}{2} \right) \vert c \left( t \right),  \mathcal{E}  \right)}},
\label{eqn_evolution_half}
\end{eqnarray}
\begin{eqnarray}
P \left( c \left( t + \tfrac{1}{2} \right) \vert c \left( t \right),   \mathcal{E}  \right) = \langle \psi \left( c \left( t \right),   \mathcal{E} \right) \vert \hat{E} \left( t \right) ^{\dagger} \prod_{\mathbf{r} \in \Lambda} P^{c \left( \mathbf{r},  t+ \frac{1}{2} \right)}_{ \mathbf{r} } \hat{E} \left( t \right) \vert \psi \left( c \left( t \right), \mathcal{E} \right) \rangle, \end{eqnarray}%
where $P^{c \left( \mathbf{r}, t \right)}_{\mathbf{r}}$ is the projector of the anyons present in site $\mathbf{r}$ at time $t$ onto the total charge $c \left( \mathbf{r}, t \right)$ and $\hat{E} \left( t \right)$ is the operator applying the errors in $E \left( t \right)$. The probability of a charge measurement outcome $P \left( c \left( t + \tfrac{1}{2} \right) \vert c \left( t \right),   \mathcal{E}  \right)$ is simply given by Born's rule.\footnote{The probability of a fusion outcome can in some cases depend on the ground state in which the system was prepared. This happens when the errors+corrections have build up to a logical error. This dependence is not explicitly included by the notation used in Eq.~\ref{eqn_born_rule} which, to avoid cluttering the notation, shows only the dependence on the previous fusion outcomes and the error history.  We can circumvent this issue more formally by assuming that the initial state of the system is in the ground subspace, but otherwise maximally mixed, or equivalently maximally entangled with a reference system. }

The second process is described by

\begin{equation}
\vert \psi \left( c \left( t+1 \right), \mathcal{E} \right) \rangle = \frac{\prod_{\mathbf{r} \in \Lambda} P_{\mathbf{r}}^{c \left( \mathbf{r}, t+1 \right)} \hat{T} \left( {s} \left( t + \frac{1}{2} \right) \right)  \vert \psi \left( c \left( t + \frac{1}{2} \right),   \mathcal{E} \right) \rangle } {\sqrt{P \left( c \left( t+1 \right) \vert c \left( t+ \frac{1}{2} \right),  \mathcal{E}  \right)}},
\label{eqn_evolution_one}
\end{equation}
\begin{equation}
\begin{aligned}
P \left( c \left( t+1 \right) \vert c \left( t+\tfrac{1}{2} \right),  \mathcal{E}  \right)
 = \langle \psi \left( c \left( t+ \tfrac{1}{2} \right),  \mathcal{E} \right) \vert \hat{T}^{\dagger} & \left( s \left( t + \frac{1}{2} \right) \right) \prod_{\mathbf{r} \in \Lambda} P^{c \left( \mathbf{r},  t+1 \right)}_{ \mathbf{r} } \hat{T} \left( s \left( t + \frac{1}{2} \right) \right) \vert \psi \left( c \left( t+\tfrac{1}{2} \right), \mathcal{E} \right) \rangle,
\label{eqn_born_rule}
\end{aligned}
\end{equation}%
where $\hat{T} \left( s \left( t \right) \right)$ are the transition rules applied based on the faulty syndromes reported up to time $t $. Here we use the fact that level-$1$ and higher syndromes are entirely determined by level-$0$ syndromes, so that the transition rules that are applied can be considered to be functions of level-$0$ syndromes as well.

Given a pattern $\mathcal{E}$ of actual errors, it is not possible in general to deterministically predict the charge of a specific site due to the probabilistic nature of projecting the total charge of non-abelian anyons. However, it is possible to describe this charge as a random variable $C \left( \mathbf{r}, t \right)$ and the associated probability distribution $P \left( C \left( \mathbf{r}, t \right) , \mathcal{E} \right)$ can be computed using the probability chain rule 

\begin{equation}
P \left( C \left( \mathbf{r}, t \right) = c_{0}, \mathcal{E} \right) = \sum_{c \left( t \right) \vert c \left( \mathbf{r}, t \right) = c_{0}} \prod_{t' = 0}^{t-\frac{1}{2}} P \left( c \left( t' + \tfrac{1}{2} \right) \vert c \left( t' \right) , \mathcal{E} \right).
\label{eqn_prob_s}
\end{equation}.

\subsection{Causal regions}

We are now in a position to define the notion of a causal region of an actual error. As mentioned previously, the noise affecting the system $\mathcal{E}$ can be decomposed as a collection of actual errors. In the remainder of this section, we work with the convention that $\mathcal{E}$ is the collection of all the actual errors, and therefore $E \in \mathcal{E}$ means that $E$ is an actual error in the set $\mathcal{E}$. When the context is clear, we will also view $E$ as a set of points in space-time, those at which errors in $E$ are present. 

Errors are physical processes which cause anyonic excitations to appear in the system. These excitations are imperfectly monitored and moved following predefined transition rules. The transition rules do not act directly on the  errors, but rather on the anyonic excitations. Intuitively, different actual errors are treated independently, as they are isolated from each other. However, it is possible for anyons created by different actual errors to interact together {\it via} braiding or fusion processes, or by modifying the application of the transition rules. The fact that the anyons are non-abelian makes it impossible to treat actual errors independently. The following definitions allow us to treat all the anyons which may come from different actual errors but that are nevertheless interacting together as a single entity.

\begin{definition}
For $E \in \mathcal{E}$ and $\{ E_{(i)} \} \subseteq \mathcal{E}$, let $\mathcal Z \left( \{ E_{(i)} \} ,E ; t \right)$ be the set of points in space time $p = \left( \mathbf{r}, t_{0} \right)$ with $t_{0} \leq t$ such that $P \left( C \left( \mathbf{r}, t_{0} \right), \{ E_{(i)} \} \right) \neq P \left( C \left( \mathbf{r}, t_{0} \right), \{ E_{(i)} \} \backslash \{ E \} \right)$. The causal region of $E$ at time $t$ is defined as $\displaystyle Z \left( E ; t \right) = \bigcup_{\{ E_{(i)} \} \subseteq \mathcal E } \mathcal Z \left( \{ E_{(i)} \}, E ; t \right) $, where the union is performed over all possible subsets $\{ E_{(i)} \}$ of $\mathcal E$.
\end{definition}

% WITH DEFINITION 1
%\begin{definition}
%For $E \in \mathcal{E}$, let $Z \left( E ; t \right)$ be the set of points in space time $p = \left( \mathbf{r}, t_{0} \right)$ with $t_{0} \leq t$ such that $P \left( C \left( \mathbf{r}, t_{0} \right), \mathcal{E} \right) \neq P %\left( C \left( \mathbf{r}, t_{0} \right), \mathcal{E} \backslash \{ E \} \right)$. $Z \left( E ; t \right)$ is called the causal region of $E$ at time $t$. 
%\end{definition}

If $Z \left( E; t \right) \cap Z \left( E' ; t \right)\neq \emptyset$, then we say that at time $t$, $E$ and $E'$ have interacted together. The set of actual errors that have interacted, either directly or indirectly with $E$ is called the causally linked cluster of $E$, formally defined as follows.
\begin{definition}
For $E \in \mathcal{E}$, let $\mathcal{C}_{ 0 } \left( E ; t \right) = \{E\}$. $\forall i \in \mathbb{N}$, define $\mathcal{C}_{ i + 1 } \left( E ; t \right) = \{ E' \in \mathcal{E} \vert  Z \left( E^{\prime\prime} ; t \right) \bigcap Z \left( E' ; t \right) \neq \emptyset\ {\rm for}\ E^{\prime\prime} \in \mathcal C_i(E:t)   \}$. The causally-linked cluster of the actual error $E$ at time $t$ is $\mathcal{C} \left( E ; t \right) = \bigcup_{i} \mathcal{C}_{ i } \left( E ; t \right) $. When no time is specified, the causally-linked cluster of $E$, $\mathcal{C} \left( E \right)$ is defined as $\displaystyle \bigcup_{t} \mathcal{C} \left( E ; t \right) $.
\label{def_partial_c}
\end{definition}
Any actual error $E \in \mathcal{E}$ is in at least one causally-linked cluster, since $E \in \mathcal{C} \left( E ; t \right) $ for any $t$.
\begin{proposition}
At any time $t$, any actual error $E$ is contained within a single causally-linked cluster, which can be denoted by $\mathcal{C} \left( E ; t \right)$.
\label{prop_single_cluster} 
\end{proposition}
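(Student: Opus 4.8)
The plan is to recognize this as the elementary fact that the connected components of a graph partition its vertex set. Introduce the graph $G_t$ whose vertices are the actual errors in $\mathcal E$, with an edge between $E$ and $E'$ exactly when they have interacted at time $t$, that is when $Z(E;t)\cap Z(E';t)\neq\emptyset$. Definition~\ref{def_partial_c} constructs $\mathcal C(E;t)$ by exploring $G_t$ outward from $E$: we have $\mathcal C_0(E;t)=\{E\}$, and $\mathcal C_{i+1}(E;t)$ adjoins every error adjacent to something already in $\mathcal C_i(E;t)$. I would therefore first show that $\mathcal C(E;t)=\bigcup_i \mathcal C_i(E;t)$ is precisely the connected component of $E$ in $G_t$, i.e. the set of errors $E'$ joined to $E$ by a finite chain $E=F_0, F_1,\dots,F_m=E'$ with $Z(F_j;t)\cap Z(F_{j+1};t)\neq\emptyset$ for each $j$. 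Both inclusions are short inductions: if $E'\in\mathcal C_i(E;t)$ then unwinding the definition produces such a chain, and conversely, given a chain, one checks by induction on $j$ that $F_j\in\mathcal C_j(E;t)$, so that $E'\in\mathcal C(E;t)$.

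Next I would verify that being joined by a chain is an equivalence relation on $\mathcal E$. Reflexivity is built in, since $E\in\mathcal C_0(E;t)$ (the length-zero chain); symmetry is immediate because the adjacency relation is symmetric, $Z(E;t)\cap Z(E';t)=Z(E';t)\cap Z(E;t)$, so any chain may be reversed; and transitivity follows by concatenating chains. Consequently the sets $\mathcal C(E;t)$, as $E$ ranges over $\mathcal E$, are exactly the equivalence classes of this relation.

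The proposition is then the standard conclusion that equivalence classes are either equal or disjoint. Explicitly, if an actual error $E$ lay in two clusters $\mathcal C(E';t)$ and $\mathcal C(E'';t)$, then $E$ is chain-joined to both $E'$ and $E''$; by symmetry and transitivity every error reachable from $E'$ is reachable from $E''$ and vice versa, forcing $\mathcal C(E';t)=\mathcal C(E'';t)=\mathcal C(E;t)$. Hence $E$ belongs to a single cluster, which may consistently be named $\mathcal C(E;t)$. I do not anticipate a real obstacle here: the statement is purely combinatorial once the definitions are read as a graph, and the only model-dependent ingredient is the remark---already implicit in the text---that $Z(E;t)$ is nonempty for an actual error, since removing $E$ genuinely changes the charge distribution at the sites it acts on. This merely confirms that errors recorded as interacting are indeed adjacent, and does not affect the partition argument itself.
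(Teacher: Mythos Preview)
Your proof is correct and follows essentially the same approach as the paper: both arguments show that if $E$ lies in two clusters $\mathcal C(E';t)$ and $\mathcal C(E'';t)$, then chaining through the layers $\mathcal C_i$ forces the two clusters to coincide. You phrase this cleanly as the standard fact that connected components (equivalently, equivalence classes of the ``joined by a chain'' relation) partition the vertex set, whereas the paper carries out the same chain-tracing by explicit index manipulation; the content is identical.
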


\begin{proof}
Suppose it is not the case, {\it i.e.} there exists two different causally-linked clusters $\mathcal{C} \left( E' ; t \right)$ and $\mathcal{C} \left( E^{\prime\prime} ; t \right)$ such that $\mathcal{C} \left(E' ; t \right) \neq \mathcal{C} \left( E^{\prime\prime} ; t \right)$ and such that $E \in \mathcal{C} \left( E' ; t \right)$ and $E \in \mathcal{C} \left( E^{\prime\prime} ; t \right)$. We thus have that $E \in \mathcal{C}_{ i } \left( E' ; t \right)$ and $E \in \mathcal{C}_{ k } \left( E^{\prime\prime} ; t \right)$, for $i, k \in \mathbb{N}$. But using definition \ref{def_partial_c}, we find that there exists $F \in  \mathcal{C}_{i+1} \left( E' ; t \right)$ such that it is also in $\mathcal{C}_{ k - 1 } \left( E^{\prime\prime} ; t \right)$. This is because if $E$ is contained in $\mathcal{C}_{ i } \left( E' ; t \right)$, there exists an actual error $F$ such that it interacts with $E$ and such that it is contained in $\mathcal{C}_{i-1} \left( E' ; t \right)$. Consequently, $F$ is contained in $\mathcal{C}_{k+1} \left( E^{\prime\prime} ; t \right)$. By repeating the same reasoning, we find that $F \in \mathcal{C}_{k+i} \left( E^{\prime\prime} ; t \right)$. Again, by definition \ref{def_partial_c}, we find that $\mathcal{C}_{ i } \left( E';t \right) \subset \mathcal{C}_{k+2i} \left( E^{\prime\prime} ; t \right)$. Thus, $\mathcal{C} \left( E' ; t \right) \subset \mathcal{C} \left( E^{\prime\prime} ; t \right)$. The above deductions are equally valid by interchanging the roles of $E'$ and $E^{\prime\prime}$, which show that $\mathcal{C} \left( E' ; t \right) \supset \mathcal{C} \left( E^{\prime\prime} ; t \right)$ as well, and from which we conclude that $\mathcal{C} \left( E' ; t \right) = \mathcal{C} \left( E^{\prime\prime} ; t \right)$, a contradiction.
\end{proof}

The above implies that if two actual errors $E'$ and $E^{\prime\prime}$ are in the same causally-linked cluster at time $t$, then $\mathcal{C} \left( E' ; t \right) = \mathcal{C} \left( E^{\prime\prime} ; t \right)$. We are thus free to choose any representative of the causally-linked cluster to identify it. A similar statement holds when the time is not specified.

We say that the causal region of  $\mathcal{C} \left( E ; t \right)$ is $\displaystyle \bigcup_{F \in \mathcal{C} \left( E ; t \right)} Z \left( F ; t \right)$. Depending on context, we will view $\mathcal{C} \left( E ; t \right)$ as a collection of errors or as a region of space-time, corresponding to the causal region of $\mathcal{C} \left( E ; t \right)$. Also, when a lattice site in $\mathcal{C} \left( E ; t \right)$ has a non-trivial charge, we will say that this anyon belongs to $\mathcal{C} \left( E ; t \right)$. Finally, we define the level of a causally-linked cluster at time $t$ by $ \displaystyle \textnormal{lvl} \left( \mathcal{C} \left( E ; t \right) \right) = \max_{E' \in \mathcal{C} \left( E ; t \right)} {\textnormal{lvl} \left( E' \right) } $, where $\textnormal{lvl} \left( E' \right)$ is the function returning the level of the actual error $E'$. Similar definitions holds when time is not specified.

Given these definitions, a point $p = \left( \mathbf{r}, t \right)$ cannot belong to more than a single causally-linked cluster. Suppose that $p$ belongs to $\mathcal{C} \left( E ; t \right)$ and to $\mathcal{C} \left( E' ; t \right)$. Using the freedom of representative error for a given causally-linked cluster, we can assume that $p \in Z \left( E ; t \right)$ and   $p \in Z \left( E' ; t \right)$ . Using definition \ref{def_partial_c}, it is clear that $E \in \mathcal{C} \left( E^{\prime} ; t \right)$. Using proposition \ref{prop_single_cluster}, we find that $\mathcal{C} \left( E; t \right) = \mathcal{C} \left( E^{\prime} ; t \right)$.

We also find that removing a causally-linked cluster $\mathcal C \left( E; t \right)$ does not affect points outside of its causal region. Consider a point $p = \left( \bm r, t_0 \right)$ such that $p \notin \mathcal C \left( E ; t \right)$. We thus have that $P \left( C ( \bm r , t_0 ), \{ E_{(i)} \} \right) = P \left( C ( \bm r, t_0 ), \{ E_{(i)} \} \backslash {F} \right)$ for any set of actual errors $\{ E_{(i)} \}$ and any actual error $F \in \mathcal C \left( E ; t \right)$. In particular, for any subset of actual errors $\mathcal{C}_{S} \subseteq \mathcal C \left( E ; t \right)$, we have that $P \left( C ( \bm r, t_0 ), ( \{ E_{(i)} \} \backslash \mathcal{C}_{S} ) \backslash F \right) = P \left( C ( \bm r, t_0 ), ( \{ E_{(i)} \} \backslash \mathcal{C}_{S} ) \right)$. We can thus remove all the actual errors of $\mathcal C \left( E ; t \right)$ to find that the probability distribution of $C \left( \bm r, t_0 \right)$ is  left unchanged. Causally-linked cluster can thus be analysed independently. Notice that if a point $p$ with time coordinte $t_0 \leq t$ is not in any causally-linked cluster at time $t$, then it must contain the vacuum.

Our proof of error-correction requires analyzing the effect of an actual error $E$ in the absence of errors of higher levels. This motivates the definition of a restricted error history $\mathcal{\tilde E} \left( E \right) = \{ E' \in \mathcal{E} \vert \textnormal{ lvl} \left( E' \right) < \textnormal{ lvl} \left( E \right)  \} \textnormal{ } \bigcup \textnormal{ } \{ E  \}$ and their associated  connected clusters.

\begin{definition}
For $E' \in \mathcal{\tilde E} \left( E \right)$ and $\{ E_{(i)} \} \subseteq \mathcal{\tilde E} \left( E \right)$, let $\mathcal{\tilde Z}_{E} \left( \{ E_{(i)} \}, E' ; t \right)$ be the set of points in space-time $p = \left( \mathbf{r}, t_{0} \right)$ with $t_{0} \leq t$ such that $P \left( C \left( \mathbf{r}, t_{0} \right), \{ E_{(i)} \} \right) \neq P \left( C \left( \mathbf{r}, t_{0} \right), \{ E_{(i)} \} \backslash E' \right)$. We call $\displaystyle \tilde{Z}_{E} \left( E' ; t \right) = \bigcup_{\{ E_{(i)} \} \subseteq \mathcal{\tilde E} \left( E \right) } \mathcal{\tilde Z}_{E} \left( \{ E_{(i)} \}, E' ; t \right)$ the restricted causal region of $E'$ with respect to $E$ at time $t$.
\end{definition}

We can use this notion of restricted causal region to define restricted causally-linked clusters $\mathcal{\tilde C}(E;t)$. These are defined as in Definition \ref{def_partial_c} but incorporate only restricted errors and causal regions, i.e.,  $\mathcal{\tilde C}_{ i + 1 } \left( E ; t \right) = \{ E' \in \mathcal{\tilde E}(E) \vert  Z_E \left( E^{\prime\prime} ; t \right) \allowbreak \bigcap Z_E \left( E' ; t \right) \neq \emptyset\ {\rm for}\ E^{\prime\prime} \in \mathcal{\tilde C}_i(E;t)   \}$. The notion of restricted clusters will become useful when analyzing the effects of actual errors of level lower than the cluster to which they belong. They do not form independent entities in general, since interactions with actual errors of larger level are not taken into account.

Lastly, our analysis of the hierarchical error-correction procedure will require a coarse-grained notion of the error clusters.

\begin{definition}
Let $\displaystyle \mathcal{C}^{\left( k \right)} \left( t \right) = \bigcup_{ E  \vert \text{lvl} \left( E \right) \geq k} \mathcal{C} \left( E ; t \right)$. 
\label{def:coarse_grained_cluster}
\end{definition}

\subsection{Trajectories}

The shape of a causally-linked cluster $\mathcal{C} \left( E ; t \right)$ will generally evolve in time, and to prove error-correction in the next Section, we need to bound this evolution. In order to do so, we introduce the notion of the trajectory of an error cluster $\mathcal{T} \left( \mathcal{C} \left( E ; t \right) \vert c(t), \mathcal E \right)$, which is the set of points in space-time up to time $t$ containing non-trivial anyons originating from errors in $\mathcal{C} \left( E ; t \right)$. Clearly, this depends on the fusion outcomes and the transition rules, and this is why $\mathcal T$ is explicitly a function of the fusion outcome and the error history (which, combined with the fusion outcomes, determine the syndromes and thus the transition rules).

Let $e$ be an `atomic' charge (measurement) error which happens at time $t_0$ ($t_0 + \frac{1}{2}$) and is contained in one of the actual errors of a level-$k$ causally-linked cluster. For $t$ such that $t_{0} \leq t \leq t_{f}$, the locations $\mathcal{L}_{t} \left( e \vert c(t), \mathcal E\right)$ is a set of points with time coordinate $t$ constructed by induction as follows. If $e$ is a charge error, $\mathcal L_{t_0}(e\vert c(t_0), \mathcal E)$ is composed of the two space-like separated sites adjacent to the error $e$, which contain a particle anti-particle pair. If $e$ is a measurement error, then $\mathcal L_{t_0}(e\vert c(t_0), \mathcal E)$ contains a single site where the measurement error occurred.

Suppose that $\mathcal{L}_{t'} \left( e\vert c(t'), \mathcal E \right)$ is well defined for any $t'$ such that $t_{0} \leq t' \leq t$. Then, the point $\left( i, j, t+1 \right)$ is in $\mathcal{L}_{t+1} \left( e \vert c(t+1), \mathcal E\right)$ if $c \left( i , j, t+1 \right) \neq 1$ and at least one of the following conditions is satisfied
\begin{enumerate}
\item $\left( i , j, t \right) \in \mathcal{L}_{t} \left( e\vert c(t), \mathcal E \right)$; 
\item $\left( i, j, t+1 \right) = \left( i_{0}+\alpha Q^{k'}, j_{0}+\beta Q^{k'}, t+1 \right)$, where $ \left( i_{0}, j_{0},  t' \right) \in \mathcal{L}_{t'} \left( e\vert c(t'), \mathcal E \right)$ is at the centre of a level-$k'$ colony and $M^{\left( \alpha, \beta \right), k'}_{\bm \rho (i_{0}, j_{0})}$ is applied at time $t + \frac{1}{2}$, with $k' \leq k$ and $t - t' < U^{k'}$; or
\item $\left( i, j, t' \right) \in \mathcal{L}_{t'} \left( e \vert c(t'), \mathcal E \right)$ is at the centre of a level-$k'$ colony and some transition rule $M^{\left( \alpha, \beta \right), k'}_{\bm \rho (i,j)}$ is applied at time $t + \frac{1}{2}$ with $k' \leq k$ and $t-t' < U^{k'}$.
\end{enumerate}
The first of these conditions corresponds to an anyon which stays put. The second one represents an anyon which moves due to a level-$k'$ transition rule. The last one represents a level-$k'$ transition rule which has left, or created, a non-trivial charge behind.

The trajectory of the error $e$ at time $t_f$ is simply the set of locations it has occupied up to time $t_f$: $ \mathcal{T}_{t_f} \left( e\vert c(t_f), \mathcal E \right) = \bigcup_{t \in [t_{0}, t_{0} + 1, \dots, t_{f} ]} \mathcal{L}_{t} \left( e\vert c(t), \mathcal E \right)$. We can define the trajectory (and similarly for locations) of a set of actual errors $\{ E_{\left( i \right) } \}$ by  $ \mathcal{T}_{t} \left( \{ E_{\left( i \right) } \}\vert c(t), \mathcal E \right) = \bigcup_{e \in E \vert E \in \{ E_{\left( i \right) } \}  } \mathcal{T}_{t} \left( e \vert c(t), \mathcal E\right)$.

For a level-$k$ causally-linked cluster $ \mathcal{C} \left( E ; t \right)$, in the absence of transition rules of level greater than $k$, by definition, $\mathcal{T}_{t} \left( \mathcal{C} \left( E ; t \right) \vert c(t), \mathcal E \right)$ contains all the points up to time $t$ that have non-trivial anyons caused by $\mathcal{C} \left( E ; t \right)$. This set clearly depends on fusion outcomes. We next define the trajectory domain of an error cluster, which incorporates the trajectories over all possible fusion outcomes.

% ALTERNATE DEFINITION TO GO WITH THE NEW CAUSAL REGION DEFINITION

\begin{definition}
The trajectory domain $\mathcal D_{t'} \left( \mathcal{C} \left( E; t \right) \vert  \mathcal{E} \right)$ of an error cluster $ \mathcal{C} \left( E ; t \right)$ where $t' \geq t$ is defined as the union over all fusion histories $c(t')$ and over all $ \{ E_{(i)} \} \subseteq \mathcal{C} \left( E ; t \right) $ of the trajectories $\mathcal{T}_{t'} \left( \mathcal{C} \left( E; t \right) \backslash \{ E_{(i)} \} \vert c(t'), \mathcal{E} \backslash \{ E_{(i)} \} \right)$.
\end{definition}

% ORIGINAL DEFINITION

%\begin{definition}
%The trajectory domain $\mathcal D_t \left( \mathcal{C} \left( E; t \right) \vert  \mathcal{E} \right)$ of an error cluster $ \mathcal{C} \left( E ; t \right)$ is defined as the union over all fusion histories $c(t)$ and over all $ E' \in \mathcal{C} \left( E ; t_{f} \right) \bigcup \{ \emptyset \}$ of the trajectories $\mathcal{T}_t \left( \mathcal{C} \left( E; t \right) \backslash E' \vert c(t), \mathcal{E} \backslash E' \right)$.
%\end{definition}

%We define the {\em trajectory domain} of an error $e$ simply by considering the union of all trajectories obtained from the different possible fusion outcomes $\mathcal T_t(e\vert \mathcal E) = \bigcup_{c(t)\vert P(c(t))\neq 0} \mathcal{T}_{t} \left( e\vert c(t), \mathcal E \right)$.  The definitions of  the trajectory domains $\mathcal T(e\vert \mathcal E)$, $ \mathcal{T}_{t} \left( \{ E_{\left( i \right) } \}\vert  \mathcal E \right)$, and $ \mathcal{T} \left( \{ E_{\left( i \right) } \}\vert \mathcal E \right)$ follow straightforwardly. 

In the previous section, we introduced the notion of causally-linked cluster,  a god's eye view of the error history which speaks about the errors directly. The present section introduces the notion of trajectories, which provide a physicist's perspective on the same history, described in terms of the excitations that are caused by the errors. The following lemma establishes a relation between these two points of view.

%%%%% MODIFIED FOR THE NEW DEFINITION OF CAUSAL REGION

\begin{proposition}
In the absence of transition rules of level greater than $k$, the causal region of $\mathcal{C} \left( E ; t \right)$ up to time $t$ is contained in its trajectory domain. 
\label{prop_tree}
\end{proposition}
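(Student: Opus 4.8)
The plan is to prove the contrapositive: if a space-time point $p = (\mathbf r, t_0)$ with $t_0 \le t$ lies outside the trajectory domain $\mathcal D_{t}(\mathcal C(E;t)\vert \mathcal E)$, then $p$ lies outside the causal region $\bigcup_{F\in \mathcal C(E;t)} Z(F;t)$ of the cluster. Since the causal region is a union over the errors $F$ of the cluster, it suffices to fix an arbitrary $F \in \mathcal C(E;t)$ and an arbitrary background subset $\{E_{(i)}\} \subseteq \mathcal E$ and show that $P(C(\mathbf r,t_0),\{E_{(i)}\}) = P(C(\mathbf r,t_0),\{E_{(i)}\}\setminus\{F\})$, i.e. that the charge distribution at $p$ is insensitive to the presence of $F$.

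First I would isolate the cluster from the rest of the noise. Writing $\{E_{(i)}\} = A_{\mathcal C}\cup A_{\mathrm{out}}$ with $A_{\mathcal C} = \{E_{(i)}\}\cap\mathcal C(E;t)$ and $A_{\mathrm{out}}$ its non-cluster part, I would argue, using the definition of a causally-linked cluster (Definition \ref{def_partial_c}) and the independence of distinct clusters established after Proposition \ref{prop_single_cluster}, that no error in $A_{\mathrm{out}}$ can alter the way $F$ (or any cluster anyon) influences $p$: if it did, its causal region would overlap that of a cluster error and it would itself belong to $\mathcal C(E;t)$, a contradiction. Consequently the effect of removing $F$ is independent of the non-cluster background, and it is enough to analyse it in the presence of all non-cluster errors --- precisely the background $\mathcal E \setminus \{E'_{(j)}\}$ (with $\{E'_{(j)}\}\subseteq \mathcal C(E;t)$) that appears in the definition of the trajectory domain. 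This reconciles the fact that the causal region removes arbitrary subsets of $\mathcal E$ while the trajectory domain only removes subsets of $\mathcal C(E;t)$.

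The core of the argument is then an induction on $t_0$ showing that the set of sites whose charge distribution depends on $F$ is contained, time-slice by time-slice, in the cluster trajectory. Using the two-substep evolution (Eqs.~\ref{eqn_evolution_half}--\ref{eqn_born_rule}) together with the chain rule \eqref{eqn_prob_s}, the distribution of $C(\mathbf r, t_0+1)$ is controlled by the charges at time $t_0+\tfrac12$ at $\mathbf r$ and at the sites that a transition rule applied at $t_0+\tfrac12$ can map onto $\mathbf r$, and by the syndromes driving those rules. If this distribution depends on $F$, then either (i) a cluster anyon already carrying the dependence sits at $\mathbf r$ or is transported to $\mathbf r$ by a level-$k'$ rule with $k'\le k$, or (ii) $F$ flips a syndrome that changes which rule is applied. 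In case (i) the inductive hypothesis places the parent point in the trajectory and conditions 1--2 of the location rule put $p$ there as well; in case (ii) condition 3 (a transition rule leaving or creating a charge behind), together with the union over fusion outcomes $c(t_0+1)$ and over removed cluster subsets $\{E'_{(j)}\}$ in the definition of $\mathcal D_t$, accounts for the displaced anyon. Here the hypothesis that no transition rule of level greater than $k$ is applied is essential: it guarantees that every displacement of a cluster anyon is recorded by a trajectory location with $k'\le k$, so that no influence can leak outside $\mathcal D_t(\mathcal C(E;t)\vert \mathcal E)$.

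I expect the main obstacle to be the bookkeeping of the indirect, syndrome-mediated influence of case (ii), compounded by the non-abelian, probabilistic nature of fusion: because a single error pattern induces a distribution over fusion outcomes rather than a definite charge configuration, the dependence on $F$ must be tracked at the level of probability distributions, and one must verify that the union over all fusion histories $c(t_0)$ in the trajectory domain dominates every outcome contributing to the difference $P(C(\mathbf r,t_0),\{E_{(i)}\}) - P(C(\mathbf r,t_0),\{E_{(i)}\}\setminus\{F\})$. Matching the arbitrary subsets removed in the causal region with the cluster-only subsets removed in the trajectory domain is the delicate point, and it is exactly the cluster-independence reduction of the second step that makes this matching possible.
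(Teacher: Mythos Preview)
Your strategy is sound and the reduction to a single cluster via Proposition~\ref{prop_single_cluster} is exactly what the paper does. However, your core argument is considerably more elaborate than necessary. The paper dispenses with the time-by-time induction entirely by exploiting a definitional shortcut: by construction of the location sets $\mathcal L_{t'}(e\vert c(t'),\mathcal E)$, a point enters the trajectory only if it carries a non-trivial charge ($c(\mathbf r,t')\neq 1$), and the inductive clauses 1--3 are designed so that the trajectory of the cluster captures \emph{every} site carrying a non-trivial cluster anyon. Hence, once one has reduced to $\mathcal E=\mathcal C(E)$, the statement ``$p$ is outside the trajectory domain'' unpacks directly to ``for every fusion history $c(t)$ and every removed subset $\{E_{(i)}\}\subseteq\mathcal C(E;t)$, the charge at $p$ is the vacuum''. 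Then $P(C(p)\neq 1,\mathcal E\setminus\{E_{(i)}\})=0$ for all such subsets, so the charge distribution at $p$ is identically the vacuum and is trivially unchanged by removing any $F$. No propagation analysis is needed.

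What your induction is really doing is re-proving that the recursive definition of $\mathcal L_{t'}$ exhausts all mechanisms by which a cluster anyon can appear at a site --- your cases (i) and (ii) correspond to clauses 1--2 and clause 3 of the location rule, and the restriction to transition rules of level $\leq k$ is what makes clause~2--3 complete. That verification is legitimate and arguably fills in something the paper takes as evident, but it is not the crux of the proposition; the paper treats it as a property already built into the definition of trajectory and then finishes in one line. Your concern about matching arbitrary subsets of $\mathcal E$ with cluster-only subsets in $\mathcal D_t$ dissolves for the same reason: after the cluster reduction, both coincide.
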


\begin{proof}
Suppose the contrary, there exists a point $p$ with spatial coordinate at most $t$ in the causal region of $\mathcal{C} \left( E ; t \right)$ but which is not in any of the trajectory $\mathcal{T}_t \left( \mathcal{C} \left( E; t \right) \backslash \{ E_{(i)} \} \vert  c(t), \mathcal{E} \backslash \{ E_{(i)} \} \right)$. First, we notice that by definition of a causally-linked cluster, if $p$ is part of $\mathcal{C} \left( E ; t \right)$, then it cannot be part of any other distinct causally-linked cluster. We can thus consider that $\mathcal{E} = \mathcal{C} \left( E \right)$. By definition, the causal region of $\mathcal{C} \left( E ; t \right)$ is the set of points whose charge distribution is altered when errors are removed from the cluster. These charge distributions are given by  Eq. \ref{eqn_prob_s}. In the absence of transition rules of level greater than $k$, by assumption, $p$ is not part of the trajectory of any error of the form $\mathcal{C} \left( E; t \right) \backslash \{ E_{(i)} \}$. It follows that any term in the sum of equation \ref{eqn_prob_s} involved in $P \left( C \left( p \right) \neq 1, \mathcal{E} \backslash \{ E_{(i)} \} \right)$ is equal to $0$. By definition, we thus have that for any $E' \in \mathcal{C} \left( E ; t \right)$, $p \notin Z \left( E' \right)$, which shows that $p$ is not in the causal region of $\mathcal{C} \left( E ; t \right)$, a contradiction.
\end{proof}

Lastly, our analysis of the hierarchical error-correction procedure will require a coarse-grained notion of the trajectory which builds on the coarse-grained error clusters.

\begin{definition}
For any $\tau \in \mathbb{N}$, $\mathcal{T}_{\left( k \right)} \left( \tau \right) = \{ C^{k}_{\bm \rho} \vert C^{k}_{\bm \rho} \bigcap \mathcal{L}_{t} \left( \mathcal{C}^{\left( k \right)} \left( t \right) \vert  c \left( t \right), \mathcal{E} \right) \neq \emptyset  \}$ such that $t \in \left[ \right.  \tau U^{k}+\left( 1-f_{c} \right) bU^{k-1}, \left( \tau+1 \right)U^{k} \left. \right)$.
\end{definition}

Essentially, $\mathcal{T}_{\left( k \right)} \left( \tau \right)$ is the set of level-$k$ colonies containing excitations belonging to a level-$k$ or larger causally-linked clusters during the $\tau$\emph{th} level-$k$ renormalized working period. Importantly, we only consider overlaps that persist beyond time $\tau U^{k}+\left( 1-f_{c} \right) bU^{k-1}$ instead of considering the beginning of the working period of $\tau U^{k}$. Consider the case where a level-$k$ transition rule is applied at time $\tau U^k$, moving a non-trivial charge outside of $C^k_{\bm \rho}$, and such that it interacts with an actual error of lower levels on its path. It is possible that the system nevertheless evolves in a manner such that all the anyons of the smaller actual error fuse to the vacuum in a time scale too short to influence the level-$k$ syndrome of the colony. We do not wish to include $C^k_{\bm \rho}$ in $\mathcal{T}_{\left( k \right)} \left( \tau \right)$ in order to be coherent with definition \ref{def_success_rule}. The specific choice for the time $\left( 1-f_{c} \right) bU^{k-1}$ will become clear in light of property \ref{def_k_faithful}.

\subsection{Syndromes}

The total charge contained within a fixed region is in general not precisely defined before fusing all the anyons within that region together. While the intuitive idea of a level-$k$ syndrome is to report the total charge of a level-$k$ colony, care must be taken to formally determine when a syndrome reports a valid charge.

%%% NEW DEFINITION OF RENORMALIZED SYNDROME

Suppose that the system has evolved to time $t$, and is described by the state $\vert \psi (c (t), \mathcal E ) \rangle$. Consider the state $\vert \psi' (c'(t), \mathcal E' ) \rangle$ with $\mathcal E' = \mathcal{C}^{(k)} (t) $ and such that for any $t_0 \leq t$, $c'(\bm r, t_0) = c(\bm r, t_0)$ if $(\bm r, t_0)$ is in the causal region of one of the causally-linked clusters in $\mathcal C^{\left( k \right)} (t)$, and $c'(\bm r, t_0) = \mathbbm 1$ otherwise. Let $\{a_{1}, \dots, a_{n} \}$ denotes the charge of the anyons in a level-$k$ colony $C^{k}_{\bm \rho}$ at time $t$ and let $\Pi_{a}^{a_{1}, \dots, a_{n}}$ be the projector of their fusion outcome onto a total charge $a$ (by bringing them all at the center of the colony). We define the level-$k$ charge distribution by $\rho_{\bm \rho}^{k} \left( a , t \right) = \langle \psi' \left( c' \left( t \right), \mathcal{E'} \right) \vert \Pi_{a}^{a_{1}, \dots, a_{n}} \vert \psi' \left( c' \left( t \right), \mathcal{E'} \right) \rangle$. In the case where no anyon is present, then $\rho_{\bm \rho}^{k} \left( a , t \right) = \delta_{1, a}$.

In the above definition, we assumed that removing a causally-linked cluster of lower level does not modify the topological charges present in the causal region of any causally-linked cluster in $\mathcal{C}^{(k)} (t)$ for a \emph{specific} fusion history. Although our definition of causal region assures us that it cannot have an effect for the charge distribution averaged over all possible fusion outcomes, it is technically possible that for some specific fusion histories, some of the topological charges of a cluster in $\mathcal{C}^{(k)} (t)$ are modified, even though it seems quite unlikely. To avoid such difficulties, we simply consider that if this happens, then the causally-linked clusters in play are merged together. Alternatively, one could modify the definition of the causal region of an actual error to consider the charge probability density of a site for every possible charge history. We do not consider such a definition here to avoid unnecessary complications.

\begin{definition}
If at a time $t$ such that $t + \frac{1}{2} = (\tau+1) U^k$, $\rho_{\bm \rho}^{k} \left( a , t \right) = \delta_{a,c}$ for some topological charge $c$ (including the vacuum), and if the reported syndrome $s_{k,c} \left( \bm \rho, \tau \right)$ corresponds to $c$, then we say that $s_{k,c} \left( \bm \rho , \tau \right)$ is valid. The same definition of validity holds for $s_{k,n} $.
\end{definition}

\emph{A priori}, this notion of a valid syndrome may not seem very useful, as there can obviously be anyons part of clusters of lower level having support in $C^{k}_{\bm \rho}$. However, it will be shown below that for non-cyclic anyonic models, any such cluster will eventually be corrected on a time scale which is too short to influence higher level syndromes. Furthermore, the total charge of \emph{all} the anyons in $C^{k}_{\bm \rho}$ is subject to rapid fluctuations, as anyons part of lower-level clusters can quickly appear and disappear along the border of the colony.

Suppose that a level-$k$ colony $C^{k}_{\bm \rho}$ has a valid, non-trivial syndrome $s_{k,c}$ and that its $8$ neighbouring colonies, collectively denoted by $\{ C^{k}_{\langle \bm \rho \rangle} \}$, also have valid syndromes $s_{k,n}$. The level-$k$ transitions rules $M_{\bm \rho}^{ \left( \alpha, \beta \right), k}$ (using the notation of appendix \ref{AppendixA}) are applied at times $t_0$ such that $t_{0} + \frac{1}{2} = (\tau + 1) U^k$. If, based on this syndrome pattern, the transition rules dictate that the charge contained in the central colony is to be displaced to one of the neighbouring colonies, we say that the transition rule is {\em active}. Note that by definition, a transition rule is not called active when the syndrome of the central colony or one of its neighbours is invalid, or when the central colony contains the vacuum.  

\begin{definition}
The active transition rule $M_{\bm \rho}^{ \left( \alpha, \beta \right), k}$ is said to have been successfully applied if for the next level-$k$ working period, $s_{k,c} \left( \bm \rho, \tau + 1 \right)$ is valid and reports the vacuum.
\label{def_success_rule}
\end{definition}

In the above definitions, level-$0$ colonies are to be interpreted as single physical sites. A level-$0$ syndrome is thus valid if it reports the anyonic charge of the physical site under consideration. A level-$0$ active transition rule is said to have been applied successfully if after its application the physical site under consideration contains the vacuum.

\section{Threshold for Non-Cyclic Anyonic Models}
\label{sec_threshold}

In this section, we sketch a proof that the error correction procedure described in the previous section and in the appendix has a fault-tolerant threshold for non-cyclic anyonic models. For the remainder of this section, we suppose that $a \geq 3$ and $b > 2\left[4 \left( 3 \textnormal{diam} \left( G_{\mathcal{A}} \right) + 1 \right) Q + 1 \right] $, and we choose $f_{c} b = \break b - \left( 4 \left( 3 \textnormal{diam} \left( G_{\mathcal{A}} \right) + 1 \right) Q + 1 \right)$ and $f_{n} b = 4 \left( 3\textnormal{diam} \left( G_{\mathcal{A}} \right) + 1 \right) Q + 1 $ (so that $b$ is large enough to ensure that $f_c > f_n$). Note that these values are in accordance with all previous bounds given for $a$ and $b$.

We begin by defining three key properties of a correction algorithm which will be central in proving that it possesses a threshold.

\begin{definition}
\emph{$k$-locality.} If, for all times $t$ and all level-$k'$ actual error $E$ with $k' \leq k$, the trajectory domain $\mathcal D_t \left( \tilde{\mathcal{C}} \left( E ; t  \right), \mathcal{\tilde E} \left( E \right)   \right)$  is contained in a block of $2 \times 2$ level-$k'$ colonies, we say that the algorithm is $k$-local.
\label{def_k_local}
\end{definition}

\begin{definition}
\emph{$k$-faithfulness.} 
Suppose that a level $k'$ colony $C^{k'}_{\bm \rho}$, with $k'\leq k$, and its 8 neighbours  $\{ C^{k'}_{\langle \bm \rho \rangle} \}$ do not overlap with errors of level $k'$ or greater for a full level-$k'$ working period  $\left[ \right. \tau U^{k'}, \left( \tau+1 \right)U^{k'} \left. \right)$. If there exists $t_{c} \leq  \left( 1 - f_{c} \right) b U^{k'-1} + \tau U^{k'}$ such that for any $t \geq t_{c}$ in that working period, any point $p= (\bm r, t) $ in the trajectory domain on that colony $\break \mathcal{D}_{t} \left( \mathcal{C}^{\left( k \right)} \left( t_{c} \right), \mathcal{E} \right) \bigcap C^{k'}_{\bm \rho}$ is in the physical site at the centre of $C^{k'}_{\bm \rho}$,  we say that the algorithm is $k$-faithful.
\label{def_k_faithful}
\end{definition}

\begin{definition}
\emph{$k$-successfulness.} If for any $k' \leq k$, in the absence of level-$k'$ or larger actual error and applied transition rules of level higher than $k'$ on $C^{k'}_{\bm \rho} \bigcup \{ C^{k'}_{\langle \bm \rho \rangle} \}$, any level-$k'$ active transition rule $M^{\left( \alpha, \beta \right),k'}_{\bm \rho}$ which is applied $\textnormal{diam} \left( G_{\mathcal{A}} \right)$ successive times is successful, we say that our algorithm is $k$-successful.
\label{def_k_successful}
\end{definition}

As we will show below,  these three properties ensure that the level-$k$ transition rules react to level-$k$ causally-linked clusters essentially the same way as  level-$0$ rules react to level-$0$ actual error at the physical level. Loosely speaking, $k$-locality asserts that local errors are kept local by the updates rules. $k$-faithfulness asserts that in the absence of high-level errors, the transition rules will bring all excitations in the centre of the colony rapidly enough to trigger a valid non-trivial syndrome. Note that the excitations caused by low-levels clusters are excluded from this demand since only the coarse grained error cluster is considered, {\it c.f.}, Def.~\ref{def:coarse_grained_cluster}. Lastly, $k$-successfulness essentially demands that the transition rules correct errors, but allows repeated attempts to take into account the possibility that a non-abelian anyon and its anti-particle do not fuse to the vacuum. These consequences are stated formally in the following lemma.

\begin{lemma}
Assuming the correction algorithm is $k$-local, $k$-faithful, and $k$-successful, then for $k' \leq k$, it also has the following properties:
\begin{enumerate}
\item  If $C^{k'}_{\bm \rho} \bigcup \{ C^{k'}_{\langle \bm \rho \rangle} \}$ is free of level-$k'$ or larger actual errors,  that $ C^{k'}_{\bm \rho} $ is not affected by transition rules of level $k'$ or greater, and that $\{ C^{k'}_{\langle \bm \rho \rangle} \}$ are not affected by transition rules of level greater than $k'$ for a full level-$k'$ working period  labelled $\tau$, then $C^{k'}_{\bm \rho} \in \mathcal{T}_{\left( k' \right)} \left( \tau \right) \iff C^{k'}_{\bm \rho} \in \mathcal{T}_{\left( k' \right)} \left( \tau + 1 \right)$.
\label{conjecture_k_trajectory_evolution}
\item If  $C^{k'}_{\bm \rho} \bigcup \{ C^{k'}_{\langle \bm \rho \rangle} \}$ is free of level-$k'$ or larger actual errors for a full level-$k'$ working period  labelled $\tau$, then $s_{k',c} \left( \bm \rho, \tau \right)$ and $s_{k',n} \left( \bm \rho,  \tau \right)$ are valid.  
\label{lemma_valid_syndrome}
\item If $E$ is a level-$k'$ actual error and ${\rm lvl}\left(\mathcal{C} \left( E ; t \right)\right) \leq k'$ for all $t$, then the causal region of $\mathcal{C} \left( E \right)$ is contained within a space-time box of size $2Q^{k'} \times 2Q^{k'} \times \left( 2  \textnormal{diam} \left( G_{\mathcal{A}} \right) + 3 \right) U^{k'}$.
\label{lemma_causal_region}
\end{enumerate}
\end{lemma}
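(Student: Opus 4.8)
The plan is to prove the three conclusions simultaneously by induction on $k'$, running from the base case $k'=0$ (where level-$0$ colonies are single physical sites, a valid syndrome simply reports the true charge, and the statements reduce to direct bookkeeping of the physical transition rules) up to $k$. At each level the three hypothesised properties do the heavy lifting: $k$-locality (Def.~\ref{def_k_local}) supplies spatial confinement, $k$-faithfulness (Def.~\ref{def_k_faithful}) guarantees the charge is concentrated at the colony centre early in a working period, and $k$-successfulness (Def.~\ref{def_k_successful}) guarantees eventual fusion to the vacuum. The recurring mechanism that closes the induction is that, by conclusion~\ref{lemma_causal_region} applied at level $k'-1$, every causally-linked cluster of level $<k'$ has a causal region confined to a time window of length $O(U^{k'-1})$; on the $U^{k'}$ timescale of a level-$k'$ working period such clusters are transient and cannot corrupt the level-$k'$ syndrome or the coarse-grained occupation of a colony. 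I would isolate this ``fast correction of low-level clusters'' as the workhorse corollary of the inductive hypothesis and invoke it repeatedly. Within a fixed level $k'$ the internal order is conclusion~\ref{lemma_valid_syndrome}, then conclusion~\ref{conjecture_k_trajectory_evolution}, then conclusion~\ref{lemma_causal_region}.

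For conclusion~\ref{lemma_valid_syndrome}, since $C^{k'}_{\bm\rho}\cup\{C^{k'}_{\langle\bm\rho\rangle}\}$ meets no level-$k'$-or-larger error for the whole period, $k$-faithfulness yields a time $t_c\le \tau U^{k'}+(1-f_c)bU^{k'-1}$ after which every point of the coarse-grained trajectory domain inside the colony sits at its physical centre. Concentrating all of $\mathcal{C}^{(k)}$'s charge at one site makes the level-$k'$ charge distribution $\rho^{k'}_{\bm\rho}(a,t)$ deterministic, i.e.\ equal to $\delta_{a,c}$ for a definite $c$. The remaining task is to show the two-level binning procedure reports exactly this $c$: the lead-time window, the low-level fluctuations along the colony border, and the measurement errors together contaminate only a bounded number of bins, bounded by the count $4(3\,\textnormal{diam}(G_{\mathcal{A}})+1)Q+1$ that also fixes $(1-f_c)b$. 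With the chosen thresholds $f_cb=b-(4(3\,\textnormal{diam}(G_{\mathcal{A}})+1)Q+1)$ and $f_nb=4(3\,\textnormal{diam}(G_{\mathcal{A}})+1)Q+1$, the threshold vote therefore lands on $c$ for both $s_{k',c}$ and $s_{k',n}$, establishing validity. The alignment of these constants with $t_c$ is precisely the point anticipated after Def.~\ref{def:coarse_grained_cluster} and in the remark following the definition of $\mathcal{T}_{(k)}(\tau)$.

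For conclusion~\ref{conjecture_k_trajectory_evolution}, I would track how a level-$\ge k'$ excitation can enter or leave $C^{k'}_{\bm\rho}$ during the period. It cannot be born locally, since the neighbourhood meets no level-$\ge k'$ error; it cannot be carried across the colony boundary into or out of the central colony by a level-$\ge k'$ transition rule, which is excluded by hypothesis for $C^{k'}_{\bm\rho}$ (for the neighbours only rules of level $>k'$ are excluded, which suffices because it is the central colony whose occupation must change); and by $k$-locality the level-$<k'$ rules keep each low-level cluster inside a $2\times2$ block of low-level colonies, so they neither transport nor manufacture the coarse-grained charge between level-$k'$ colonies. Using the time window $[\tau U^{k'}+(1-f_c)bU^{k'-1},(\tau+1)U^{k'})$ in the definition of $\mathcal{T}_{(k')}$ to discard transient low-level overlaps, both implications of the equivalence then follow.

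Finally, for conclusion~\ref{lemma_causal_region} I would separate the spatial and temporal extents. Spatially, since $\textnormal{lvl}(\mathcal{C}(E;t))\le k'$ for all $t$, no transition rule of level $>k'$ ever acts on the cluster---otherwise the cluster would absorb a higher-level error and exceed level $k'$---so Proposition~\ref{prop_tree} applies (after restricting to the cluster via Proposition~\ref{prop_single_cluster}) and the causal region is contained in the trajectory domain; $k$-locality confines that domain to a $2\times2$ block of level-$k'$ colonies, a $2Q^{k'}\times2Q^{k'}$ spatial box. Temporally, $k$-faithfulness costs at most one working period to concentrate the charge and raise a valid non-trivial syndrome, after which $k$-successfulness guarantees that the active rule, applied $\textnormal{diam}(G_{\mathcal{A}})$ consecutive times, fuses the charge to the vacuum; because the cluster may straddle the $2\times2$ block in both directions, up to two such concentrate-move-fuse rounds are needed, giving $2\,\textnormal{diam}(G_{\mathcal{A}})$ working periods, with three further periods covering the faithfulness lead-time and the final period confirming a valid vacuum syndrome, for a total time extent $(2\,\textnormal{diam}(G_{\mathcal{A}})+3)U^{k'}$. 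I expect this temporal count to be the main obstacle: unlike the abelian case, each failed fusion leaves a fresh non-trivial charge that must be re-concentrated and re-fused, so one must verify that no charge escapes the $2Q^{k'}\times2Q^{k'}$ box during the up to $2\,\textnormal{diam}(G_{\mathcal{A}})$ attempts, and that the directed-acyclic-graph argument of Sec.~\ref{sec:defs} genuinely caps the number of attempts per round at $\textnormal{diam}(G_{\mathcal{A}})$.
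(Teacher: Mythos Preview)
Your approach matches the paper's: simultaneous induction on $k'$ with the three conclusions feeding one another, the base case reducing to bookkeeping at physical sites. The internal order you choose (2, 1, 3 rather than the paper's 1, 2, 3) is harmless, since each of conclusions~\ref{conjecture_k_trajectory_evolution} and~\ref{lemma_valid_syndrome} at level $k'+1$ needs only the lower-level instances together with $(k'+1)$-faithfulness; the paper's order merely lets conclusion~\ref{lemma_valid_syndrome} reuse verbatim the ``charge returns to the centre within $2\,\textnormal{diam}(G_{\mathcal{A}})+3$ lower-level periods'' argument developed inside the proof of conclusion~\ref{conjecture_k_trajectory_evolution}. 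Your sketches for those two conclusions are thinner than the paper's---in particular you do not explicitly follow a level-$(k'-1)$ error that lands on the central level-$(k'-1)$ colony and interacts with the high-level charge sitting there---but the structure is right.

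There is one genuine gap, in your treatment of conclusion~\ref{lemma_causal_region}. You assert that no transition rule of level $>k'$ acts on the cluster because ``otherwise the cluster would absorb a higher-level error and exceed level $k'$''. This presupposes that every level-$>k'$ transition rule is \emph{caused by} a level-$>k'$ actual error, which is exactly what must be proved: a level-$(k'+1)$ rule is triggered by a non-trivial level-$(k'+1)$ syndrome, and nothing a priori prevents the level-$k'$ cluster $\mathcal{C}(E)$ itself from producing that syndrome if it lingers long enough at a level-$(k'+1)$ colony centre. Your appeal to Proposition~\ref{prop_tree} (which explicitly assumes the absence of higher-level rules) is therefore circular until this is excluded. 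The paper breaks the circle with a bootstrap: it first derives the $(2\,\textnormal{diam}(G_{\mathcal{A}})+3)U^{k'}$ temporal bound under the \emph{provisional} assumption that no level-$>k'$ rule acts; then it feeds this bound, together with the chosen $f_c$ and $f_n$, back in to show that the cluster is too short-lived for enough level-$k'$ bins to report non-vacuum, so the level-$(k'+1)$ syndrome stays trivial and no level-$(k'+1)$ rule fires; only then does it conclude that any level-$>k'$ rule that does act must originate from a higher-level cluster, contradicting the hypothesis on $\textnormal{lvl}(\mathcal{C}(E;t))$. You need to insert this two-step argument before invoking Proposition~\ref{prop_tree}.
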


Note that with our choice of $b > 2\left[4 \left( 3 \textnormal{diam} \left( G_{\mathcal{A}} \right) + 1 \right) Q + 1 \right] $, the box containing a level-$k'$ causally-linked cluster in the third consequence is sufficiently small to prevent interaction with another level-$k'$ actual error.

\begin{proof}

We prove these properties recursively: we first show that they each hold individually for $k'=0$, and then assume that they all hold for $k'$ to prove that they each hold for $k'+1$, up to $k' + 1 = k $. 

%The first two properties are proven in a similar way. For $k'=0$, these properties follow directly from the definition of the transition rules and the absence of errors. Suppose now that the properties hold true for $k'-1$. In the absence of errors, all the charges concentrate in the level $k'$ colony centres after at most $(Q+1)U^{k'-1}$ time steps. Indeed, a level-$(k'-1)$ transition rule is applied every $U^{k'-1}$ time steps and there are at most $Q+1$ level-$(k'-1)$ colonies to cross in order to reach the center. The presence of low-level errors could possibly deviate the charge towards the center of another colony or, once it has reached the colony center, it could move the charge away from the colony center.

%The possibility of deviating the charge towards another colony can only occur at the beginning of the working period, when the charge is near the  colony boundary. Indeed, $k$-faithfullness guarantees that the charge will reach the center at most at time $\left( 1 - f_{c} \right) b U^{k'-1}$. But by definition, the coarse-grained trajectory is insensitive to the position of the charges up to that time, and so it is not sensitive to low-level errors. 

\begin{enumerate}

%%% Statement 1

\item Suppose that $C^{0}_{\bm r}$ is neither affected by any actual errors at time $t+1$ nor by any level-$0$ or higher transition rules at time $t + \frac{1}{2}$. The charge state of $C^{0}_{\bm r}$ at time $t+1$ is the same as that at time $t$, since there exists no other process that can change the topological charge contained in $C^{0}_{\bm r}$ between time $t$ and $t+1$. We thus have that $C^{0}_{\bm r} \in \mathcal{T}_{\left( 0 \right)} \left( t  \right) \iff C^{0}_{\bm r} \in \mathcal{T}_{\left( 0 \right)} \left( t + 1  \right)$.

%Assume now that the statement if true for $k' < k $. We show that it is also true for $k' + 1$. 
%Consider the level-$\left( k'+1 \right)$ colony $C^{k' + 1}_{\bm \rho}$ and the two consecutive level-$\left( k' + 1 \right)$ working periods starting at times $tU^{k'+1}$ and $\left( t + 1 \right)U^{k'+1}$ during which no level-$\left( k'+1 \right)$ or larger actual errors and no level-$\left( k' + 2 \right)$ or larger transition rules affects any of the level-$\left( k' + 1 \right)$ colonies in $\{ C^{k' + 1}_{\bm \rho} \bigcup \{ C^{k'+1}_{\langle \bm \rho \rangle} \} \}$. Suppose furthermore that no transition rule of the from $M^{\left( \alpha, \beta \right),k'+1}_{x,y}$ or $M^{\left( \alpha, \beta \right),k'+1}_{x - \alpha, y - \beta}$ is applied at time $\left( t + 1 \right)U^{k'+1} - \frac{1}{2}$.

Consider first the case $C^{k' + 1}_{\bm \rho} \in \mathcal{T}_{\left( k'+1 \right)} \left( \tau \right)$. Using $\left( k' + 1 \right)$-faithfulness, we first notice that there exists $t_{c} \leq \left(1-f_{c} \right)bU^{k'} + \tau U^{k'+1}$ such that for any $t' \geq t_{c}$, any point in $\displaystyle  \mathcal{L}_{t'} \left( \mathcal{C}^{\left( k+1 \right)} \left( t_{c} \right)  \vert  c \left( t' \right), \mathcal{E} \right) \bigcap C^{k' + 1}_{\bm \rho} $ is in the physical site at the centre of $C^{k' + 1}_{\bm \rho}$. Since $C^{k' + 1}_{\bm \rho} \in \mathcal{T}_{\left( k'+1 \right)} \left( \tau \right)$, we can always choose $t_{c}$ such that $c \left( {\bf r}_0, t_{c} \right) \neq 1$, where ${\bf r}_0$ labels the centre of $C^{k' + 1}_{\bm \rho}$. We now need to consider how errors of level $k'$ or less could move the charge away from the colony centre after time $t_c$.

Consider the level-$k'$ actual error $F$ containing its first error happening at time $t_{f}$ with $t_{c} \leq t_{f} < \left( \tau + 1 \right) U^{k'+1}$, such that for any $t'$ and any charge measurements $c \left( t' \right)$, the trajectory \newline $\mathcal{T}_{t'} \left( \mathcal{C} \left( F ; t' \right) \vert  c \left( t' \right), \mathcal{E} \right)  \bigcap \left( C^{k'}_{\bm \rho} \bigcup \{ C^{k'}_{\langle \bm \rho \rangle} \} \right) = \emptyset$, with $C^{k'}_{\bm \rho }$ the level-$k'$ colony at the centre of $C^{k' + 1}_{\bm \rho}$. Using $k'$-locality, we know that the trajectory domain of this error will never overlap with $C^{k'}_{\bm \rho}$. Moreover, using the inductive hypothesis on the third consequence, we know that this error will be corrected before it can interact with another level-$k'$ error.

Suppose, on the other hand, that there exists $t'$ and $c \left( t' \right)$  such that $\break \mathcal{T}_{t'} \left( \mathcal{C} \left( F ; t' \right) \vert c \left( t' \right), \mathcal{E} \right) \bigcap \left( C^{k'}_{\bm \rho} \bigcup \{ C^{k'}_{\langle \bm \rho \rangle} \} \right) \neq \emptyset$. We first note that $F$ is $\break \left( aQ^{k'}, aQ^{k'}, bU^{k'} \right)$-separated from any other level-$k'$ actual error. We thus have that for the $\left( b - 1 \right)$ following level-$k'$ working periods after the last error of $F$, we need not consider other level-$k'$ actual errors. Using $k'$-locality (taking into account the central charge), we find that for any $t' \in \left[ t_f, t_f + bU^{k'} \right]$, $\mathcal{L}_{t'} \left( \mathcal{C} \left( F ; t' \right) \vert c \left( t' \right), \mathcal{E} \right) \bigcap C^{k' + 1}_{\bm \rho}$ is contained within a block $\mathcal{B}'$ of $2 \times 2$ level-$k'$ colonies which we suppose contains $C^{k'}_{\bm \rho}$ (the other case is handled similarly as in the previous paragraph). Using the inductive hypothesis on the second consequence, we find that at least for the next $\left( b - 1 \right)$ level-$k'$ working periods after the last error of $F$, the level-$k'$ syndromes are valid in $\mathcal{B}'$. Because $\mathcal B'$ is $2\times 2$ and using $k'$-faithfulness, it will take at most $ 2 \text{diam} \left( G_{\mathcal{A}} \right) + 3 $ level-$k'$ working periods before the set of locations $\mathcal{L}_{t'} \left( \mathcal{C} \left( F ; t' \right)  \vert c \left( t' \right), \mathcal{E}  \right) \bigcap C^{k' + 1}_{\bm \rho}$ is null  or contained in the physical site at the centre of $C^{k' + 1}_{\bm \rho}$. This is because after the level-$k'$ syndromes in $\mathcal B'$ are valid, which can take up to 2 level-$k'$ working period, the application of 2 rounds of successful active level-$k'$ transition rules suffices to bring the trajectory inside $C^{k'}_{\bm \rho}$ (see appendix \ref{AppendixB}). An additional working period ensures that all the non trivial anyons  reach the physical site at the center of $C^{k' + 1}_{\bm \rho}$, using $k'$-faithfulness. The same analysis works for lower-level errors as well.

The same reasoning can also be used for the case where $C^{k' + 1}_{\bm \rho} \notin \mathcal{T}_{\left( k'+1 \right)} \left( t \right)$. In that case, we find that the central site of $C^{k' + 1}_{\bm \rho}$ contains the vacuum at some time $t_c$ and will return to the vacuum at most $ 2 \text{diam} \left( G_{\mathcal{A}} \right) + 3 $ time steps after the appearance of a level-$k'$ or less error. Moreover, the previous argument shows that errors of level $k'$ or less cannot transport the charge from the centre of a neighbouring  $(k'+1)$-colony into  $C^{k' + 1}_{\bm \rho}$, so $C^{k' + 1}_{\bm \rho} \notin \mathcal{T}^{\left( k'+1 \right)} \left( t + 1 \right)$.

%%% Statement 2

\item The property is true for $k' = 0$, since if no error affects a site, the reported level-$0$ syndrome is valid. 

Suppose next that the property is true for $k' < k$, and consider $C^{k' + 1}_{\bm \rho}$, a level-$( k'+1 )$ colony. Using $(k' + 1)$-faithfulness, there exists $t_c \leq \tau U^{k'} + (1 - f_c )bU^{k'}$ such that for any charge measurement $c \left( t_c \right)$, any point in $\displaystyle \break \mathcal{L}_{t_c} \left( \mathcal{C}^{\left( k+1 \right)} \left( t_{c} \right)  \vert  c \left( t_c \right), \mathcal{E} \right) \bigcap C^{k' + 1}_{\bm \rho} $ is in the physical site at the centre of $C^{k' + 1}_{\bm \rho}$. After time $t_c$ and up to time $(\tau+1)U^{k'+1}$, if no level-$k'$ actual error affects the level-$k'$ colony at the centre of the level-$(k'+1)$ colony $C^{k' + 1}_{\bm \rho}$, the level-$k'$ syndromes reported at the centre of the level-$(k'+1)$ colony are valid, by using the inductive hypothesis on property 2. Furthermore, by using the inductive hypothesis of property 1 for $k'$, level-$(k' -1 )$ and lower actual errors cannot bring the $k'$-trajectory outside of the level-$k'$ colony at the centre of $C^{k' + 1}_{\bm \rho}$. Given the procedure to report level-$(k'+1)$ syndromes used by our algorithm, the level-$(k'+1)$ syndromes reported by $C^{k' + 1}_{\bm \rho}$ at the end of the level-$(k'+1)$ working period $t$ is thus valid. 

Suppose, on the other hand, that a level-$k'$ actual error $F$ affects the level-$k'$ colony at the centre of $C^{k' + 1}_{\bm \rho}$ with its first error happening at time $t_f$ such that $\tau U^{k'+1}+ t_{c} \leq t_f  <(\tau+1) U^{k'+1}$. First note that by definition, each of the $b$ time bin of duration $bU^{k'}$ of the level-$(k'+1)$ working period can contain at most one of these errors. While this error can move the charge away from the centre of the colony, using the same argument as above, we know that after at most $ 2 \text{diam} \left( G_{\mathcal{A}} \right) + 3 $ level-$k'$ working periods later, the set of locations $\mathcal{L}_{t'} \left( \mathcal{C} \left( F ; t' \right)  \vert c \left( t' \right), \mathcal{E}  \right) \bigcap C^{k' + 1}_{\bm \rho}$ is null  or contained in the physical site at the centre of $C^{k' + 1}_{\bm \rho}$. Given our choice of $f_{c} b =  b - \left( 4 \left( 3 \textnormal{diam} \left( G_{\mathcal{A}} \right) + 1 \right) Q + 1 \right)$ and $f_{n} b = 4 \left( 3\textnormal{diam} \left( G_{\mathcal{A}} \right) + 1 \right) Q + 1 $, this leaves enough time in each of the remaining time bins (those after $t_c$) to report a valid syndrome, and hence the level-$(k'+1)$ syndrome is valid.

%%% Statement 3

\item We first ignore transition rules of level higher than $\left( k' + 1 \right)$. Suppose that $F$ is the only level-$(k' + 1)$ actual error of $\mathcal{E}$, and let $t_{0}$ be the time of the first error in $F$. In that case, $\tilde{\mathcal C} \left( F; t' \right) = \mathcal{C} \left( F ; t' \right)$. Using $(k'+1)$-locality and proposition \ref{prop_tree}, we find that the causal region of $\mathcal{C} \left( F, t' \right)$ is contained within a block $\mathcal{B}$ of $2 \times 2$ level-$(k'+1)$ colonies.

The previous property ensures that for the $b -1$ complete level-$(k'+1)$ working periods after $F$, level-$(k'+1)$ syndromes in $\mathcal{B}$ and neighbouring level-$(k'+1)$ colonies are valid. Using  property 1 for $k'+1$, the $(k'+1)$-trajectory of $\mathcal{C} \left( F, t' \right)$ can only be modified by the application of level-$(k'+1)$ transition rules. Using $(k'+1)$-successfulness, active level-$(k'+1)$ transition rules in $\mathcal{B}$ are successfully applied with at most $\textnormal{diam} \left( G_{\mathcal{A}} \right)$ tries. By inspection of the transition rules, for any set of charge measurements $c$, after applying at most two rounds of successful level-$(k'+1)$ transition rules, $\mathcal{T}_{t'} \left( \mathcal{C} \left( F ; t' \right)\vert c \left( t' \right) , \mathcal{E}  \right)$ is contained within a single level-$(k'+1)$ colony $C^{k' + 1}_{\bm \rho}$, where $t' \geq t_{0} + \left( 2 \textnormal{diam} \left( G_{\mathcal{A}} \right) + 2 \right) U^{k}$ (see appendix \ref{AppendixB}). Using $(k'+1)$-faithfulness, we find that there exists $t_c \leq t_{0} + \left( 2 \textnormal{diam} \left( G_{\mathcal{A}} \right) + 3 \right) U^{k'+1}$ such that $\mathcal{T}_{t_c} \left( \mathcal{C} \left( F; t_c \right) \vert c \left( t_c \right) , \mathcal{E}  \right)$ is contained within a single physical site at the centre of $C^{k' + 1}_{\bm \rho}$. Provided the linear size of the system is larger than $2Q^{k'+1}$ (so the worldline of the charges is homologically trivial), the resulting charge is the vacuum by charge conservation. The same reasoning can be repeated using the set of actual errors $\mathcal{E} \backslash \{ E_{(i)} \}$, for any subset of actual errors $\{ E_{(i)} \} \subseteq \mathcal{C} \left( F \right)$. Using proposition \ref{prop_tree}, we conclude that the causal region of $\mathcal{C} \left( F \right)$ is contained within a space-time box of size $2Q^{k'+1} \times 2Q^{k'+1} \times \left( 2  \textnormal{diam} \left( G_{\mathcal{A}} \right) + 3 \right) U^{k'+1}$.

If there are more than one such level-$(k'+1)$ actual errors affecting the system, the previous reasoning can be performed for each of their respective causally-linked clusters. Consider $F$, the first such level-$(k'+1)$ actual error to happen (pick any one if more than one such error happen simultaneously). Since level-$(k'+1)$ actual errors are $\left( aQ^{k'+1}, aQ^{k'+1}, bU^{k'+1} \right)$-separated and with our choice of $b > 2\left[4 \left( 3 \textnormal{diam} \left( G_{\mathcal{A}} \right) + 1 \right) Q + 1\right] $, the above argument shows that no actual error in the causally-linked cluster of $F$ can have interacted with any other level-$(k'+1)$ actual error.

We now consider transition rules of level higher than $(k'+1)$. We show that level-$(k'+1)$ causally-linked clusters cannot cause the application of transition rules of level larger than $(k'+2)$. Consider a level-$(k'+2)$ colony $C^{k'+2}_{\bm \rho_{k'+2}}$ and working period $\tau_{k'+2}$. Suppose that no actual error of level larger than $(k'+1)$ happens. Using the validity of property 2 for $(k'+1)$, if no level-$(k'+1)$ actual error affects the level-$(k'+1)$ colony $C^{k'+1}_{\bm \rho_{k'+1}}$ at the centre of $C^{k'+2}_{\bm \rho_{k'+2}}$ during a level-$(k'+1)$ working period, the corresponding level-$(k'+1)$ syndrome reports the vacuum. Suppose, on the other hand, that a level-$(k'+1)$ actual error affects  $C^{k'+1}_{\bm \rho_{k'+1}}$ during any bin of $b$ level-$(k'+1)$ working periods of $\tau_{k'+2}$, except the last one. Since a level-$(k'+2)$ transition rule can only be applied after the last bin of $b$ level-$(k'+1)$ working periods, no level-$(k'+2)$ or higher transition rule can be applied during the considered time bin. We can thus use the previous reasoning and find that the causal region of such a level-$(k'+1)$ causally-linked cluster is contained within a box of size $2Q^{k'+1} \times 2Q^{k'+1} \times \left( 2  \textnormal{diam} \left( G_{\mathcal{A}} \right) + 3 \right) U^{k'+1}$. Given the values of $f_{c} = \frac{ b - \left( 4 \left( 3 \textnormal{diam} \left( G_{\mathcal{A}} \right) + 1 \right) Q + 1 \right) }{b}$ and $f_{n} = \frac{4 \left( 3\textnormal{diam} \left( G_{\mathcal{A}} \right) + 1 \right) Q + 1}{b} $, we thus find that all such bins of $b$ level-$(k'+1)$ working periods report the vacuum. Since the $b-1$ first level-$k'+1$ working periods report the vacuum, the reported level-$(k'+2)$ syndromes for $C^{k'+2}_{\bm \rho_{k'+2}}$ for working period $\tau_{k'+2}$ are the vacuum. Noting that no level-$(k'+2)$ transition rule is applied if the vacuum charge is reported in a level-$(k'+2)$ colony completes the proof for level-$(k'+2)$ transition rules. Since every level-$(k'+2)$ working period reports the vacuum charge, the same conclusion is reached for higher level transition rules as well.

Lastly, the transition rules of level larger than $k'+1$ cannot affect level-$(k'+1)$ causally-linked clusters. Suppose that a level-$(k'+1)$ causally-linked cluster is affected by the application of a level-$(k'+2)$ or larger transition rule. Then previous reasoning shows that transition rule is caused by a level-$(k'+2)$ or larger causally-linked cluster. By definition of a causally-linked cluster, the actual errors being affected by the transition rule and the actual errors having caused the application of the transition rule are part of the same causally-linked cluster. Since this causally-linked cluster is of level-$(k'+2)$ or larger, we find a contradiction.

\end{enumerate}
\end{proof}

\begin{lemma}
The correction algorithm we present is $0$-local, $0$-faithful and $0$-successful.
\label{lemma_0_level}
\end{lemma}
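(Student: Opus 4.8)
The plan is to verify the three properties of Definitions~\ref{def_k_local}, \ref{def_k_faithful} and~\ref{def_k_successful} directly at level $0$, leaning on two structural facts. First, a level-$0$ colony is a single physical site and therefore coincides with its own centre. Second, since there are no errors of negative level, the restricted error history of a level-$0$ actual error $E$ is $\tilde{\mathcal{E}}\left(E\right)=\{E\}$ and the restricted cluster is $\tilde{\mathcal{C}}\left(E;t\right)=\{E\}$, so the only excitations present are those produced by $E$ itself. A charge error nucleates a single pair from the vacuum, whose joint state lies in $V^{a\bar{a}}_{1}$ and thus carries total charge $1$; because no other anyon exists in this restricted setting, nothing can braid with the pair and its total charge stays $1$ along every fusion history, so bringing the two anyons onto a common site annihilates them with certainty. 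A measurement error reduces to this case: by the noise model, a faulty syndrome only causes the transition rules to nucleate a vacuum pair on the affected edge on the following step.

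For $0$-locality I would bound the trajectory domain $\mathcal{D}_{t}\left(\tilde{\mathcal{C}}\left(E;t\right),\tilde{\mathcal{E}}\left(E\right)\right)$ of an isolated level-$0$ error $E$. By the level-$0$ rules (Appendix~\ref{AppendixA}), a detected charge with a non-trivial neighbour is brought together with it, while an isolated charge walks toward its colony centre. The vacuum pair produced by $E$ occupies two adjacent sites (immediately for a charge error, after one step for a measurement error), so the two anyons detect one another and are moved onto a common site, where they fuse to the vacuum; inspecting the resulting trajectory (Appendix~\ref{AppendixB}) shows the occupied sites never leave a $2\times 2$ block of physical sites. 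Since the pair carries total charge $1$ and cannot braid with anything, every fusion history gives the same trivial outcome, so the union over all fusion histories and over all removals of subsets of $\{E\}$ stays inside this $2\times 2$ block, which is exactly $0$-locality.

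For $0$-faithfulness, a level-$0$ colony $C^{0}_{\bm\rho}$ is a single site equal to its own centre, so any point of $\mathcal{D}_{t}\left(\mathcal{C}^{\left(0\right)}\left(t_{c}\right),\mathcal{E}\right)$ lying in $C^{0}_{\bm\rho}$ is automatically at its centre; taking $t_{c}=\tau U^{0}$ then satisfies Definition~\ref{def_k_faithful} trivially. For $0$-successfulness, the hypotheses of Definition~\ref{def_k_successful} exclude all actual errors and all higher-level rules on $C^{0}_{\bm\rho}\cup\{C^{0}_{\langle\bm\rho\rangle}\}$ during the window, so an active level-$0$ rule acts on a non-trivial charge that is again one member of an isolated vacuum pair. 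Bringing it to its partner yields the vacuum with certainty, so the source site reports the vacuum at the next step and the rule is successful (Definition~\ref{def_success_rule}) in a single attempt, hence a fortiori within $\textnormal{diam}\left(G_{\mathcal{A}}\right)$ attempts.

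The only delicate points are geometric and combinatorial rather than conceptual: one must read off from the explicit rules of Appendices~\ref{AppendixA} and~\ref{AppendixB} that the two anyons of an isolated pair are genuinely paired and annihilated without first marching separately to the colony centre, so that the whole trajectory is confined to a $2\times 2$ block, and one must account for the one-step delay through which a measurement error becomes a true vacuum pair. Everything else is immediate from the single-site nature of level-$0$ colonies and from the total charge of an isolated level-$0$ error being the vacuum, which trivialises the branching over fusion histories at this level. In particular, the acyclic graph $G_{\mathcal{A}}$ and the budget of $\textnormal{diam}\left(G_{\mathcal{A}}\right)$ attempts play no genuine role in the base case and become essential only in the inductive step of the previous lemma.
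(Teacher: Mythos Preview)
Your approach matches the paper's: reduce $0$-locality to the $2\times 2$ confinement verified case-by-case in Appendix~\ref{AppendixB} after noting $\tilde{\mathcal{C}}(E;t)=\{E\}$, observe that $0$-faithfulness is vacuous since a level-$0$ colony is a single site, and obtain $0$-successfulness by inspection of the level-$0$ rules. Two small inaccuracies are worth flagging even though they do not affect the outcome. First, a level-$0$ actual error is a \emph{set} of elementary errors fitting in a unit box and may populate up to four sites, not just a single pair; Appendix~\ref{AppendixB} explicitly treats the $2$-, $3$- and $4$-charge configurations, and your argument ultimately rests on that enumeration anyway. Second, for $0$-successfulness the paper's reason is simpler than yours: the level-$0$ move operator physically displaces the charge off the source site, so that site holds the vacuum after one application regardless of where the charge came from. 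Your ``isolated vacuum pair'' picture is neither needed nor accurate here, since the hypothesis of Definition~\ref{def_k_successful} only forbids errors and higher-level rules on the $3\times 3$ neighbourhood during the window and says nothing about the provenance of the charge being moved.
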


\begin{proof}
Let $E$ be a level-$0$ actual error and let $\tilde{\mathcal{C}} \left( E ; t \right)$ be its partial causally-linked cluster at time $t$. We first note that by definition, $\tilde{\mathcal{C}} \left( E; t \right) = E$, and that $\mathcal{\tilde E} \left( E \right) = E$. It is straightforward to verify that for any time $t$ and any charge measurement $c \left( t \right)$, we have that $\mathcal{T}_{t} \left( E\vert c \left( t \right), E\right)$ is contained within a unique block of $2 \times 2$ sites (see appendix \ref{AppendixB}). We thus have that our algorithm is $0$-local.

Since a level-$0$ colony $C^{0}_{\bm r}$ consists of a single site, it is obvious that our algorithm is $0$-faithful.

Inspection of level-$0$ transition rules reveals that an active level-$0$ transition rule is always applied successfully at its first try, showing $0$-successfulness.
\end{proof}

\begin{lemma}
Suppose that the algorithm is $k$-local, $k$-faithful and $k$-successful. Then it is also $\left( k + 1 \right)$-faithful.
\label{lemma_faithful}
\end{lemma}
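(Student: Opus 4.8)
The plan is to establish the faithfulness property of Definition~\ref{def_k_faithful} for the single new value $k'=k+1$, since the cases $k'\le k$ are exactly the $k$-faithfulness hypothesis. So I fix a level-$(k+1)$ colony $C^{k+1}_{\bm\rho}$ together with its eight neighbours, all free of level-$(k+1)$ or larger actual errors throughout the working period $[\tau U^{k+1},(\tau+1)U^{k+1})$, and I write $\mathbf r_0$ for its central physical site. The goal is to exhibit a time $t_c\le \tau U^{k+1}+(1-f_c)bU^{k}$ after which the coarse-grained trajectory domain $\mathcal D_t(\mathcal C^{(k+1)}(t_c),\mathcal E)$, restricted to $C^{k+1}_{\bm\rho}$, sits entirely on $\mathbf r_0$. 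The first observation driving the timing is that, with $(1-f_c)b = 4(3\,\textnormal{diam}(G_{\mathcal A})+1)Q+1$, the budget $(1-f_c)bU^{k}$ equals $4(3\,\textnormal{diam}(G_{\mathcal A})+1)Q+1$ level-$k$ working periods, which by the standing assumption $b>2[4(3\,\textnormal{diam}(G_{\mathcal A})+1)Q+1]$ is strictly less than one bin; consequently no level-$(k+1)$ transition rule fires after the period boundary until the period ends, so the concentration onto $\mathbf r_0$ must be driven entirely by rules of level $\le k$. Moreover, since level-$(k+1)$ rules move charge only between level-$(k+1)$ colony centres, the part of the $\mathcal C^{(k+1)}$-trajectory that lives inside $C^{k+1}_{\bm\rho}$ but away from $\mathbf r_0$ is produced purely by level-$\le k$ dynamics.

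The core argument treats $C^{k+1}_{\bm\rho}$ as a $Q\times Q$ array of level-$k$ subcolonies, i.e.\ as the level-$k$ renormalized lattice on which the level-$k$ rules act. I invoke the three consequences of the structural lemma, available because the algorithm is $k$-local, $k$-faithful and $k$-successful. The validity consequence (item~\ref{lemma_valid_syndrome}) supplies valid level-$k$ syndromes in every subcolony that is free of level-$k$ errors for a working period; the subcolonies that do carry level-$k$ errors host only \emph{isolated} level-$k$ clusters, which are excluded from $\mathcal C^{(k+1)}$ by the coarse graining of Definition~\ref{def:coarse_grained_cluster} and which, by the causal-region consequence (item~\ref{lemma_causal_region}), are confined to a box of size $2Q^{k}\times 2Q^{k}\times(2\,\textnormal{diam}(G_{\mathcal A})+3)U^{k}$ and thus corrected before they can spoil a syndrome for more than $O(\textnormal{diam}(G_{\mathcal A}))$ level-$k$ working periods. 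With syndromes valid, the centre-seeking level-$k$ rules herd every $\mathcal C^{(k+1)}$-anyon toward the central level-$k$ subcolony; $k$-successfulness guarantees that each attempted fusion succeeds within $\textnormal{diam}(G_{\mathcal A})$ tries (this is where the directed-acyclic-graph structure of Section~\ref{sec:defs} enters, each failure advancing the dragged charge along $G_{\mathcal A}$ until it becomes abelian), and $k$-faithfulness then collapses the central subcolony's charge onto $\mathbf r_0$. Counting working periods: the level-$k$ diameter of $C^{k+1}_{\bm\rho}$ is $O(Q)$, each displacement costing one level-$k$ working period (by Appendix~\ref{AppendixB}, two rounds of successful rules merge a $2\times2$ neighbourhood), each of the $\le\textnormal{diam}(G_{\mathcal A})$ fusion retries costing a further $O(Q)$ periods, plus the $O(\textnormal{diam}(G_{\mathcal A}))$ overhead for syndrome validity; the factor $4(3\,\textnormal{diam}(G_{\mathcal A})+1)Q+1$ is chosen precisely to dominate this total, which fixes $t_c$.

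It remains to secure the requirement that the containment hold for \emph{all} $t\ge t_c$ in the period, not merely at $t_c$. Here I argue that once the $\mathcal C^{(k+1)}(t_c)$-charge has been gathered onto $\mathbf r_0$ it either fuses to the vacuum or persists there: no level-$(k+1)$ rule fires to drag it out before the period ends, a lone central charge is a fixed point of the centre-seeking level-$\le k$ rules, and the trajectory-stability consequence (item~\ref{conjecture_k_trajectory_evolution}) prevents reintroduction of charge from neighbouring subcolonies by low-level dynamics. Any genuinely new low-level error appearing after $t_c$ generates a cluster not belonging to the frozen set $\mathcal C^{(k+1)}(t_c)$ and so does not enter its trajectory domain. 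I expect the main obstacle to be exactly the cross-level entanglement highlighted in Section~\ref{sec:defs}: a low-level error inside $C^{k+1}_{\bm\rho}$ that becomes causally linked to a higher-level cluster is part of $\mathcal C^{(k+1)}$ and must therefore also be swept to $\mathbf r_0$ rather than being dismissed as an isolated, independently corrected cluster. Maintaining the clean dichotomy between ``isolated low-level clusters eliminated via item~\ref{lemma_causal_region}'' and ``$\mathcal C^{(k+1)}$-charge herded to the centre'', while keeping the combined number of level-$k$ working periods strictly below $4(3\,\textnormal{diam}(G_{\mathcal A})+1)Q+1$ using only $k$-successfulness and the non-cyclic retry bound $\textnormal{diam}(G_{\mathcal A})$, is the crux of the argument.
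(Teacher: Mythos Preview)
Your overall strategy matches the paper's: reduce to the coarse-grained level-$k$ trajectory $\mathcal T_{(k)}$ inside $C^{k+1}_{\bm\rho}$, use property~\ref{conjecture_k_trajectory_evolution} so that only level-$k$ transition rules and level-$k$ actual errors matter, herd the charge to the central level-$k$ subcolony, and then invoke $k$-faithfulness to drop down to the physical centre. Where you fall short is the quantitative core: you assert that the budget $4(3\,\textnormal{diam}(G_{\mathcal A})+1)Q+1$ ``is chosen precisely to dominate this total'' but never verify it, and your $O(\cdot)$ bookkeeping is not tight enough to do so.

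Two concrete gaps. First, your claim that ``each displacement cost[s] one level-$k$ working period'' is incorrect. The paper shows that a single level-$k$ actual error near the anyon being herded can (i) report an invalid $s_{k,n}$ for one period and (ii) deposit charges that block the rule for two further rounds of successful applications; combined with $k$-successfulness this gives a worst-case cost of $(3\,\textnormal{diam}(G_{\mathcal A})+1)$ level-$k$ working periods \emph{per} displacement, not one. This is where the factor $(3\,\textnormal{diam}(G_{\mathcal A})+1)$ in $f_c$ actually comes from; your accounting attributes the $\textnormal{diam}(G_{\mathcal A})$ to ``fusion retries costing a further $O(Q)$ periods'', which is the wrong mechanism. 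Second, you do not control the \emph{number} of displacements in the presence of level-$k$ errors. The paper's key geometric step is that such an error can knock an anyon off its corridor by one subcolony for every two subcolonies of progress, so the worst case is a $\pi/8$ spiral whose total length is bounded by the geometric series $Q\sum_{i\ge 0}2^{-i}=2Q$ to reach a corridor and another $2Q$ from the corridor to the centre, giving $4Q$ successful displacements in all. Without this argument you cannot conclude the trajectory stays inside $C^{k+1}_{\bm\rho}$, let alone that it reaches the centre within $4Q(3\,\textnormal{diam}(G_{\mathcal A})+1)$ periods.

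Your final paragraph correctly identifies the crux---the interaction between the herded $\mathcal C^{(k+1)}$-charge and nearby level-$k$ errors---but the paper does not resolve it via the dichotomy you propose (``isolated clusters eliminated by item~\ref{lemma_causal_region}'' versus ``herded charge''). Rather, it analyzes directly how a level-$k$ error deflects and delays the herded anyon, and bounds both effects as above. That analysis is the substance of the lemma and is missing from your proposal.
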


\begin{proof}

In order to follow $\mathcal{T}_{t'} \left( \mathcal{C}^{\left( k+1 \right)} \left( t' \right)\vert c \left( t' \right), \mathcal{E}  \right)$ inside $C^{k+1}_{\bm \rho}$ as a function of time, we begin by considering the temporal evolution of the $k$-trajectory $\mathcal{T}_{\left( k \right)} \left( \tau' \right)$ restricted to the level-$k$ colonies inside $C^{k+1}_{\bm \rho}$ and neighbouring level-$(k+1)$ colonies, with $t' \in [ \tau' U^k + (1 - f_c )bU^{k-1}, (\tau' + 1)U^k )$. Using property 1, we only need to consider the application of level-$k$ transition rules and the effects of level-$k$ actual errors. We wish to find a bound on the number of applications of rounds of level-$k$ transition rules before $\mathcal{T}_{\left( k \right)}$ is contained within the level-$k$ colony at the centre of $C^{k+1}_{\bm \rho}$.

Consider first the case where no level-$k$ actual errors happen during the working period in $C^{k+1}_{\bm \rho}$ nor any of the neighbouring colonies.  After one round of successful application of the level-$k$ transition rules, all the level-$k$ colonies on the border of $C^{k+1}_{\bm \rho}$ contain no anyons from causally-linked cluster of level-$k$ or higher. Using $k$-successfulness, this takes at most $\textnormal{diam} (G_{\mathcal A})$ level-$k$ working periods. Past this stage, one only needs to take into consideration the anyons inside $C^{k+1}_{\bm \rho}$. Careful consideration of the transition rules reveals that the anyons in the farthest level-$k$ colony from the centre of $C^{k+1}_{\bm \rho}$ in Manhattan distance will be moved towards the centre by a single level-$k$ colony every round of successful application of level-$k$ transition rule. Since the farthest level-$k$ colony inside $C^{k+1}_{\bm \rho}$ is at a distance of at most $Q$ from the centre, after the application of $Q$ successful level-$k$ transition rules, $\mathcal{T}_{\left( k \right)}$ restricted to $C^{k+1}_{\bm \rho}$ will contain only the level-$k$ colony at the centre. Using $k$-successfulness, we find that this requires no more than $(Q+1) \textnormal{diam} G_{\mathcal A}$ level-$k$ working periods.

%\begin{figure}[!h]
%\centering
%\includegraphics[width=.4 \textwidth]{rules_colony}
%\caption{ The possible transition rules, depending on the position of a site in the colony. The dots represent sites, and the arrows depict all the possible transtion rules to be applied. }
%\label{fig_rules_colony}
%\end{figure}

Now consider the case where level-$k$ actual errors do happen. Suppose first that an anyon is contained within a level-$k$ colony inside a corridor of $C^{k+1}_{\bm \rho}$ and away from the colony's border.  While a level-$k$ error can displace the anyon outside the corridor, $k$-locality and the transition rules ensure that it will remain within a box of $2 \times 2$ level-$k$ colonies containing the original level-$k$ colony. By inspection, we see that one successful application of level-$k$ transition rules will bring the $k$-trajectory back towards a corridor (which may be different than the one into which it was initially contained). Given that level-$k$ actual errors are $(aQ^k, aQ^k, bU^k)$-separated from each other, the anyon can drift towards the border of $C^{k+1}_{\bm \rho}$ at a rate of at most $1$ level-$k$ colonies every time it is displaced by $2$ level-$k$ colonies towards the corridor perpendicular to the border in question.  It is clear that the anyon cannot escape $C^{k+1}_{\bm \rho}$, since it will always get back to a corridor before reaching the border of the level-$(k+1)$ colony. Thus, we see that at worst the anyon spyrals towards the colony centre on $\pi/8$ angle trajectories. Suppose that the farthest anyon from the centre of $C^{k+1}_{\bm \rho}$ has reached a corridor and is at a distance of at most $\lfloor \frac{Q}{2} \rfloor$ from the centre. We seek bonds on the number of applications of successful level-$k$ transition rules before it gets to the level-$k$ colony at the centre of $C^{k+1}_{\bm \rho}$, valid for any set of level-$k$ actual errors happening on its trajectory. The longest path possible for the anyon is to jump onto another corridor, at a distance of at most $\frac{\lfloor \frac{Q}{2} \rfloor}{2}$ from the colony centre. This is the farthest distance the anyon can get from the colony centre in another corridor, given the separation between the level-$k$ actual errors. This requires displacing the anyon at a distance of at most $\lfloor \frac{Q}{2} \rfloor + \frac{\lfloor \frac{Q}{2} \rfloor}{2}$ level-$k$ colonies towards the corridor, considering the fact that a level-$k$ actual error can displace the anyon by one level-$k$ colony in the opposite direction of the corridor. This process can be repeated in order to reach another corridor, at a distance of at most $\frac{\lfloor \frac{Q}{2} \rfloor}{4}$ from the colony centre, in the same fashion. This process is then repeated until the anyon reaches the colony centre. The required number of applications of successful transition rules successfully displacing the anyon is then bounded above by $Q \sum_{i = 0}^{\infty} \frac{1}{2^{i}} = 2Q$.

Consider next an anyon whose $k$-trajectory is such that it will reach a level-$k$ colony in one of the corridors inside $C^{k+1}_{\bm \rho}$ (excluding the border). A similar analysis shows that such an anyon needs to be displaced by at most $2Q$ level-$k$ colonies before reaching a corridor of $C^{k+1}_{\bm \rho}$.

We next find a bound on the number of level-$k$ working periods it takes to displace the farthest anyon in a level-$(k+1)$ colony by a single level-$k$ colony. A level-$k$ actual error can happen in a block of $2 \times 2$ level-$k$ colonies neighbouring the one under consideration. It is possible that the error affects the syndromes reported by the neighbouring colonies under consideration and report non-trivial syndromes, so that no level-$k$ transition rule will be applied during the first working period. Furthermore, the level-$k$ actual error can put non-trivial charges in all the $4$ level-$k$ colonies it affects. Careful consideration of the transition rules shows that it is possible in that case that the non-trivial charges deposited by the level-$k$ actual error prevent the application of the transition rule for the next 2 rounds of application of successful level-$k$ transition rules. Using $k$-successfulness and the fact that a non-trivial syndrome $s_{k,n}$ can be reported for at most a single level-$k$ working period affected by level-$k$ actual error, we find that it can take up to $(3 \textnormal{diam} (G_{\mathcal A}) + 1)U^{k}$ time steps before the anyon under consideration is properly displaced by a level-$k$ transition rule.

Putting these results together, we find that the maximum time $t_c$ it takes after the start of the level-$(k+1)$ working period so that the $k$-trajectory restricted to $C^{k+1}_{\bm \rho}$ reaches the level-$k$ colony at its centre is bounded above by $4Q(3 \textnormal{diam} (G_{\mathcal A}) + 1))U^{k}$.

Finally, we can now use the fact that the algorithm is $k$-faithful to conclude that there exists $t^{\ast}_{c} \leq t_{c} + \left( 1- f_{c} \right) b U^{k-1}$ such that for any $t' \geq t^{\ast}_{c}$, the set of locations $\mathcal{L}_{t'} \left( \mathcal{C}^{\left( k+1 \right)} \left( t^{\ast}_{c} \right)\vert c \left( t' \right), \mathcal{E}  \right) \bigcap C^{k+1}_{\bm \rho}$ is included in the \emph{physical site} at the centre of $C^{k+1}_{\bm \rho}$. Since $\left( 1- f_{c} \right) b U^{k-1} < U^{k}$, and given the value of $f_{c} b =  b - \left( 4 \left( 3 \textnormal{diam} \left( G_{\mathcal{A}} \right) + 1 \right) Q + 1 \right)$, we conclude that the algorithm is $\left( k + 1 \right)$-faithful. This is because the previous reasoning is valid for an arbitrary set of charge measurements and actual error pattern $\mathcal E$, as long as it satisfies the conditions for the definition of $k$-fidelity to be applicable.

\end{proof}

\begin{lemma}
Suppose that the algorithm is $k$-local, $k$-faithful and $k$-successful, and that $Q > 14 \left( a + 2 \right) +7$. Then, there exist $b_{0}$ such that if $b > b_{0}$, the algorithm is $\left( k+1 \right)$-local.
\label{lemma_locality}
\end{lemma}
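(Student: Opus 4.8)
The plan is to lift the level-$0$ locality argument of Lemma \ref{lemma_0_level} to the renormalized level $k+1$, by arguing that for the restricted cluster $\tilde{\mathcal C}(E;t)$ of a level-$(k+1)$ actual error $E$ the level-$(k+1)$ colonies equipped with their level-$(k+1)$ syndromes behave exactly like physical sites equipped with level-$0$ syndromes. The ingredients that make this renormalized picture valid are already in hand: consequences 1--3 of the preceding lemma (valid for $k'\le k$) guarantee that the level-$k$ syndromes in the relevant colonies are valid, that the level-$k$ trajectory stays confined to $2\times2$ level-$k$ blocks, and that every level-$\le k$ causally-linked cluster is corrected on a time scale $(2\,\mathrm{diam}(G_{\mathcal A})+3)U^{k}$, far shorter than a level-$(k+1)$ working period. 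Together with $(k+1)$-faithfulness (Lemma \ref{lemma_faithful}), these assert that all excitations of $\tilde{\mathcal C}(E)$ are brought to the level-$k$ colony centres and then to the physical centre of the enclosing level-$(k+1)$ colony quickly enough to drive a valid level-$(k+1)$ syndrome. Granting this reduction, $(k+1)$-locality follows from the same structural feature of the rules that yields $0$-locality: a charge configuration confined to a $2\times2$ block of colonies is consolidated \emph{inward} without ever being transported out of that block.

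Concretely, I would first confine $E$ itself. By the noise classification an actual level-$(k+1)$ error fits in a spatial box of side $\min\{Q,7(a+2)\}Q^{k}$; since $Q>14(a+2)+7$ implies $Q>7(a+2)$, this side equals $7(a+2)Q^{k}$. Adding the spread produced by level-$\le k$ collection, which by $k$-locality never exceeds one extra level-$k$ colony on each side, the whole level-$\le k$ trajectory of $\tilde{\mathcal C}(E)$ lies in a box of side at most $(7(a+2)+2)Q^{k}$. The hypothesis $Q>14(a+2)+7$ supplies exactly the margin needed so that this box, together with the bounded displacement caused by the level-$(k+1)$ rules (each of which moves a renormalized charge by a single level-$(k+1)$ colony, and which fire only a bounded number of times), is straddled by at most a $2\times2$ block of level-$(k+1)$ colonies; verifying this accounting against the explicit rules of Appendix \ref{AppendixA} and the consolidation geometry of Appendix \ref{AppendixB} is the routine part of the argument.

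Next I would pin down the temporal structure, which is where $b_{0}$ enters. The temporal extent of a level-$(k+1)$ actual error is at most $\min\{U,7(b+2)\}U^{k}$, and since $U=b^{2}$ this equals $7(b+2)U^{k}$ once $b^{2}>7(b+2)$, i.e. for $b$ larger than some $b_{0}$. As $7(b+2)U^{k}<U^{k+1}=b^{2}U^{k}$ for such $b$, the error overlaps at most two consecutive level-$(k+1)$ working periods, so at the renormalized time scale $E$ acts as a single (at worst two-step) atomic event. This confinement bounds the number of times the level-$(k+1)$ rules can act on the charges of $\tilde{\mathcal C}(E)$ before $E$ stops feeding the cluster, which is exactly what prevents the renormalized trajectory from drifting out of the block. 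The same confinement shows $E$ cannot drive a non-trivial level-$(k+2)$ syndrome, since its charges persist for only $O(b)$ of the $b^{2}$ level-$(k+1)$ working periods composing a level-$(k+2)$ period; hence no transition rule of level $>k+1$ is triggered in the restricted history $\tilde{\mathcal E}(E)$, and only level-$\le k+1$ dynamics need be tracked. I would fix $b_{0}$ so as to also secure $f_{c}>f_{n}$ and the inequalities already imposed, so that the binning defining $s_{k+1,c}$ and $s_{k+1,n}$ faithfully reports the consolidated charge.

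The main obstacle is the genuinely non-abelian effect that has no level-$0$ analogue: although $E$ is globally neutral, the excitations it deposits in distinct level-$(k+1)$ colonies may individually carry non-trivial charge, and a level-$(k+1)$ rule acting on a lone such charge is designed to transport it toward the level-$(k+2)$ colony centre, which would leave the $2\times2$ block. The crux is therefore to show that consolidation within the block always precedes any outward transport: the rules first bring the at most four colony-charges of $\tilde{\mathcal C}(E)$ together inside the block, mirroring the pairing step of the level-$0$ rules, and only a fully consolidated configuration—whose total charge is the vacuum by charge conservation—can present a lone charge to an outward rule, at which point there is nothing left to transport. Here I would take the union over all fusion histories and over all removals of subsets $\{E_{(i)}\}\subseteq\tilde{\mathcal C}(E)$ demanded by the trajectory domain, noting that the same inward-consolidation argument applies verbatim to each sub-cluster $\tilde{\mathcal C}(E)\setminus\{E_{(i)}\}$, whose excitations form a subset of those of $\tilde{\mathcal C}(E)$. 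I would invoke the acyclicity of $G_{\mathcal A}$, exactly as in the discussion of Section \ref{sec:defs}, to guarantee that the repeated intermediate fusions terminate in the vacuum after at most $\mathrm{diam}(G_{\mathcal A})$ rounds, all taking place inside the block. Establishing this inward-consolidation property directly from the rules of Appendix \ref{AppendixA}, uniformly over fusion outcomes, is the step I expect to demand the most care.
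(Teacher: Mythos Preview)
Your proposal correctly identifies the danger—that a level-$(k+1)$ rule might ship a ``lone'' renormalized charge toward the level-$(k+2)$ centre and out of the $2\times2$ block—but your mechanism for preventing it, and your account of what $b_0$ is for, both miss the point. You argue that consolidation within the block automatically precedes outward transport because the rules ``first bring the at most four colony-charges together.'' That is only true when the level-$(k+1)$ syndromes inside the block are \emph{valid}; the danger arises precisely when they are not. During the $7(b+2)$ level-$k$ working periods over which $E$ is active, charge may have accumulated at the centre of one colony $C^{k+1}_{\bm\rho}$ long enough to trigger $s_{k+1,c}(\bm\rho)$, while a neighbouring colony in the block carries non-trivial total charge but has \emph{not} had enough time to trigger its $s_{k+1,n}$. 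The rule at $\bm\rho$ then sees a non-trivial centre and (falsely) vacuum neighbours, and applies the default move toward the level-$(k+2)$ centre—exactly the escape you want to forbid. Nothing in Appendix~\ref{AppendixA} blocks this without a further constraint, and your ``consolidation first'' claim is simply unsupported in this regime.

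The paper's $b_0$ \emph{is} that constraint, and it is not chosen (as you suggest) merely so that $7(b+2)<b^2$ or so that $f_c>f_n$. It is chosen so that whenever enough time-bins remain in a working period for $s_{k+1,c}(\bm\rho)$ to reach its threshold $f_c$, then—after subtracting the maximal delay $7(b+2)+4(3\,\mathrm{diam}(G_{\mathcal A})+1)Q+1$ by which a neighbouring colony's charge can arrive later at its own centre—enough bins also remain for that neighbour's $s_{k+1,n}$ to reach its lower threshold $f_n$. Solving this synchronization inequality in $b$ yields the explicit $b_0$. Without it the invalid-syndrome escape cannot be ruled out. A secondary issue: your spatial accounting (``one extra level-$k$ colony on each side'') is too coarse. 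Level-$k$ errors deflect an anyon moving toward a corridor by one outward step per two inward steps, and the hypothesis $Q>14(a+2)+7$ is used via $\lceil Q/2\rceil/2<\lceil(Q-7(a+2)-3)/2\rceil$ to guarantee this spiralling trajectory reaches a corridor before reaching the outer layer of the $2\times2$ block.
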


\begin{proof}

Let $E$ be a level-$\left( k + 1 \right)$ actual error. Since we are only interested in $\tilde{\mathcal{C}} \left( E ; t \right)$, we can suppose that $\mathcal{E} = \tilde{\mathcal{C}} \left( E \right)$. We follow $\mathcal{T}_{\left( k  \right)}$ as a function of (renormalized) time, given a specific set of measured charges $c \left( t' \right)$. By definition (see Sec.~\ref{sec_lvl_n_noise}),  $E$  fits in a set of $7 \left( a + 2 \right) + 1 \times 7 \left( a + 2 \right) + 1$ level-$k$ colonies. Using our assumption that $Q > 14 \left( a + 2 \right) + 7$, $E$ is contained in the bulk of a block $\mathcal{B}'$ of $2 \times 2$ level-$\left( k+1 \right)$ colonies, it does not overlap with the outer layer of level-$k$ colonies of $\mathcal{B}'$.

We begin by ignoring level-$(k+1)$ transition rules and level-$k$ actual errors. Using property \ref{conjecture_k_trajectory_evolution}, $\mathcal{T}_{\left( k  \right)}$ depends only on the level-$k$ transition rules. Since $\mathcal{T}_{\left( k  \right)}$ does not overlap with the outer layer of $\mathcal B'$, inspection of the level-$k$ transition rules show that $\mathcal{T}_{\left( k  \right)}$ remains contained within the bulk of $\mathcal B'$.

We now consider the effects of level-$k$ actual errors. As was noted in the proof lemma \ref{lemma_faithful}, level-$k$ actual errors can have the effect of displacing an anyon by a distance of at most 1 level-$k$ colonies towards the exterior of the level-$(k+1)$ colony every time it is displaced by 2 level-$k$ colonies towards a corridor. By choosing $\frac {\lceil \frac{Q}{2} \rceil}{2} < \lceil \frac{Q-7\left( a + 2 \right)-3}{2} \rceil$, basic geometrical considerations (see figure \ref{fig_colony_contained}) shows that $\mathcal{T}_{\left( k \right)} $ stays contained within the bulk of $\mathcal B'$. This last constraint ensures that $\mathcal{T}_{\left( k \right)}$ reaches a corridor in the bulk of a level-$\left( k + 1 \right)$ colony of $\mathcal{B}'$ before it can escape $\mathcal{B}'$. As was argued in the proof of the previous lemma, once an anyon has reached the corridor inside a level-$(k+1)$ colony, it cannot escape it if no level-$(k+1)$ transition rule is applied. A bit of algebra reveals that $Q > 14 \left( a + 2 \right) + 7 $ always satisfies the previous inequality.

\begin{figure}[!h]
\centering
\includegraphics[width=.65 \textwidth]{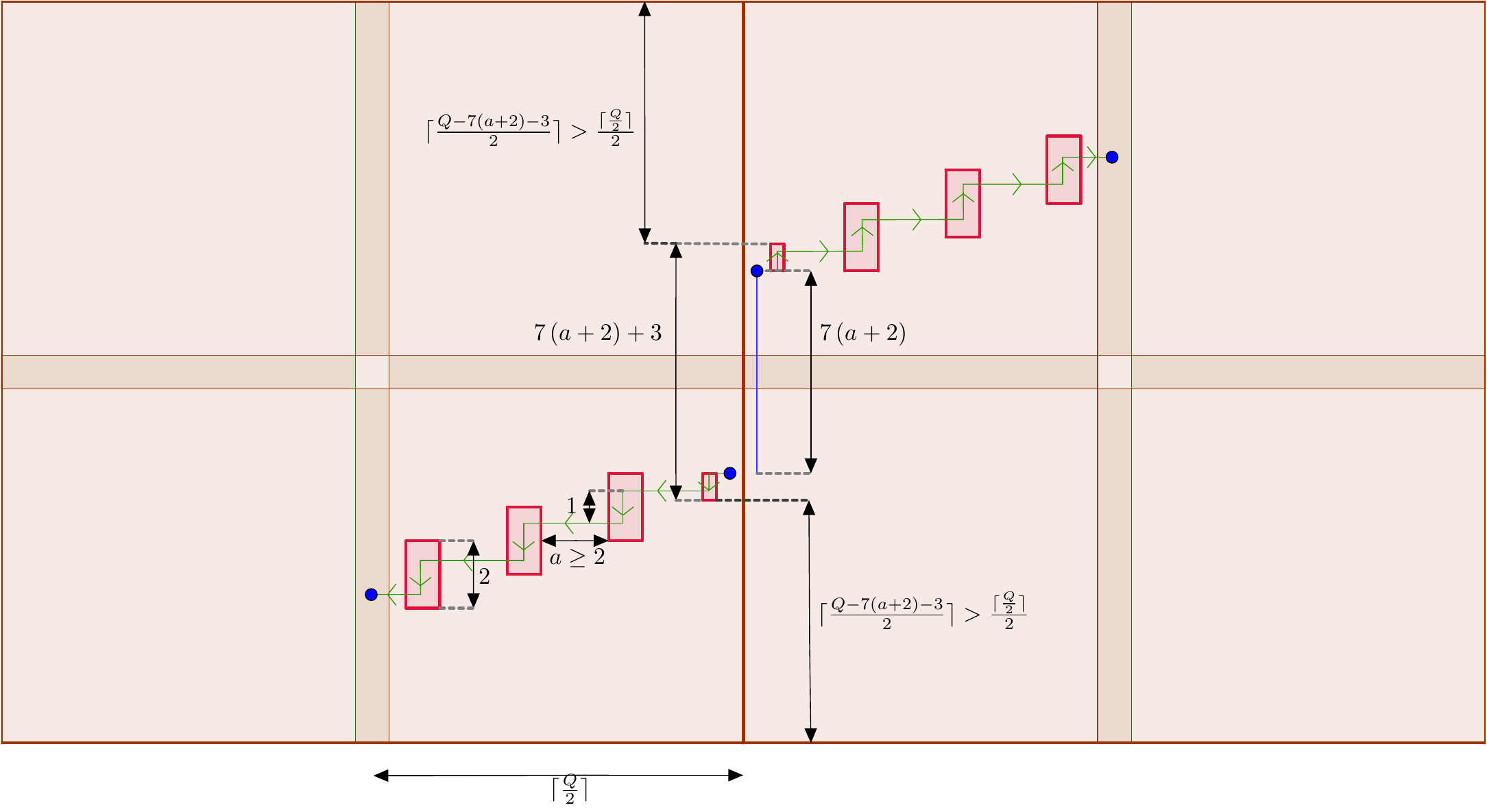}
\caption{A level-$\left(k + 1 \right)$ actual error ($E_{(k+1)}$), whose spatial extent is represented by a blue segment, fits in a space-time box of size $7 \left( a + 2 \right) Q^{k} \times 7 \left( a + 2 \right) Q^{k} \times 7 \left( b + 2 \right) U^{k}$. It leads to the creation of anyons represented by blue dots. In the absence of lower-level errors, each anyon would move towards the central corridors in a straight line. In the presence of actual level-$\left( k - 1 \right)$ and lower errors, represented by red rectangles, the anyon trajectory $\mathcal{T}_{k}$, represented by green arrows, can  expand to cover a set of $[7 \left( a+ 2 \right) + 3] \times [7 \left( a + 2 \right) + 3] $ level-$\left( k \right)$ colonies. Level-$k$ actual errors are $\left( aQ^{k}, aQ^{k}, bU^{k} \right)$-separated and can affect the spatial extension of a box containing the anyons part of $E_{\left( k +1 \right)}$. However, choosing $Q > 14 \left( a + 2 \right) + 7$ guarantees that any anyon part of the causally-linked cluster of $E_{\left( k+1 \right)}$ stay contained within a block of $2 \times 2$ level-$k+1$ colonies.}
\label{fig_colony_contained}
\end{figure}

%We begin by considering the effect of level-$\left( k - 1 \right)$ and lower actual errors on $\mathcal{T}_{\left( k  \right)}$. By using lemma \ref{lemma_k_trajectory_evolution}, we find that under the effect of level-$\left( k -1 \right)$ noise, $\mathcal{T}_{\left( k  \right)} \left( E_{\left( k \right)} \right)$ can expand at most to a region of $7 \left(a + 2 \right) + 3 \times 7 \left( a + 2 \right) + 3$ level-$\left( k \right)$ colonies containing $E_{\left( k \right)}$. Past this size, only the application of level-$\left( k  \right)$ transition rules and the occurrence of level-$\left( k  \right)$ actual errors can cause $\mathcal{T}_{\left( k  \right)}$ to grow in size. 

We now consider the effect of level-$\left( k+1 \right)$ transition rules. The level-$\left( k+1 \right)$ colonies surrounding $\mathcal{B}'$ have valid level-$(k+1)$ syndromes reporting the vacuum. Indeed,  property \ref{lemma_causal_region} ensures that the causal region of a level-$k$ causally-linked cluster is contained within a box of size $2Q^{k'} \times 2Q^{k'} \times \left( 2  \textnormal{diam} \left( G_{\mathcal{A}} \right) + 3 \right) U^{k'}$. Since level-$k$ actual errors are $(aQ^k,aQ^k,bU^k)$-separated, there can be at most 1 level-$k$ actual error by bin of $b$ level-$k$ working periods. Using property \ref{lemma_valid_syndrome} for $k$ and the value of $f_n = \frac{4 \left( 3\textnormal{diam} \left( G_{\mathcal{A}} \right) + 1 \right) Q + 1}{b}$, every bin of $b$ level-$k$ working period reports the vacuum, ensuring that the level-$(k+1)$ corresponding syndrome reports the vacuum. 

Suppose that the level-$\left( k + 1 \right)$ syndromes are all valid within $\mathcal{B}'$. By inspecting every possible case (see appendix \ref{AppendixB}), it is clear that $\mathcal{T}_{\left( k \right)}$ will be kept inside $\mathcal{B}'$.

On the other hand, if some of the level-$\left( k + 1 \right)$ syndromes in $\mathcal{B} '$ are invalid, it is possible that the application of a level-$\left( k + 1 \right)$ transition rule brings $\mathcal{T}_{\left( k  \right)}$ outside of $\mathcal{B}'$. Suppose first that the total charge of the anyons part of $\tilde{\mathcal{C}} \left( E \right)$ in $C^{k+1}_{\bm \rho}\in \mathcal{B}'$ is non-trivial, but that the corresponding reported syndrome $s_{k+1,c}$ is the vacuum. This case cannot cause problems, since by design, the level-$\left( k + 1 \right)$ transition rules will not move a charge out of $C^{k+1}_{\bm \rho}$ when the syndrome is the vacuum. Suppose, however, that the reported syndrome is non-trivial. The transition rule that will be applied depends on the syndromes reported by the neighbouring colonies, $s_{k+1,n}$. If the total charge of the anyons part of $\tilde{\mathcal{C}} \left( E \right)$ in a neighbouring colony $C^{k+1}_{\bm \rho + (\alpha, \beta)}$ is non-trivial but its syndrome $s_{k+1,n}$ reports the vacuum, then it is possible that the application of a level-$\left( k + 1 \right)$ transition rule causes $\mathcal{T}_{\left( k \right)}$ to grow outside of $\mathcal{B}'$. It is in fact the only potentially problematic case.

We thus seek values for $b$ ensuring that if it is possible for the level-$(k+1)$ colony $C^{k+1}_{\bm \rho}$ to report a non-trivial syndrome $s_{k+1,c} \left( \bm \rho, \tau \right)$, then all the syndromes $s_{k+1,n} \left( \bm \rho + (\alpha, \beta), \tau \right)$ of neighboring colonies containing a nontrivial charge reports it (with $\break \alpha, \beta \in \{-1, 0, 1 \}$). Suppose that a non-trivial syndrome gets detected with $T$ level-$k$ working periods left to the current level-$(k+1)$ working period in the colony $C^{k+1}_{\bm \rho}$. The latest time after which a non-trivial charge may get detected in a neighbourhood level-$\left( k+1 \right)$ colony centre is $7 \left( b+2 \right)U^{k} + 4 \left( 3  \text{diam} \left( G_{\mathcal{A}} \right) + 1 \right)Q U^{k}$ time steps after that, by using the maximal temporal extension of a level-$(k+1)$ actual error and the proof of lemma \ref{lemma_faithful}. Using property \ref{lemma_valid_syndrome} for $k$, we find that after that point in time, the syndromes for the following bins of $bU^{k}$ time steps are valid. We thus require that if $ f_{c} b \leq \lceil \frac{T}{b} \rceil $, then $f_{n} b \leq \lfloor \frac{T-7\left( b+2 \right) - 4 \left( 3 \textnormal{ diam} \left( G_{\mathcal{A}} \right) + 1 \right) Q-1 }{b} \rfloor$. In other words, we require that if there is enough time left on the current level-$\left( k + 1 \right)$ working period for a syndrome $s_{k+1,c} $ to report a non-trivial charge (valid or not), then there is also enough time left to ensure that the syndromes $s_{k+1,n}$ of the neighbouring colonies are valid. Using the numerical values for $f_{c}$ and $f_{n}$, we find that choosing

\begin{equation} 
\begin{split}
b \geq b_{0} & \equiv 4 \left( 3 \text{diam} \left( G_{\mathcal{A}} \right) + 1 \right) Q + 5 \\
+& 2 \left( 4 \left( 3 \text{diam} \left( G_{\mathcal{A}} \right) + 1 \right) ^{2} Q^{2} + 11 \left( 3 \text{diam} \left( G_{\mathcal{A}} \right) + 1 \right) Q + 7 \right)^{\frac{1}{2}}
  	\label{eqn_b_restriction}
\end{split}
\end{equation}
ensures that it is always the case.

\end{proof}

\begin{lemma}
Suppose that our algorithm is $k$-local, $(k+1)$-faithful and $k$-successful. If $\mathcal{A}$ is non-cylic, then our algorithm is $\left( k + 1 \right)$-successful.
\label{lemma_successful}
\end{lemma}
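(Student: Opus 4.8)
The plan is to reduce the $(k+1)$-successfulness of the rules to the abelian situation by following the total charge of the transported anyon along the directed graph $G_{\mathcal{A}}$ of Section~\ref{sec_cyclic_anyons}, exactly as anticipated in the discussion of Section~\ref{sec:defs}. Since $k$-successfulness is part of the hypothesis, the cases $k'\leq k$ of Definition~\ref{def_k_successful} already hold, and only the case $k'=k+1$ remains. So I fix a level-$(k+1)$ active transition rule $M^{(\alpha,\beta),k+1}_{\bm\rho}$ applied in a region $C^{k+1}_{\bm\rho}\cup\{C^{k+1}_{\langle\bm\rho\rangle}\}$ that is free of level-$(k+1)$ or larger actual errors and on which no transition rule of level greater than $k+1$ acts. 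By the definition of an active rule, the central colony carries a valid, non-trivial level-$(k+1)$ syndrome $s_{k+1,c}=c\neq 1$ and its neighbours carry valid syndromes, so there is a well-defined total charge $c$ to be transported towards its partner $\bar c$ in the target neighbour; the partner exists by charge conservation, since the whole region was produced from vacuum pair-creation and the worldlines involved are homologically trivial.

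First I would control the lower-level machinery. Because the region contains no level-$(k+1)$ error, every actual error in play is of level $k'\leq k$. Noting that $(k+1)$-faithfulness implies $k$-faithfulness, the hypotheses let me invoke consequences~\ref{conjecture_k_trajectory_evolution}, \ref{lemma_valid_syndrome} and \ref{lemma_causal_region} of the preceding lemma up to level $k$: each level-$\leq k$ causally-linked cluster is corrected inside a space-time box of the size given by consequence~\ref{lemma_causal_region}, on a time scale short compared to $U^{k+1}$, and its trajectory cannot escape that box by consequence~\ref{conjecture_k_trajectory_evolution} together with $k$-locality. Using $(k+1)$-faithfulness, after a time $t_c\leq \tau U^{k+1}+(1-f_c)bU^{k}$ the coarse-grained trajectory is concentrated at the physical centre of the colony, so that ``the charge being dragged'' is a genuine definite charge between attempts, and the physical-level transport is carried out by $k$-successful level-$k$ rules.

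The heart of the argument is the analysis of a single application of the rule. As the charge $c$ is transported to meet $\bar c$, its worldline may braid with the pairs created by the intervening level-$\leq k$ errors; those errors are corrected independently (previous paragraph), so they contribute no net charge, but---and this is the genuinely non-abelian effect---the monodromy they induce can alter the outcome when $c$ and $\bar c$ are finally fused at the target. Whatever that outcome $c'$ is, it is by definition a fusion channel of $c\times\bar c$, i.e. $N_{c\bar c}^{c'}>0$; the braiding only redistributes probability among the allowed channels. If $c'=1$ the central colony is emptied, and by consequence~\ref{lemma_valid_syndrome} together with $(k+1)$-faithfulness the syndrome $s_{k+1,c}(\bm\rho,\tau+1)$ is valid and reports the vacuum, so the rule is successful in the sense of Definition~\ref{def_success_rule}. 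If $c'\neq 1$ the attempt fails, and charge conservation leaves the residual pair $(c',\bar{c'})$, which is precisely the head of the directed edge $(c,\bar c)\to(c',\bar{c'})$ of $G_{\mathcal{A}}$.

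It then remains to iterate and invoke non-cyclicity. Each failed attempt advances the total charge of the transported anyon along a directed edge of $G_{\mathcal{A}}$; since $\mathcal{A}$ is non-cyclic this graph is acyclic, so no charge can recur and the walk must reach within at most $\mathrm{diam}(G_{\mathcal{A}})-1$ steps a node $(c_m,\bar{c_m})$ with $c_m$ abelian (equivalently, this is guaranteed by the operational reading of $D=\mathrm{diam}(G_{\mathcal{A}})$ as the maximal number of particle/antiparticle fusions needed to reach the vacuum). For an abelian charge $c_m\times\bar{c_m}=1$ is the unique channel and abelian braiding contributes only phases, so the transport is clean and the next, hence at most the $\mathrm{diam}(G_{\mathcal{A}})$-th, application fuses the pair to the vacuum exactly as in an abelian model. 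Therefore any level-$(k+1)$ active rule applied $\mathrm{diam}(G_{\mathcal{A}})$ successive times is successful, and the algorithm is $(k+1)$-successful. I expect the main obstacle to be the third paragraph: making rigorous, from charge conservation and the induction hypotheses alone, that a single failed application transforms the transported total charge into a valid fusion channel of $c\times\bar c$ and leaves exactly the residual pair $(c',\bar{c'})$, while cleanly separating the net-zero charge of the corrected lower-level errors from the monodromy they nonetheless impart.
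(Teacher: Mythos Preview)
Your overall strategy matches the paper's: show that each application of the level-$(k+1)$ rule either succeeds or advances the residual charge along an edge of $G_{\mathcal{A}}$, then invoke non-cyclicity to bound the number of failed attempts by $\mathrm{diam}(G_{\mathcal{A}})-1$, so that the next (now abelian) attempt succeeds. The difference is in how you set up the single-application analysis. You assume the target neighbour contains $\bar c$ and frame the key step as the fusion of $c$ with $\bar c$ there; this is not warranted by Definition~\ref{def_k_successful} (the target may carry any valid charge, or the vacuum), and the success condition of Definition~\ref{def_success_rule} concerns the residual charge in the \emph{source} colony $C^{k+1}_{\bm\rho}$, not a fusion outcome at the target. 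The paper works directly with the source residual: since the operator $M$ always extracts exactly charge $a$ from a region whose valid total charge is $a$, the residual must lie in $a\times\bar a$ by charge conservation, and the paper makes this precise via a diagrammatic Yang-Baxter manipulation (equation~\eqref{fig_meas_loop1}) together with a case split on whether the interfering level-$k$ error $F$ lies in the source colony, the target colony, or straddles the two. Once reframed this way your conclusion $N_{c\bar c}^{c'}>0$ is correct, and the iteration along $G_{\mathcal{A}}$ is identical to the paper's.
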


\begin{proof}
Consider the active transition rule $M^{\left( \alpha, \beta \right),k+1}_{\bm \rho}$ applied at the end of a level-$(k+1)$ working period labelled by $\tau$ and moving a non-trivial anyon from colony $C^{k+1}_{\bm \rho}$ towards $C^{k+1}_{\bm \rho + (\alpha, \beta) }$. Suppose first that no actual error happens during the level-$(k+1)$ working periods $\tau$ and $\tau+1$.  Using $(k+1)$-faithfulness any anyon part of $\mathcal{C}^{(k+1)}$ is in the physical site at the centre of the level-$(k+1)$ colony for any time $t' \geq \tau U^{k+1} + (1-f_{c})bU^{k}$. It is thus clear that $M$ will move the anyon towards a neighbouring level-$(k+1)$ colony without interacting with any other non-trivial anyon. Furthermore, for the working period $\tau + 1$, the level-$(k+1)$ syndrome of $C^{k+1}_{\bm \rho}$ will be valid, since by using property \ref{lemma_valid_syndrome} for $k$, all the bins of $b$ level-$k$ working periods will report valid syndromes. Since no actual error nor anyon is present in $C^{k+1}_{\bm \rho}$ for the working period $\tau +1$, the reported charge will be the vacuum.

Consider next the case where actual errors of level-$k$ and lower happens on $C^{k+1}_{\bm \rho}$ during the working periods $\tau$ and $\tau+1$. Suppose first that the valid level-$(k+1)$ syndrome reports an abelian charge $a$. Since the anyon displaced at the end of the working period $\tau$ is abelian, it cannot interact with any actual error whose causal region intersect its path. The total charge in $C^{k+1}_{\bm \rho}$ after this process is then clearly the vacuum, since the net effect of the level-(k+1) transition rule is to displace an anyon of topological charge $a$ from $C^{k+1}_{\bm \rho}$ towards $C^{k+1}_{\rm \rho + (\alpha, \beta)}$. As argued previously, the level-$(k+1)$ syndrome of $C^{k+1}_{\bm \rho}$ for the working period $\tau + 1$ will report the vacuum as well.

\begin{figure}[!h]
\centering
\includegraphics[width=0.8 \textwidth]{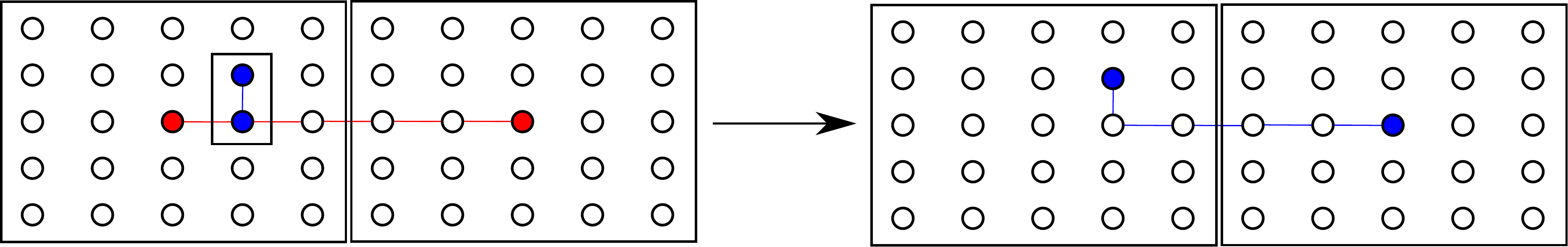}
\caption{An example of an actual level-$0$  error (blue line) occuring during the application of a level-$1$ transition rule (red line). In this example, the application of the transition rule is not successful, as after its application, anyons part of the level-$1$ actual error are still contained inside the colony to the left. For simplicity, $Q=5$ here, and such a process could happen with Ising anyons for example, where the red dots represent $\sigma$ anyons and the blue dots represent $\psi$ fermions.}
\label{fig_error_growth}
\end{figure}

Suppose, on the other hand, that the valid level-$(k+1)$ syndrome reports a non-abelian charge $a$. If the site at the centre of $C^{k+1}_{\bm \rho}$ does not contain an anyon of charge $a$ when the transition rule is applied, or if the moving anyon crosses the causal region of some actual errors, the application of $M^{\left( \alpha, \beta \right),k+1}_{\bm \rho}$ may be unsuccessful, as illustrated by figure \ref{fig_error_growth}.

%Consider first the case where the site at the center of $C^{k+1}_{(x,y)}$ does not contain an anyon with charge $a$, and that the path of the anyon being displaced by $M^{\left( \alpha, \beta \right),k+1}_{x,y}$ does not cross the causal region of any actual error. In this case, a pair of anyons of charges $a$ and $\bar a$ will be created from the vacuum, with the anyon of charge $\bar a$ being put in the center of $C^{k+1}_{(x,y)}$ and $a$ being be displaced to the the colony $C^{k+1}_{(x + \alpha,y + \beta)}$. Since $a$ and $\bar a$ are non-abelian, the total charge of $C^{k+1}_{(x,y)}$ for the next working period $\tau+1$ will be non trivial with probability of $1-d_a^{-2}$. Note however that the level-$(k+1)$ syndrome of $C^{k+1}_{(x,y)}$ for the working period $\tau+1$ will be valid, using the same argument as for property \ref{lemma_valid_syndrome}. 

Consider the case where the anyon of charge $a$ being moved by $M^{\left( \alpha, \beta \right),k+1}_{\bm \rho}$ interacts with a level-$k$ actual error $F$. By $k$-locality, the causal region of $F$, up to the time at which it interacts with the anyon $a$, is contained inside a group of $2 \times 2$ level-$k$ colonies lying on a corridor of $C^{k+1}_{\bm \rho}$ or $C^{k+1}_{\bm \rho + (\alpha,\beta)}$.

Suppose first that $F$ is entirely contained within $C^{k+1}_{\bm \rho +(\alpha, \beta)}$. In this case, by using property \ref{lemma_valid_syndrome} for $k$, the level-$k$ colonies containing $F$ will report valid level-$k$ syndromes once the actual error $F$ is over. By $k$-successfulness and by inspection of the transition rules, we find that after at most $2 \textnormal{diam} (G_{\mathcal A})$ level-$k$ working periods, all the anyons caused by $F$ will be contained in a single level-$k$ colony on a corridor of $C^{k+1}_{\bm \rho + (\alpha,\beta)}$. As argued in the proof of lemma \ref{lemma_faithful}, the $k$-trajectory of $\mathcal{C}^{(k+1)}$ in the colony $C^{k+1}_{\bm \rho +(\alpha,\beta)}$ is then confined inside of it, until the application of level-$(k+1)$ or larger transition rules. Furthermore, all the anyons part of $\mathcal{C}^{(k+1)}$ in $C^{k+1}_{\bm \rho + (\alpha,\beta)}$ will get to the centre of the level-$(k+1)$ colony, using $(k+1)$-faithfulness. Hence, the presence of $F$ does not affect the total charge of the anyons part of $\mathcal{C}^{(k+1)}$ in $C^{k+1}_{\bm \rho}$, nor can it affects its level-$(k+1)$ syndrome for working period $\tau + 1$.

Next suppose that $F$ is entirely contained within $C^{k+1}_{\bm \rho}$. As in the previous case, all the anyons part of $\mathcal{C}^{(k+1)}$ inside $C^{k+1}_{\bm \rho}$ will get to its centre fast enough to ensure that the level-$(k+1)$ syndrome of $C^{k+1}_{\bm \rho}$ for the next working period will be valid, using the same argument as for property \ref{lemma_valid_syndrome}. The state of the anyons can diagrammatically be written as a linear combination of terms of the form shown on the left-hand side of the following equation:
\begin{equation}
	\parbox{0.88\textwidth}{
		\includegraphics[scale=0.65]{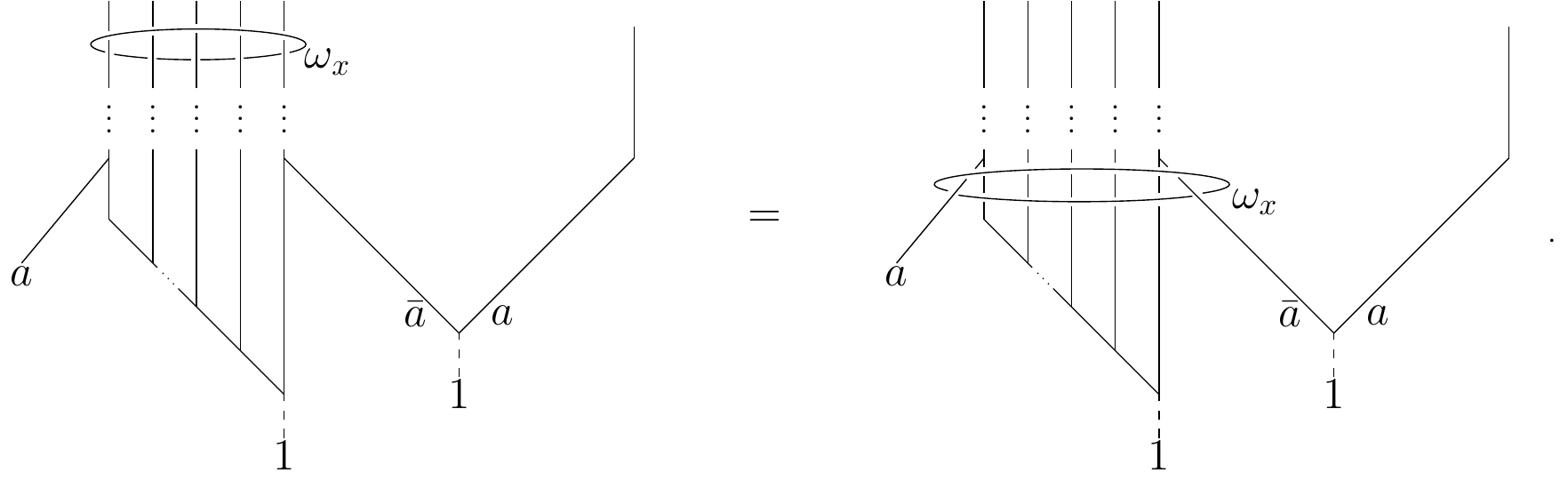}
	}
	\label{fig_meas_loop1}
\end{equation}

The charge reported at the centre of $C^{k+1}_{\bm \rho}$ can be measured using the $\omega$-loop shown in the left-hand side of equation (\ref{fig_meas_loop1}). After the repeated use of the Yang-Baxter equation (\ref{eqn_yang_baxter}), one gets to the right-hand side of equation (\ref{fig_meas_loop1}). Since $\mathcal{A}$ is non-cyclic, the total charge is different than $a$, since it cannot be so for any term in the linear combination.

It is possible that the group of $2 \times 2$ level-$k$ colonies containing $F$ overlaps with both $C^{k+1}_{\bm \rho}$ and $C^{k+1}_{\bm \rho + (\alpha, \beta)}$. In this case, one can explicitly check using the transition rules that all the non-trivial anyons part of the causally-linked cluster of $F$ and in the block of $2 \times 2$ level-$k$ colonies containing $F$ will end up in a single level-$k$ colony in a corridor of either $C^{k+1}_{\bm \rho}$ or $C^{k+1}_{\bm \rho + (\alpha, \beta)}$. Both these cases have been considered above. 

It is now clear that even in the presence of actual errors of level-$k$ or lower (the above reasoning holds for actual errors of lower levels as well and is readily generalized to the case where many actual errors of various levels lower than $k+1$ happen), after applying $M^{\left( \alpha, \beta \right),k+1}_{\bm \rho}$ at most $ \textnormal{diam} \left( G_{\mathcal{A}} \right) - 1 $ times, the  charge left in $C^{k+1}_{\bm \rho}$ is abelian. At this point, the next application of the transition rule will be applied successfully. In the above analysis, we have assumed that the same level-$(k+1)$ transition rule (although based on different syndromes in $C^{k+1}_{\bm \rho}$) is applied at the various level-$(k+1)$ working periods. However this is not crucial and they need not be the same, \emph{i.e.} the non-trivial anyon may be moved towards different neighbouring level-$\left( k + 1 \right)$ colonies in $\{ C^{k+1}_{\langle \bm \rho \rangle} \}$ at different level-$(k+1)$ working periods without affecting the previous reasoning.

\end{proof}

\begin{theorem}
If $\mathcal{A}$ is non-cylic, then there exists a critical value $p_{c} > 0$ such that if $p + q < p_{c}$, for any number of time steps $T$ and any $\epsilon > 0$, there exists a linear system size $L = Q^{n} \in \mathcal O(\log \frac 1\epsilon)$ such that with probability of at least $1 - \epsilon$, the encoded quantum state can in principle be recovered after $T$ time steps.
\label{thm_threshold}
\end{theorem}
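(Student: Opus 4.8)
The plan is to bootstrap the four recursive lemmas into statements valid at every renormalization level, and then convert the doubly-exponential suppression of high-level errors (Lemma~\ref{lem_error_rate}) into a bound on the logical failure probability. First I would establish, by induction on $k$, that the algorithm is $k$-local, $k$-faithful and $k$-successful for every $k\in\mathbb{N}$. The base case is Lemma~\ref{lemma_0_level}. For the inductive step, assuming the three properties at level $k$: Lemma~\ref{lemma_faithful} yields $(k+1)$-faithfulness; Lemma~\ref{lemma_locality} yields $(k+1)$-locality (using $Q>14(a+2)+7$ and $b\geq b_0$); and Lemma~\ref{lemma_successful}, which consumes the $(k+1)$-faithfulness just obtained together with the non-cyclicity of $\mathcal{A}$, yields $(k+1)$-successfulness. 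This closes the induction and, crucially, makes property~\ref{lemma_causal_region} available at all levels: any causally-linked cluster whose level equals $k'$ has its causal region confined to a space-time box of size $2Q^{k'}\times 2Q^{k'}\times\left(2\,\textnormal{diam}\left(G_{\mathcal{A}}\right)+3\right)U^{k'}$.

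Next I would fix the system size. Take $L=Q^{n}$, so that the decoder runs $n_{max}=n$ levels of renormalization. The key claim is that a logical error can only be produced by a causally-linked cluster of level at least $n$. Indeed, suppose every cluster present during $[0,T]$ has level at most $n-1$. Then by property~\ref{lemma_causal_region} each is contained in a spatial region of linear size at most $2Q^{n-1}<Q^{n}=L$ (using $Q>2$). Such a region is homologically trivial on the torus, so the world-lines of the anyons it contains cannot wrap a non-contractible cycle, and by charge conservation the cluster fuses back to the vacuum within its confinement box, applying the trivial logical operation. Hence, conditioned on the absence of any cluster of level $\geq n$ in space-time, the net operation applied to the code space is trivial and the encoded state is recoverable (appending a final, idealized round of decoding should residual confined clusters remain at time $T$).

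It then remains to bound the probability that a cluster of level $\geq n$ forms. Any actual error of level $\geq n$ contains a candidate level-$n$ error, so this probability is at most the probability that a candidate level-$n$ error intersects the space-time volume $Q^{n}\times Q^{n}\times T$. Covering this volume by $\mathcal{O}(T/U^{n})$ boxes of size $Q^{n}\times Q^{n}\times U^{n}$ --- a single box spatially, since $L=Q^{n}$ --- and applying a union bound together with Lemma~\ref{lem_error_rate} gives
\begin{equation}
P_{\mathrm{fail}} \leq \mathcal{O}\!\left(\frac{T}{U^{n}}\right)\epsilon_{n} \leq \mathcal{O}\!\left(\frac{T}{U^{n}}\right)\left(4Q^{4}U^{2}\left(p+q\right)\right)^{2^{n}}.
\end{equation}
Setting $p_{c}=\frac{Q^{-4}U^{-2}}{4}$ guarantees $\alpha:=4Q^{4}U^{2}\left(p+q\right)<1$ whenever $p+q<p_{c}$, so the right-hand side decays doubly-exponentially in $n$ while the prefactor grows only singly-exponentially. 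Choosing $n=\mathcal{O}\!\left(\log\log\frac{T}{\epsilon}\right)$ therefore drives $P_{\mathrm{fail}}$ below $\epsilon$, yielding a linear size $L=Q^{n}$ that grows (poly-)logarithmically in $1/\epsilon$, consistent with the stated bound.

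The genuine difficulty is concentrated in the recursive lemmas rather than in the theorem itself; granting them, the only delicate point here is the confinement argument of the second paragraph. One must ensure that ``cluster of level $\leq n-1$'' truly forbids any homologically non-trivial world-line --- which is exactly where $2Q^{n-1}<Q^{n}$ and property~\ref{lemma_causal_region} are combined with charge conservation --- and that the cluster-merging convention adopted for rare fusion outcomes does not spoil the level accounting, so that the event ``logical error'' is genuinely contained in the event ``a level-$\geq n$ candidate error occurs''. The union-bound counting of space-time boxes and the inversion to recover $L$ are then routine.
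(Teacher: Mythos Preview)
Your proof is correct and follows essentially the same route as the paper: bootstrap Lemmas~\ref{lemma_0_level}--\ref{lemma_successful} by induction to obtain $k$-locality, $k$-faithfulness and $k$-successfulness at every level, invoke property~\ref{lemma_causal_region} to conclude that a cluster of level $\leq n-1$ is confined to a homologically trivial region of size $2Q^{n-1}<L$ and hence cannot act on the code space, and then bound the probability of a level-$\geq n$ cluster via Lemma~\ref{lem_error_rate}. The only difference is that the paper additionally imposes $U^{n}\geq T$, so a single $Q^{n}\times Q^{n}\times U^{n}$ box covers the entire run and no union bound over $\mathcal{O}(T/U^{n})$ temporal slices is needed; your variant is marginally sharper in its $T$-dependence but otherwise identical.
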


\begin{proof}
Let $a = 3$, $Q > 14 \left( a + 2 \right) + 7$ and $b > b_{0}$. Define $p_{c} = \frac{Q^{-4}b^{-4}}{4}$, and let $p + q < p_{c}$. Choose the linear size of the system $L = Q^{n}$, with $n$ such that $U^{n} \geq T$ and $\left( \frac{ \left( p + q \right)}{p_c} \right)^{2^{n}} \leq \epsilon$.

Since $p + q < p_{c}$, lemma \ref{lem_actual_error} guarantees that with probability $1$, any error is part of an actual error and thus the concept of causally-linked clusters is well-defined. Using lemma \ref{lemma_0_level}, the correction algorithm is $0$-local, $0$-faithful and $0$-successful. Using lemmas \ref{lemma_faithful} through \ref{lemma_successful}, we recursively find that the algorithm is $\left( n -1 \right)$-local, $\left( n - 1 \right)$-faithful and $\left( n - 1 \right)$-successful as well. It follows from property \ref{lemma_causal_region} and the separation between actual errors of the same level that in order to perform a non-trivial operation on the encoded subspace, at least one causally-linked cluster of level-$n$ or larger is required. By definition, a level-$n$ causally-linked cluster requires the presence of at least one level-$n$ actual error. Using lemma \ref{lem_error_rate}, the probability of having at least one level-$n$ or larger actual error having non-empty intersection with a box in space time of size $Q^{n} \times Q^{n} \times U^{n}$  is at most $\left( \frac{ \left( p + q \right)}{ p_{c} } \right)^{2^{n}}$. We thus find if the transition rules continue to operate in an error-free way after time $T$, the system will return to its initial ground state with probability at least $1 - \epsilon$.
\end{proof}

Setting $a = 3$, $Q = 78$, and $b = 9(3D + 1)Q$ respects all the conditions under which the above theorem holds and leads to a threshold lower-bound of $2,7 \times 10^{-20} \times (3D +1)^{-4}$. In particular for a system of Ising anyons where $D=2$, we obtain a threshold lower-bound of $1,1 \times 10^{-23}$.

\section{Numerical Simulations of a System of Ising Anyons}
\label{sec_numerical}

The numerical value of the threshold found in the preceding section are extremely low and should be understood as a proof of existence. In practice, the threshold is likely to be much higher, since the proof presented here consider the worst-case scenario, which may differ significantly from the average case. To verify this claim, numerical simulations for a system of Ising anyons living on a torus were performed. The superselection sectors of the theory are $\{ 1, \epsilon, \sigma \}$ and the non-trivial fusion rules are given by

\begin{eqnarray}
\epsilon \times \epsilon = 1, \\
\sigma \times \epsilon = \sigma, \\
\sigma \times \sigma = 1 + \epsilon.
\end{eqnarray}

A complete description of Ising anyons can be found in~\cite{Kitaev_06,Rowell_09}. Ising anyons are non-cyclic, with a graph diameter of $2$. Moreover, the evolution of a system of Ising anyons restricted to topological operations ({\it i.e.} splitting, fusion and braiding) can be efficiently simulated classically. This can be achieved by using the spinor representation of the braid group acting on $2n$ $\sigma$ anyons, described by the $SO(2n)$ group ~\cite{Bravyi_06,Nayak_96,Brell_14} and by using the algorithm presented in ~\cite{Aaronson_04}.

We have performed numerical simulations of a system of Ising anyons on a $L \times L$ torus. The various values of $L$ considered where $L = Q^{n}$, for $Q = 3$ and $n \in \{ 1,2,3,4\}$. An optimal renormalization time $U=b^{2}$ was empirically found to be $U = 49$ for the parameters presented here. The measurement error rate $q$ is set equal to the charge creation error rate $p$. The fractions $f_c$ and $f_n$ were set to $0.8$ and $0.2$ respectively, and were also empirically found to work best. Using lower measurement error rates $q$ gives very similar results.

\begin{figure}[h!]
	\begin{center}
		\includegraphics[width=.85 \textwidth]{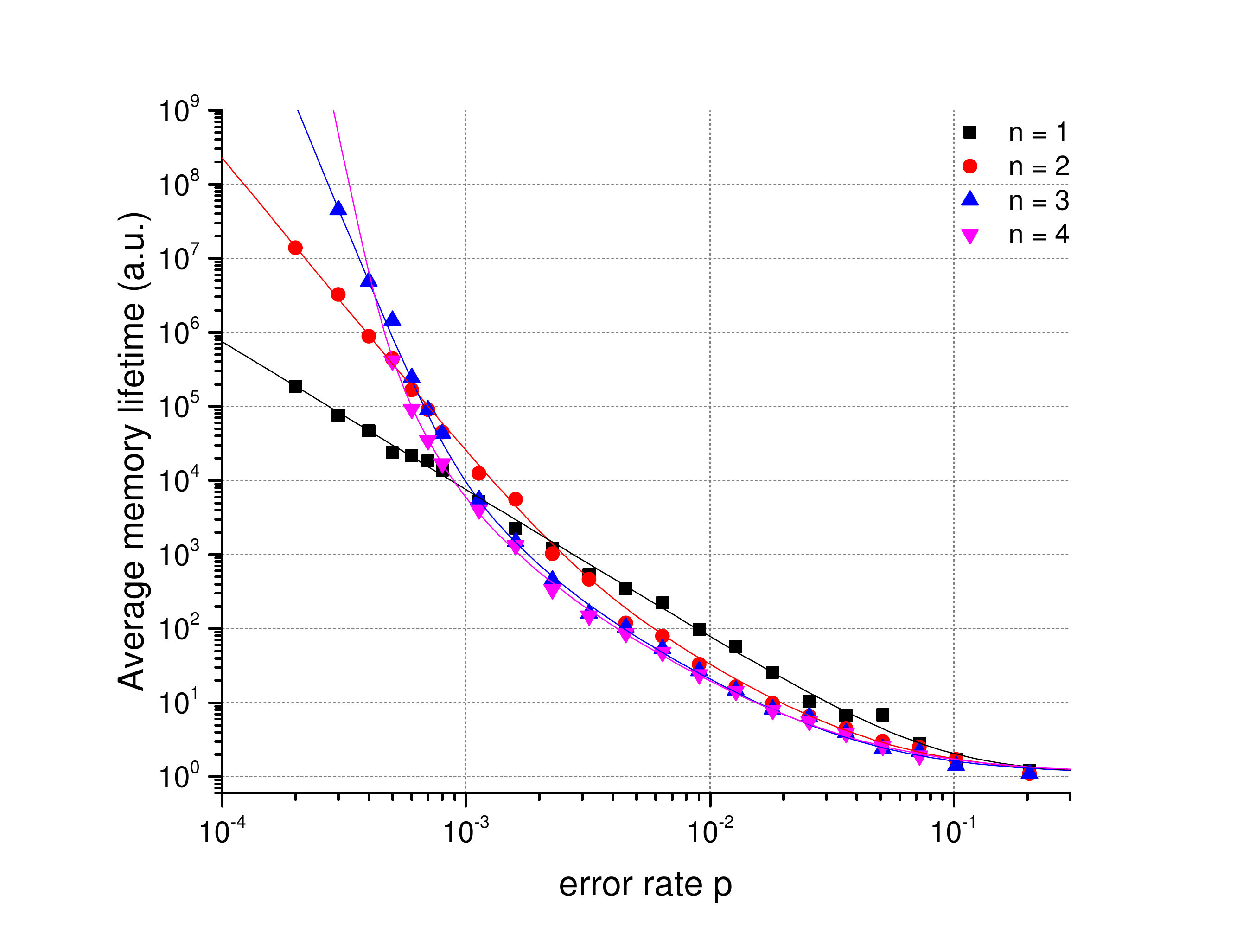}
		\caption{Results of numerical simulations for a system of Ising anyons on a torus of linear size $3^{n} \times 3^{n}$ for values of $n = 1, 2, 3$ and $4$, using a renormalization time $U = b^{2} = 49$. The thresholds $f_{c}$ and $f_{n}$ were set to $0.8$ and $0.2$, respectively. Each point represents the average lifetime of 60 instances for the given error rate. The measurement error rate $q$ is set equal to the charge error rate $p$. The solid lines are polynomial fittings of the argument $\left( \frac{1}{p \left( 1 - p \right)} \right)$ of power $2^{n}$.}
		\label{fig_numerical_simul}
	\end{center}
\end{figure}

At every time step, after applying both the error model and the local transition rules (on the renormalized levels if at suitable time), the decoding algorithm introduced in~\cite{Brell_14} was used to verify whether the information could in principle be retrieved. This algorithm acts as a two-step perfect matching algorithm (performed using the blossomV software~\cite{Kolmogorov_09}), first matching pairs of $\sigma$ anyons which may fuse either to the vacuum or give an $\epsilon$ fermion, and then matching $\epsilon$ fermions in pairs, thus always returning to the vacuum. Note that during this virtual verification procedure, measurements were performed perfectly. The number of time steps before this verification procedure reports that the information was corrupted is called the lifetime of this particular instance. Note that because it uses a heuristic, suboptimal verification procedure, this reported lifetime should be interpreted as a lower bound to the real lifetime. Average lifetime as a function of the error rate is shown in Figure~\ref{fig_numerical_simul}. Polynomial fittings of degree $2^{n}$ in the variable$ \left( \frac{1}{p \left( 1 - p \right)} \right)$ are shown as continuous lines in the plot as well. The asymptotic behaviour for low error rates of $\sim p^{-2^{n}}$ is in agreement with the bounds found in Sec.~\ref{bound_error_rate}, even though here $Q = 3$ only. Note that Figure~\ref{fig_numerical_simul} does not allow to determine the asymptotic region of low error rate for the case of $n=4$, but the simulation results are in agreement with an average lifetime $\sim p^{-2^{4}}$. Such a low value of $Q$ was used because of the exponential growth of the size of the lattice with the level-$n$, and larger values of $Q$ did not allow for simulation in a reasonable time.

While the  simulations were limited to values of $n$ too small to observe a clear threshold behaviour, they are consistent with a threshold in the $10^{-4}-10^{-3}$ range. In particular, the data and polynomial fits show that, for values of $p$ below $p_c \approx 3\times 10^{-4}$, the average memory lifetime increases with the lattice size. 

\section{Discussion}
\label{sec_discussion}

We have shown, both analytically and numerically, that an adaptation of a cellular automaton developed by G\'acs \cite{Gacs_86} and Harrington \cite{Harrington_04} can implement fault-tolerant quantum error correction for a system of non-cyclic modular anyons.  The $\sim 10^{-23}$ threshold lower-bound we found analytically is strikingly low. In contrast, Harrington \cite{Harrington_04} reports an analytical threshold lower bound $\sim 10^{-12}$ for Abelian anyons. This difference is explained by the possibility that a low-level error modifies the trajectory of a level-$k$ anyon and make it escapes the $2\times 2$ block of level-$k$ colonies containing its trajectory; a possibility not accounted for in Harrington's analysis. This possibility forces us to adapt a larger value of $Q=78$ compared to $Q=16$ used by Harrington. 

Despite this prohibitively low lower bound, numerical simulations suggest that a thermal anyon density $e^{-\frac{\Delta}{k_{b} T} } \sim 10^{-4} - 10^{-3}$, which corresponds to a temperature $T$ about an order of magnitude below the spectral gap $\Delta$, in conjunction with a relatively low measurement error rate (in the range $0.1 - 0.01 \% $) are sufficient. This is the first theoretical demonstration of fault-tolerant quantum error correction for non-abelian anyons which can tolerate faulty measurements; it is likely that other schemes will perform better.

There are indeed many avenues to improve upon our scheme. Most noteworthy, our algorithm does not take into account any specificity of the anyonic models, such as the fusion rules. Its sole objective is to fuse non-trivial anyons together, irrespective of the possibility that the fusion results in the vacuum. Moreover, the algorithm does not use the partial information it generates about the anyon world-lines. For instance, when displacing a level-$k$ excitation, the algorithm could make use of lower-level information to try to avoid linking the displaced excitation with lower-level errors. An algorithm exploiting this extra information could lead to significant quantitative improvements.

The most important simplification we made in our analysis is that the application of the renormalized transition rules can be performed in a single time step. While it is possible to displace an abelian anyon over an arbitrary large distance in constant time (by a parallel repetition of fissions and fusions teleporting the anyon), non-abelian anyons can only propagate at a finite velocity. This finite velocity requires modifying the transition rules to avoid conflicts between the displacement of a high-level charge and the application of lower-level transition rules. For instance, we could simply ``turn off'' the lower-level transition rules during the passage of a high-level excitation. These vacant transition rules can be seen as additional errors, so they would essentially have the effect of slightly increasing the error rate of lower levels. Thus, we believe that the finite velocity of non-abelian anyons should only quantitatively change our analysis and simulations, but leave our qualitative conclusions unchanged. 

The idea of encoding information on a surface of genus 1 or higher seems technically challenging. A more natural way to encode information using non-abelian anyons is to use the fusion space of the non-abelian species or to add punctures on a planar surface \cite{Beverland_14}. This last procedure is commonly used in the context of surface codes \cite{Freedman_2001}. While our algorithm does not directly apply to such settings, we believe that it should apply with only minor modifications. In particular, provided that these topological defects are kept at a distance $\geq Q^n$ from each other, there should exist a colony patchwork which avoid them in such a way that they do not interfere with the transition rules of level $n$ or less. A similar technique, perhaps using a time-dependent patchwork, could be applied in a setting where anyons are braided and fused to realize fault-tolerant anyonic quantum computation. 

The analysis we presented fails in the case of cyclic anyonic models, such as Fibonacci anyons. For these models, there is no guarantee that the transition rules can be applied successfully. This leads to the possibility that an actual error fails to  get corrected before the appearance of other actual errors of the same or larger level.  Since the probability that a particle and its anti-particle fuse to a non-trivial particle $\ell$ times in a row drops exponentially with $\ell$, we could imagine imposing a hard cutoff $\ell_{\rm max}$ on the number of fusion attempts. Our analysis would then go through by replacing $D=diam(G_{\mathcal A})$ by $\ell_{\rm max}$, leaving an error floor exponentially low in $\ell_{\rm max}$. The problem with this approach is that to achieve an exponential error suppression with the lattice size $L$, the cutoff $\ell_{\rm max}$ would need to scale with $L$, resulting in a ``drifting threshold''.  Hence, constructions that do not use renormalization ideas such as the one presented in \cite{Herold_14, Herold_15} may be necessary to study fault-tolerance for cyclic anyons.

\section{Acknowledgements}

We thank Guillaume Duclos-Cianci for valuable discussions as well as Steve Allen for technical assistance. GD was partially supported by the Fonds de recherche du Qu\'ebec - Nature et technologies. This work was partially funded by the Natural Sciences and Engineering Research Council of Canada and the Canadian Institute for Advanced Research. Computations were made on the supercomputer Mammouth from Universit\'e de Sherbrooke, managed by Calcul Qu\'ebec and Compute Canada. The operation of this supercomputer is funded by the Canada Foundation for innovation (CFI), the minist\`ere de l'\'Economie, de la science et de l'innovation du Qu\'ebec (MESI) and the Fonds de recherche du Qu\'ebec - Nature et technologies (FRQ-NT).

\bibliography{proof_sketch}
\bibliographystyle{unsrtnat}

\appendix

\section{Explicit description of the transition rules} \label{AppendixA}

Let $(\alpha, \beta)$ be a two-dimensional vector in  $\{ (1,0), (-1,0), (0,1), (0-1) \}$ and $l_q$ be the label of topological charge $q$ (with $l_1 = 0$ by convention). The operator $M_{\mathbf{r}}^{\left( \alpha, \beta \right) } \left( l_q \right)$ is operationally defined as follows: if there is a particle of topological charge $q$ at site $\mathbf{r}$, then displace that particle from site $\mathbf{r}$ to the site $\mathbf{r} + (\alpha, \beta)$ via the edge connecting the two sites; if no such particle is present, then first create a pair of particles of charges $q$ and $\bar{q}$ from the vacuum at site $\mathbf{r}$, and the particle with charge $q$ is then displaced to the site $\mathbf{r} + (\alpha, \beta)$ in the same way.

The level-$k$ operator $M_{\bm{\rho}}^{ \left( \alpha, \beta \right), k} (l_q)$  is defined by \newline $M_{\bm{\rho}}^{ \left( \alpha, \beta \right), k} (l_q) = \prod_{i = 0}^{Q^{k}-1} P_{\mathfrak{c} (\bm \rho) + i \left( \alpha, \beta \right) }^{q_i} M_{\mathfrak{c} (\bm \rho) + i \left( \alpha, \beta \right) }^{\left( \alpha, \beta \right) } \left( l_q \right)$, with $\mathfrak{c} (\bm \rho)$ the physical site at the centre of the colony $C^{k}_{\bm \rho}$. It is understood that the operators $M_{\mathfrak{c} (\bm \rho) + i \left( \alpha, \beta \right) }^{\left( \alpha, \beta \right) } \left( l_q \right)$ are applied sequentially, each one followed by the application of $P_{\mathfrak{c} (\bm \rho) + i \left( \alpha, \beta \right) }^{q_i}$, the projector of the total charge $q_i$ at site $\mathfrak{c} (\bm \rho) + i \left( \alpha, \beta \right)$. The measured charges $\{ q_i \}$ are determined probabilistically, using equation (\ref{eqn_meas_prob}).

The operator $M_{\bm \rho }^{ \left( \alpha, \beta \right), k}( q) $ can be understood as moving an anyon of topological charge $q$ from the physical site corresponding to centre of the level-$k$ colony lying at renormalized site $\bm \rho$ all the way to the physical site corresponding to the centre of $\bm \rho + (\alpha, \beta)$. If the charge being displaced encounters a non trivial anyon on its path, the total charge at that site is measured. If the resulting charge is $q$, then it simply continues on its path. If, however, the resulting charge is different than $q$, then a pair of charges $q$ and $\bar{q}$ are created from the vacuum, the anyon with charge $\bar{q}$ is put in the site while the charge $q$ keeps being displaced.

The level-$k$ transition rules used by our algorithm are explicitly stated below. Since all syndromes are measured at the same time $t = (\tau+1) U^k- \frac{1}{2}$, with $\tau \in \mathbb{N}$, it is omitted in the syndrome notation. The notation $\bm \rho = ( \mathpzc{x}, \mathpzc{y} )$ is used below.

\begin{itemize}

\item { (West border) IF $\mathpzc{x}$ modulo $Q$ = 0 THEN \newline
if $s_{k,c} (\bm \rho)$ = 0, do nothing; \newline
else if $s_{k,n} (\bm \rho + (-1,1)) \neq 0$ , apply $M_{\bm \rho}^{\left( -1,0 \right),k}  \left( s_{k,c} (\bm \rho) \right)$; \newline
else if $s_{k,n} (\bm \rho + (-1,0)) \neq 0$, apply $M_{\bm \rho}^{\left( -1,0 \right),k} \left( s_{k,c} (\bm \rho) \right)$; \newline
else if $s_{k,n} (\bm \rho + (-1,-1)) \neq 0$, apply $M_{\bm \rho}^{\left( -1,0 \right),k} \left( s_{k,c} (\bm \rho) \right)$; \newline
else continue below.}

\item { (South border) IF $\mathpzc{y}$ modulo $Q$ = 0 THEN \newline
if $s_{k,c} (\bm \rho)$ = 0, do nothing; \newline
else if $s_{k,n} (\bm \rho + (-1,-1)) \neq 0$, apply $M_{\bm \rho}^{\left( 0,-1 \right),k} \left( s_{k,c} (\bm \rho) \right)$; \newline
else if $s_{k,n} (\bm \rho + (0, -1)) \neq 0$, apply $M_{\bm \rho}^{\left( 0,-1 \right),k} \left( s_{k,c} (\bm \rho) \right)$; \newline
else if $s_{k,n} (\bm \rho + (+1,-1)) \neq 0$, apply $M_{\bm \rho}^{\left( 0,-1 \right),k} \left( s_{k,c} (\bm \rho) \right)$; \newline
else continue below.}

\item { (South West quadrant) IF $\mathpzc{x}$ modulo $Q$ $< \lfloor \frac{Q}{2} \rfloor$ AND $\mathpzc{y}$ modulo $Q$ $< \lfloor \frac{Q}{2} \rfloor$ THEN \newline
if $s_{\bm \rho}^{k,c}$ = 0, do nothing; \newline
else if $s_{k,n} (\bm \rho + (0,-1)) \neq 0$, do nothing; \newline
else if $s_{k,n} (\bm \rho + (-1,0)) \neq 0$, do nothing; \newline
else if $s_{k,n} (\bm \rho + (-1,-1)) \neq 0$, do nothing; \newline
else if $s_{k,n} (\bm \rho + (0,1)) \neq 0$, apply $M_{\bm \rho}^{\left( 0,+1 \right),k} \left( s_{k,c} ({\bm \rho}) \right)$; \newline
else if $s_{k,n} (\bm \rho + (-1,1)) \neq 0$, apply $M_{\bm \rho}^{\left( 0,+1 \right),k} \left( s_{k,c} ({\bm \rho}) \right)$; \newline
else if $s_{k,n} (\bm \rho + (1,0)) \neq 0$, apply $M_{\bm \rho}^{\left( +1,0 \right),k} \left( s_{k,c} ({\bm \rho}) \right)$; \newline
else if $s_{k,n} (\bm \rho + (1,-1)) \neq 0$, apply $M_{\bm \rho}^{\left( +1,0 \right),k} \left( s_{k,c} ({\bm \rho}) \right)$; \newline
else, apply $M_{\bm \rho}^{\left( 0,+1 \right),k} \left( s_{k,c} ({\bm \rho}) \right) $.}

\item { (West corridor) IF $\mathpzc{x}$ modulo $Q$ $< \lfloor \frac{Q}{2} \rfloor$ AND $\mathpzc{y}$ modulo $Q$  $= \lfloor \frac{Q}{2} \rfloor$ THEN \newline
IF $s_{k,c} ({\bm \rho})$ = 0, do nothing; \newline
else if $s_{k,n} (\bm \rho + (0,-1)) \neq 0$, do nothing; \newline
else if $s_{k,n} (\bm \rho + (-1,0)) \neq 0$, do nothing; \newline
else if $s_{k,n} (\bm \rho + (0,1)) \neq 0$, do nothing; \newline
%else if $s_{k,n} (\bm \rho + (1,0)) \neq 0$, apply $M_{\mathpzc{x},y}^{\left( +1,0 \right),k} \left( s_{k,c} ({\bm \rho}) \right)$; \newline
else if $s_{k,n} (\bm \rho + (-1,-1)) \neq 0$, do nothing; \newline
else if $s_{k,n} (\bm \rho + (-1,1))  \neq 0$, do nothing; \newline
else, apply $M_{\bm \rho}^{\left( +1,0 \right),k} \left( s_{k,c} ({\bm \rho}) \right)$.}

\item { (North West quadrant) IF $\mathpzc{x}$ modulo $Q$ $< \lfloor \frac{Q}{2} \rfloor$ AND $\mathpzc{y}$ modulo $Q$ $> \lfloor \frac{Q}{2} \rfloor$ THEN \newline
IF $s_{\bm \rho}^{k,c}$ = 0, do nothing; \newline
else if $s_{k,n} (\bm \rho + (-1,0)) \neq 0$, do nothing; \newline
else if $s_{k,n} (\bm \rho + (0,1)) \neq 0$, do nothing; \newline
else if $s_{k,n} (\bm \rho + (-1,1)) \neq 0$, do nothing; \newline
else if $s_{k,n} (\bm \rho + (1,0)) \neq 0$, apply $M_{\bm \rho}^{\left( +1,0 \right),k} \left( s_{k,c} ({\bm \rho}) \right)$; \newline
else if $s_{k,n} (\bm \rho + (1,1)) \neq 0$, apply $M_{\mathpzc{x},y}^{\left( +1,0 \right),k} \left( s_{k,c} ({\bm \rho}) \right)$; \newline
else if $s_{k,n} (\bm \rho + (0,-1)) \neq 0$, apply $M_{\bm \rho}^{\left( 0,-1 \right),k} \left( s_{k,c} ({\bm \rho}) \right)$; \newline
else if $s_{k,n} (\bm \rho + (-1,-1)) \neq 0$, apply $M_{\mathpzc{x},y}^{\left( 0,-1 \right),k} \left( s_{k,c} ({\bm \rho}) \right)$; \newline
else, apply $M_{\bm \rho}^{\left( +1,0 \right),k} \left( s_{k,c} (\bm \rho) \right)$.}

\item { (North corridor) IF $\mathpzc{x}$ modulo $Q$ $= \lfloor \frac{Q}{2} \rfloor$ AND $\mathpzc{y}$ modulo $Q$ $> \lfloor \frac{Q}{2} \rfloor$ THEN \newline
IF $s_{k,c} ({\bm \rho})$ = 0, do nothing; \newline
else if $s_{k,n} (\bm \rho + (-1,0)) \neq 0$, do nothing; \newline
else if $s_{k,n} (\bm \rho + (0,1)) \neq 0$, do nothing; \newline
else if $s_{k,n} (\bm \rho + (1,0)) \neq 0$, do nothing; \newline
%else if $s_{k,n} (\bm \rho + (0,-1)) \neq 0$, apply $M_{\bm \rho}^{\left( 0,-1 \right),k} \left( s_{k,c} ({\bm \rho}) \right)$; \newline
else if $s_{k,n} (\bm \rho + (-1,1)) \neq 0$, do nothing; \newline
else if $s_{k,n} (\bm \rho + (1,1)) \neq 0$, do nothing; \newline
else, apply $M_{\bm \rho}^{\left( 0,-1 \right),k} \left( s_{k,c} ({\bm \rho}) \right)$.}

\item { (North East quadrant) IF $\mathpzc{x}$ modulo $Q$ $> \lfloor \frac{Q}{2} \rfloor$ AND $\mathpzc{y}$ modulo $Q$ $> \lfloor \frac{Q}{2} \rfloor$ THEN \newline
IF $s_{k,c} ({\bm \rho})$ = 0, do nothing; \newline
else if $s_{k,n} (\bm \rho + (0,1)) \neq 0$, do nothing; \newline
else if $s_{k,n} (\bm \rho + (1,0)) \neq 0$, do nothing; \newline
else if $s_{k,n} (\bm \rho + (1,1)) \neq 0$, do nothing; \newline
else if $s_{k,n} (\bm \rho + (0,-1)) \neq 0$, apply $M_{\bm \rho}^{\left( 0,-1 \right),k} \left( s_{k,c} ({\bm \rho}) \right)$; \newline
else if $s_{k,n} (\bm \rho + (1,-1)) \neq 0$, apply $M_{\bm \rho}^{\left( 0,-1 \right),k} \left( s_{k,c} ({\bm \rho}) \right)$; \newline
else if $s_{k,n} (\bm \rho + (-1,0)) \neq 0$, apply $M_{\bm \rho}^{\left( -1,0 \right),k} \left( s_{k,c} ({\bm \rho}) \right)$; \newline
else if $s_{k,n} (\bm \rho + (-1,1)) \neq 0$, apply $M_{\bm \rho}^{\left( -1,0 \right),k} \left( s_{k,c} ({\bm \rho}) \right)$; \newline
else, apply $M_{\bm \rho}^{\left( 0,-1 \right),k} \left( s_{k,c} ({\bm \rho}) \right)$.}

\item { (East corridor) IF $\mathpzc{x}$ modulo $Q$ $> \lfloor \frac{Q}{2} \rfloor$ AND $\mathpzc{y}$ modulo $Q$ $= \lfloor \frac{Q}{2} \rfloor$ THEN \newline
IF $s_{k,c} ({\bm \rho})$ = 0, do nothing; \newline
else if $s_{k,n} (\bm \rho + (0,1)) \neq 0$, do nothing; \newline
else if $s_{k,n} (\bm \rho + (1,0)) \neq 0$, do nothing; \newline
else if $s_{k,n} (\bm \rho + (0,-1)) \neq 0$, do nothing; \newline
%else if $s_{k,n} (\bm \rho + (-1,0)) \neq 0$, apply $M_{\bm \rho}^{\left( -1,0 \right),k} \left( s_{k,c} ({\bm \rho}) \right)$; \newline
else if $s_{k,n} (\bm \rho + (1,1)) \neq 0$, do nothing; \newline
else if $s_{k,n} (\bm \rho + (1,-1)) \neq 0$, do nothing; \newline
else, apply $M_{\bm \rho}^{\left( -1,0 \right),k} \left( s_{k,c} ({\bm \rho}) \right)$.}

\item { (South East quadrant) IF $\mathpzc{x}$ modulo $Q$ $> \lfloor \frac{Q}{2} \rfloor$ AND $\mathpzc{y}$ modulo $Q$ $< \lfloor \frac{Q}{2} \rfloor$ THEN \newline
IF $s_{k,c} ({\bm \rho})$ = 0, do nothing; \newline
else if $s_{k,n} (\bm \rho + (1,0)) \neq 0$, do nothing; \newline
else if $s_{k,n} (\bm \rho + (0,-1)) \neq 0$, do nothing; \newline
else if $s_{k,n} (\bm \rho + (1,-1)) \neq 0$, do nothing; \newline
else if $s_{k,n} (\bm \rho + (-1,0)) \neq 0$, apply $M_{\bm \rho}^{\left( -1,0 \right),k} \left( s_{k,c} ({\bm \rho}) \right)$; \newline
else if $s_{k,n} (\bm \rho + (-1,-1)) \neq 0$, apply $M_{\bm \rho}^{\left( -1,0 \right),k} \left( s_{k,c} ({\bm \rho}) \right)$; \newline
else if $s_{k,n} (\bm \rho + (0,1)) \neq 0$, apply $M_{\bm \rho}^{\left( 0,+1 \right),k} \left( s_{k,c} ({\bm \rho}) \right)$; \newline
else if $s_{k,n} (\bm \rho + (1,1)) \neq 0$, apply $M_{\bm \rho}^{\left( 0,+1 \right),k} \left( s_{k,c} ({\bm \rho}) \right)$; \newline
else, apply $M_{\bm \rho}^{\left( -1,0 \right),k} \left( s_{k,c} ({\bm \rho}) \right)$.}

\item { (South corridor) IF $\mathpzc{x}$ modulo $Q$ $= \lfloor \frac{Q}{2} \rfloor$ AND $\mathpzc{y}$ modulo $Q$ $< \lfloor \frac{Q}{2} \rfloor$ THEN \newline
IF $s_{k,c} ({\bm \rho})$ = 0, do nothing; \newline
else if $s_{k,n} (\bm \rho + (1,0)) \neq 0$, do nothing; \newline
else if $s_{k,n} (\bm \rho + (0,-1)) \neq 0$, do nothing; \newline
else if $s_{k,n} (\bm \rho + (-1,0)) \neq 0$, do nothing; \newline
%else if $s_{k,n} (\bm \rho + (0,1)) \neq 0$, apply $M_{\bm \rho}^{\left( 0,+1 \right),k} \left( s_{k,c} ({\bm \rho}) \right)$; \newline
else if $s_{k,n} (\bm \rho + (1,-1)) \neq 0$, do nothing; \newline
else if $s_{k,n} (\bm \rho + (-1,-1)) \neq 0$, do nothing; \newline
else, apply $M_{\bm \rho}^{\left( 0,+1 \right),k} \left( s_{k,c} ({\bm \rho}) \right)$.}

\item { (Colony centre) IF $\mathpzc{x}$ modulo $Q$ $= \lfloor \frac{Q}{2} \rfloor$ AND $\mathpzc{y}$ modulo $Q$ $= \lfloor \frac{Q}{2} \rfloor$ THEN;
	update colony syndromes $s_{k+1,c} ({\bm \rho})$ and $s_{k+1,n} ({\bm \rho})$.}
	
\end{itemize}

%\begin{figure}[h]
%\centering
%\includegraphics[width=0.80 \textwidth]{colony_classification}
%\caption{The structure of a single colony of $9 \times 9$ sites. The four quadrants are denoted by the cardinal points SW, NW, NE and SW; while the four central corridors are identified with the cardinal points W, N, E and S. The West and South borders are also identified, and the colony center is depicted in red.}
%\label{fig_colony_description}
%\end{figure}

\section{Level-$0$ actual errors getting corrected within 2 time steps} \label{AppendixB}

In this section, it is explicitly shown that every actual level-$0$ error gets corrected within 2 time steps. Figures~\ref{fig_ferm_error} through~\ref{fig_meas_border} list all possible cases (up to rotations) of actual level-$0$ errors. In these figures, blue dots represent non-trivial topological charges, without distinction for their specific charges. ``M" represents a measurement step, with the occupied site corresponding to the reported syndromes, and not necessarily actual charges, while ``U" denotes the application of the local rules, based on the previous measurements. The  arrows represents the transition rules described above. To simplify notation, we do not distinguish between different topological charges since the transition rules are oblivious to them. In the case where a fusion happens after the first correction step, the corresponding blue dot represent any possible fusion result giving a non-trivial charge. In the case where the fusion result may yield a trivial charge as well, the rest of the process can simply be replaced by the appropriate case. The sites are assumed to be in the south-west quadrant of a colony on the following figures, but errors get corrected in a similar fashion in every quadrant of the colony, since careful inspection of the transition rules reveal that they are symmetric under rotations of $\frac{\pi}{4}$ around the colony centre.

Note that the analysis of these cases also works at the $k^{\rm th}$ level of renormalization if the algorithm is $k$-local, $k$-faithful and $k$-successful. In this case, the sites correspond to level-$k$ colonies. The $t = 0$ step represents the possible total charge of the anyons of the level-$k$ causally-linked cluster of a level-$k$ actual error $E$ at the end of the last level-$k$ working period into which errors part of $E$ are supported. If the total charge is non-trivial in the corresponding level-$k$ colonies, the $t = 0.5$ step depicts the corresponding valid level-$k$ syndromes. If no non-trivial level-$k$ charge is present in the $t = 0$ step, the $t = 0.5$ step represents the reported syndromes, which are invalid. The step at $t = 1$ represents the state of the system after the successful application of a round of active level-$k$ transition rules in the case where syndromes were valid, or the resulting state of the system after the application of transition rules if the syndromes were not valid. Steps denoted by $t = 1.5$ and $t = 2$ are similarly interpreted, except that the reported syndromes at $t = 1.5$ are always valid.

\begin{figure}[!h]
\centering
\includegraphics[width=0.9 \textwidth]{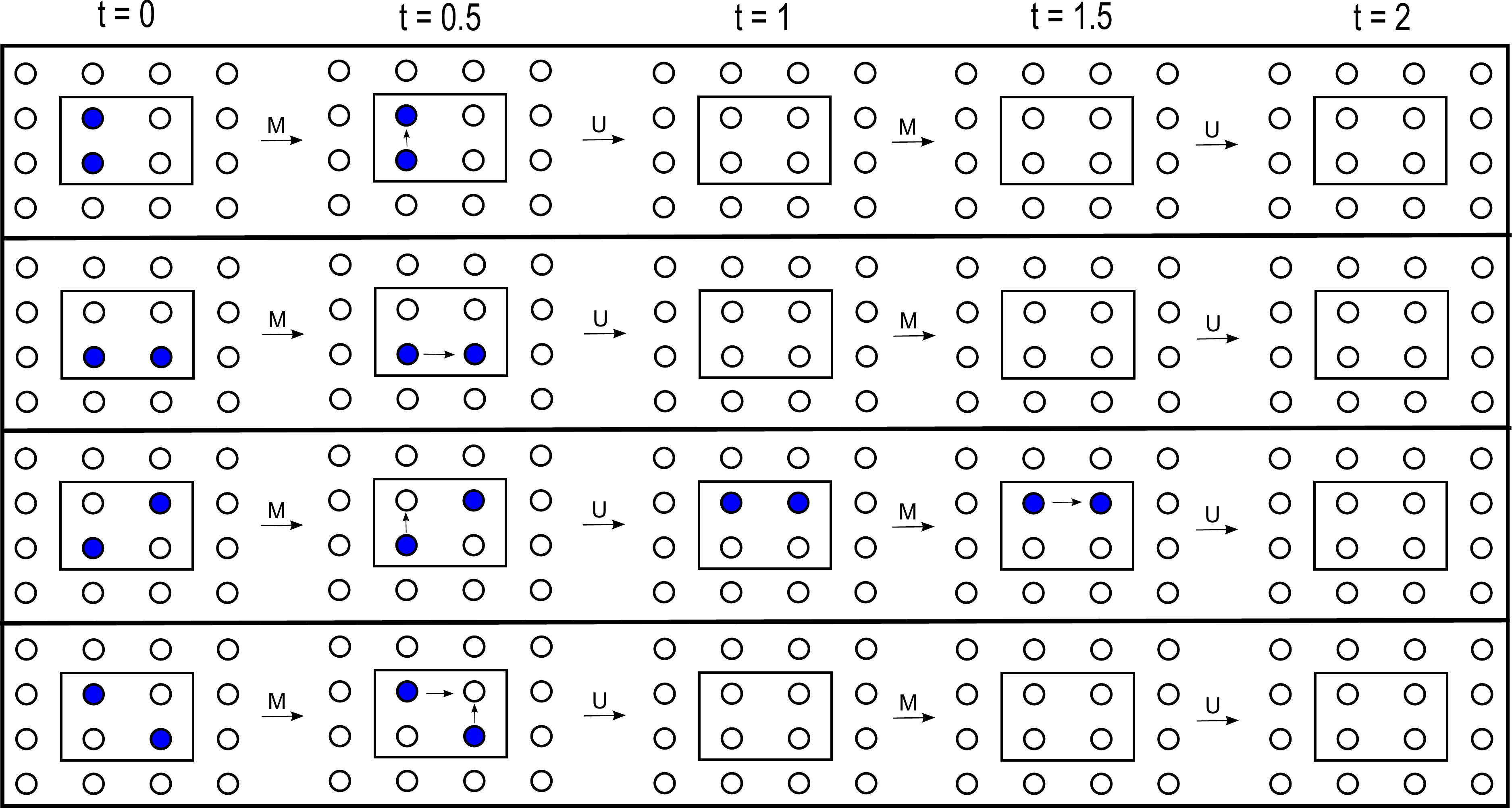}
\caption{Actual level-$0$ errors containing 2  non-trivial topological charges get corrected within 2 time steps. Note that the specific charges do not matter, as the total charge must be the vacuum.}
\label{fig_ferm_error}
\end{figure}

\begin{figure}[!h]
\centering
\includegraphics[width=0.9 \textwidth]{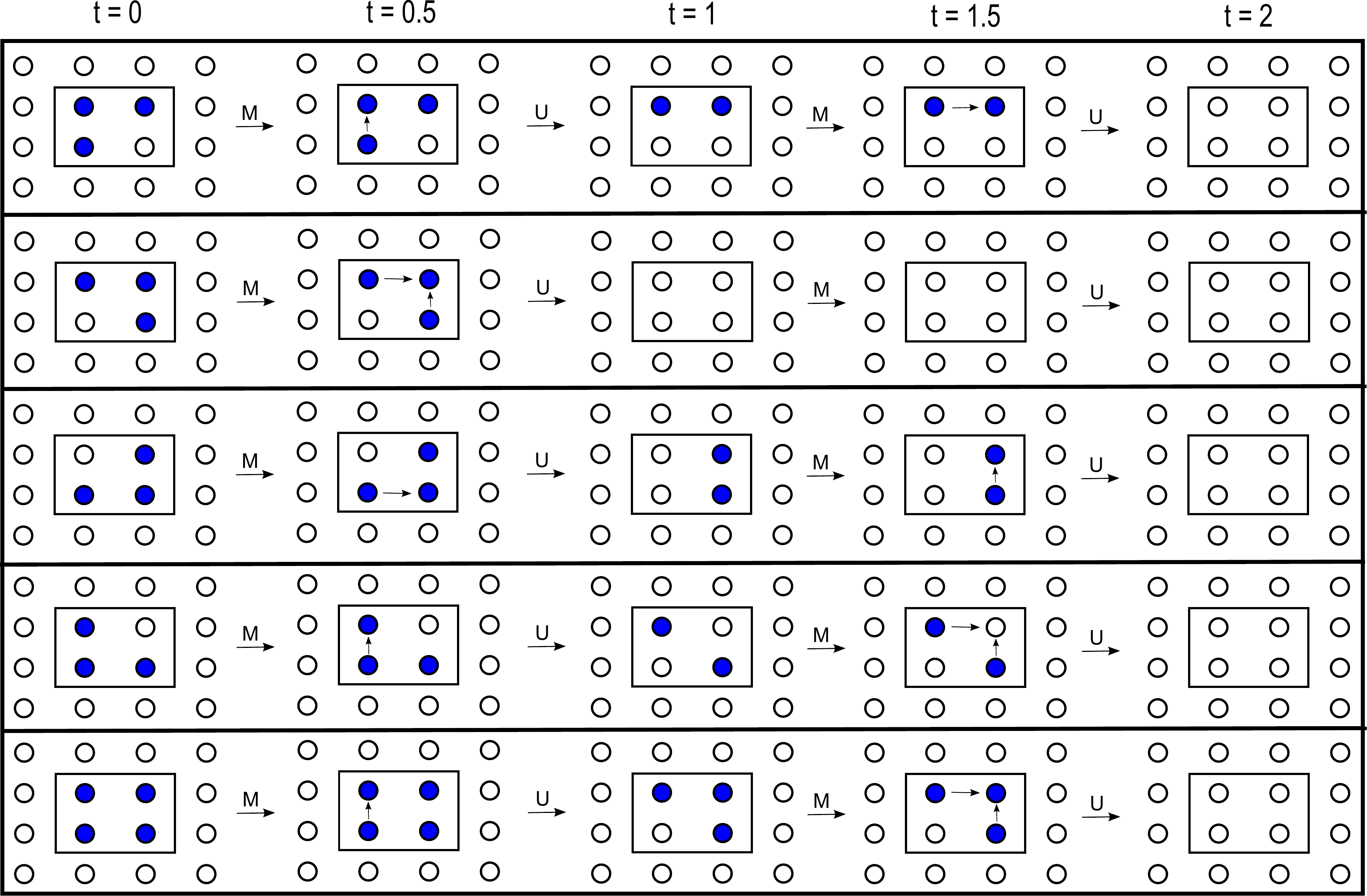}
\caption{Actual level-$0$ errors containing 3 or 4 non-trivial topological charges get corrected within 2 time steps.}
\label{fig_anyon_error1}
\end{figure}

\begin{figure}[!h]
\centering
\includegraphics[width=0.9 \textwidth]{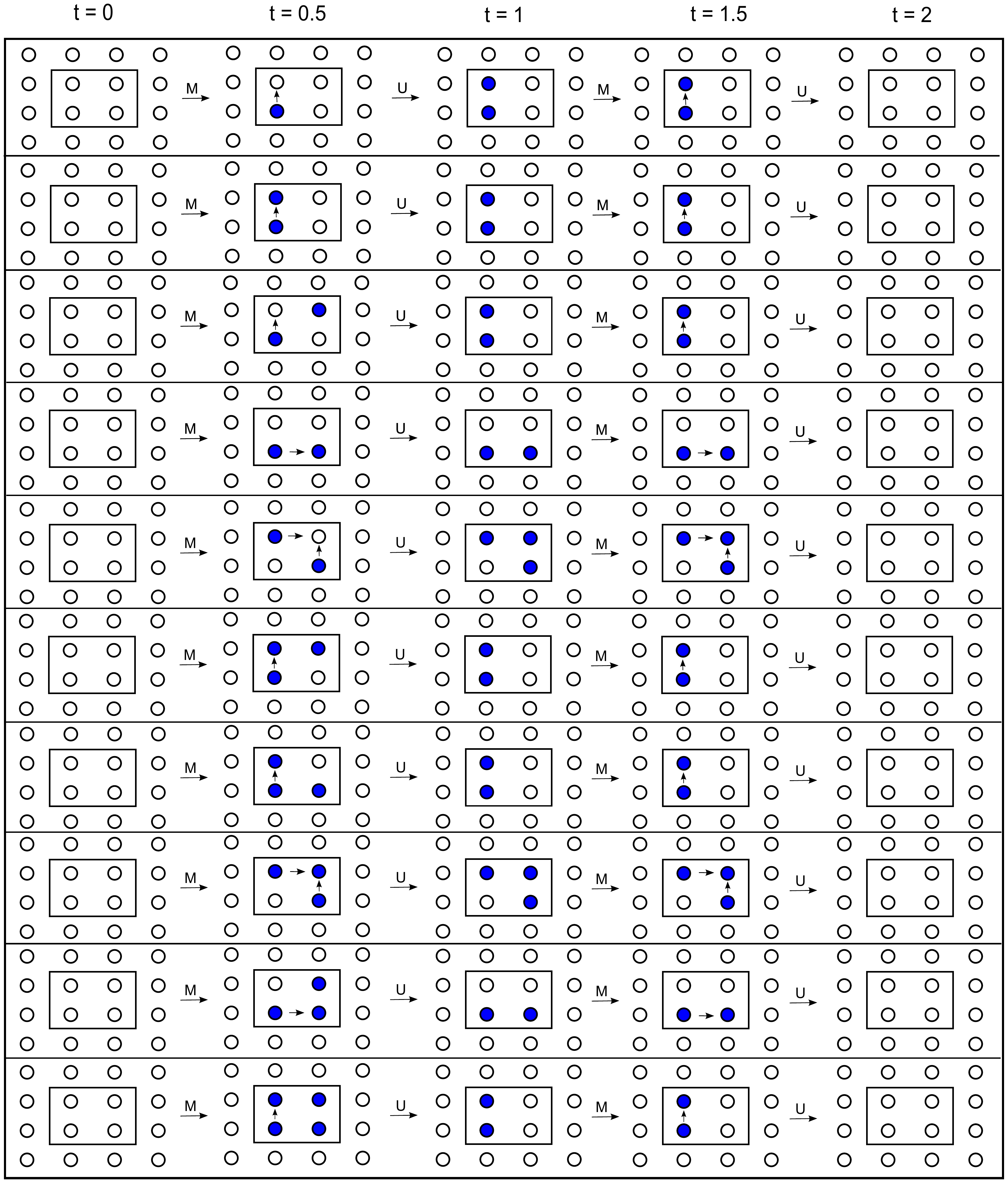}
\caption{Actual level-$0$ measurement errors get corrected within 2 time steps. Again, the precise charges measured do not matter here.}
\label{fig_meas_error}
\end{figure}

\clearpage

The case where a level-$0$ error lies within 2 colonies is handled slightly differently. All possible such errors lying on the West border of a colony are shown below, and they all get corrected within $2$ time steps. By rotational symmetry, the same holds for errors lying on the South border as well.

\begin{figure}[!h]
\centering
\includegraphics[width=.9 \textwidth]{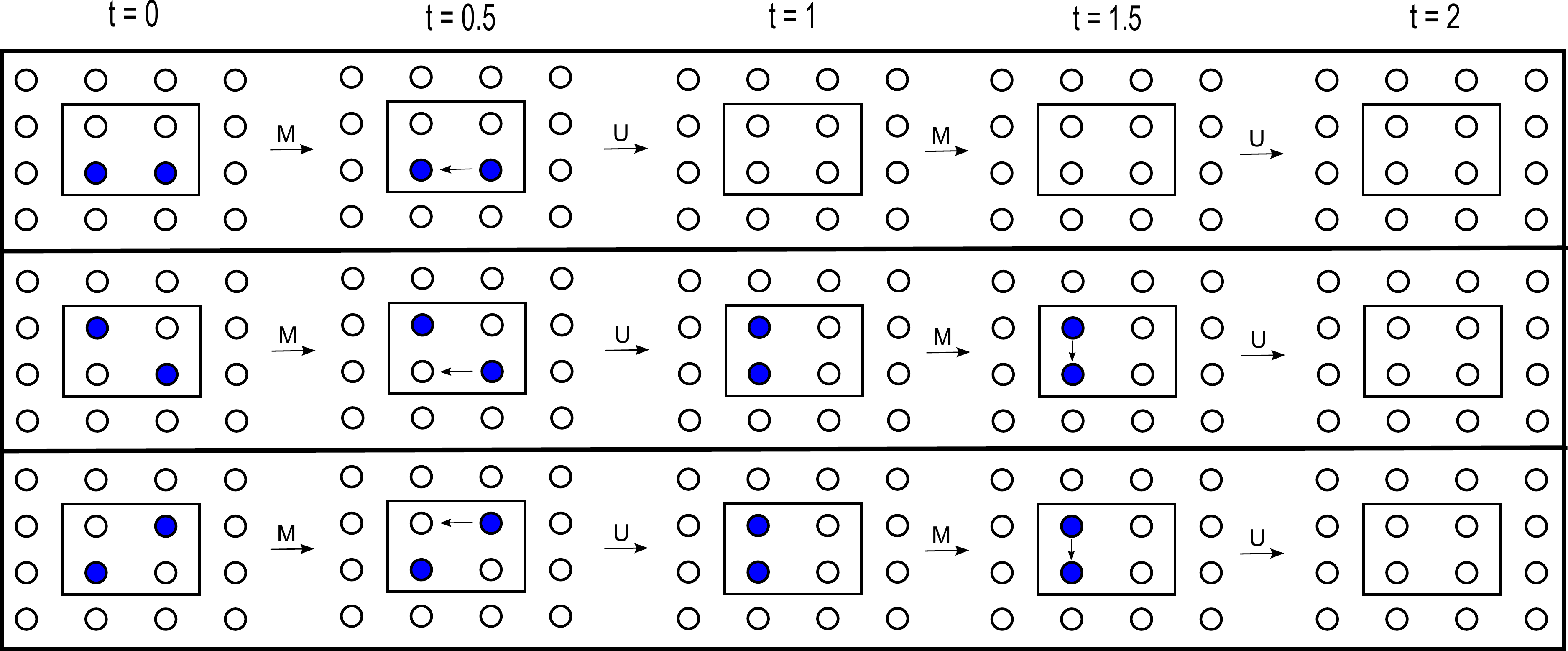}
\caption{Actual level-$0$ errors containing 2 non-trivial topological charges lying on the West border get corrected within 2 time steps.}
\label{fig_charge_border2}
\end{figure}

\begin{figure}[!h]
\centering
\includegraphics[width=.9 \textwidth]{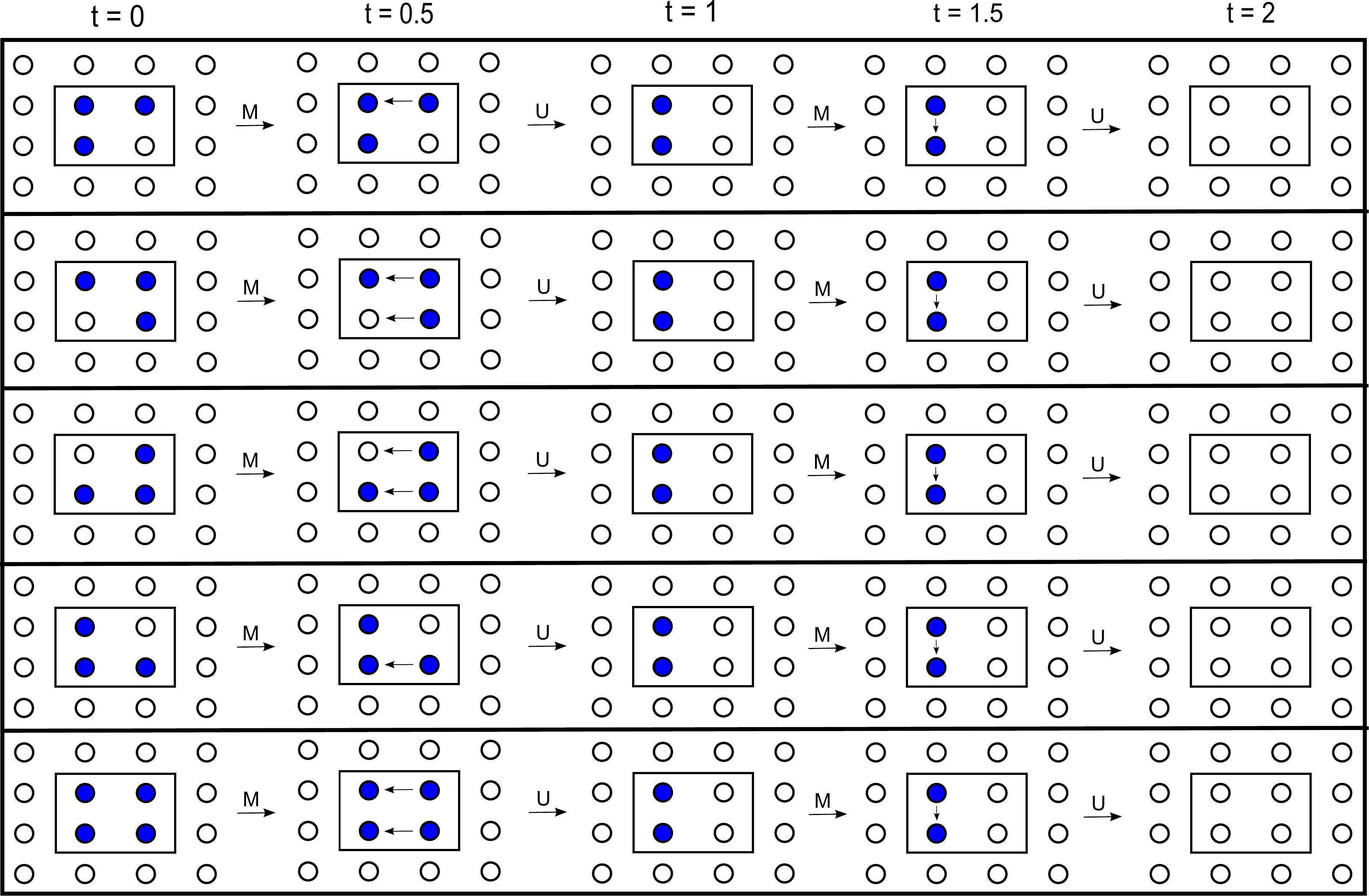}
\caption{Actual level-$0$ errors containing 3 or 4 non-trivial topological charges lying on the West border of a colony get corrected within 2 time steps.}
\label{fig_charge_border3_4}
\end{figure}

\begin{figure}[!h]
\centering
\includegraphics[width=.9 \textwidth]{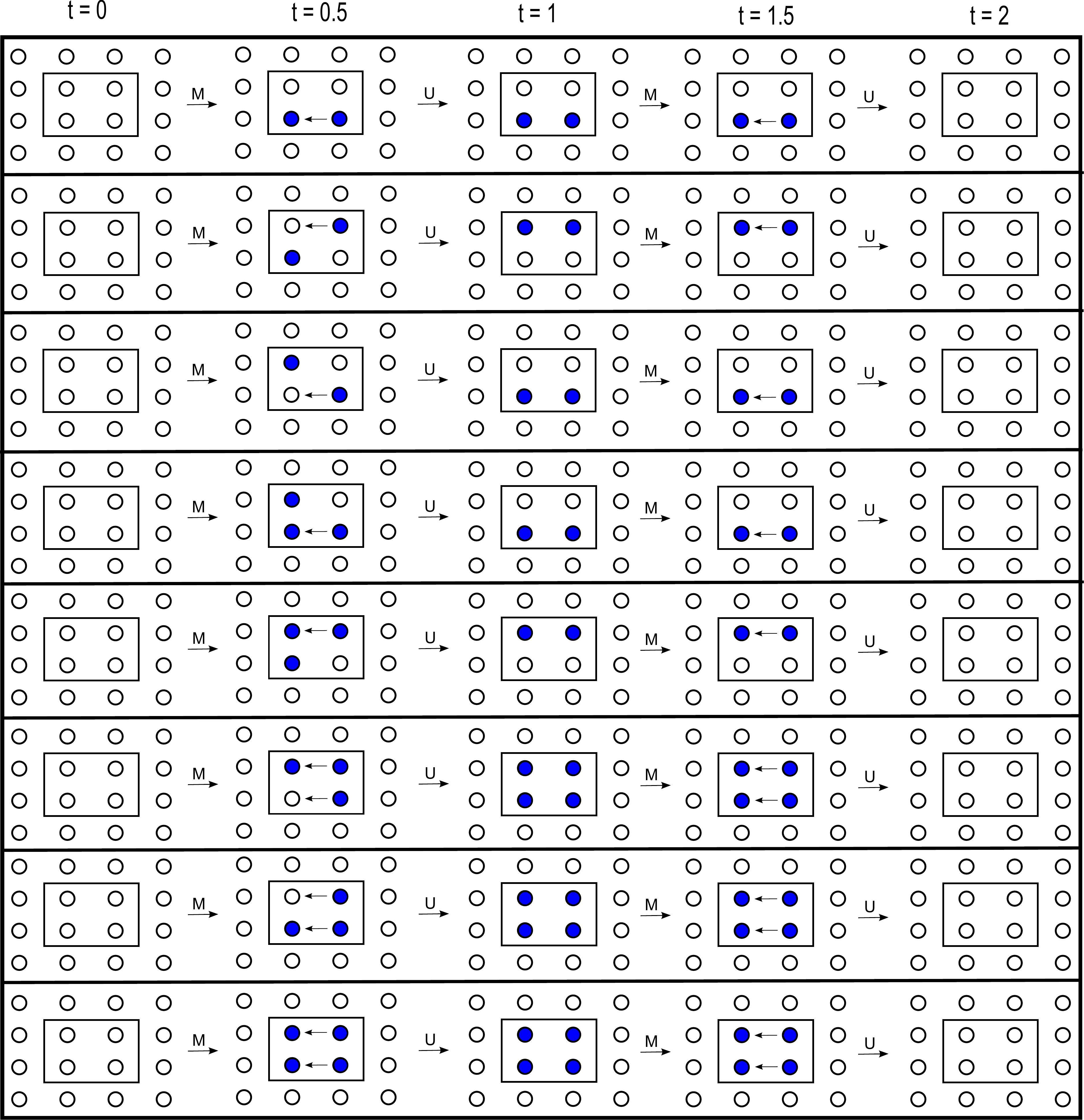}
\caption{Actual level-$0$ measurement errors lying on the West border of a colony get corrected within 2 time steps.}
\label{fig_meas_border}
\end{figure}

\end{document}